\documentclass[a4paper,UKenglish,cleveref, autoref, thm-restate]{lipics-v2021}

\pdfoutput=1 
\hideLIPIcs  

\title{On Higher-Order Probabilistic Verification\\ via the Weighted Relational Model of Linear Logic} 

\author{Ugo Dal Lago}{University of Bologna, Italy, \and INRIA Sophia Antipolis, France }{ugo.dallago@unibo.it}{https://orcid.org/0000-0001-9200-070X}{}
\author{Guido Fiorillo}{Université Claude Bernard Lyon 1, France }{guido.fiorillo@ens-lyon.fr}{}{}
\author{Paolo Pistone}{Université Claude Bernard Lyon 1, France }{paolo.pistone@ens-lyon.fr}{https://orcid.org/0000-0003-4250-9051}{}

\authorrunning{U. Dal Lago, G. Fiorillo, P. Pistone} 

\Copyright{Ugo Dal Lago, Guido Fiorillo, Paolo Pistone} 

\ccsdesc[500]{Theory of computation~Linear logic}
\ccsdesc[500]{Theory of computation~Verification by model checking}

\keywords{Probabilistic $\lambda$-calculus, linear logic, weighted relational model, algebraic power series} 

\category{} 

\relatedversion{} 

\nolinenumbers 

\EventEditors{John Q. Open and Joan R. Access}
\EventNoEds{2}
\EventLongTitle{42nd Conference on Very Important Topics (CVIT 2016)}
\EventShortTitle{CVIT 2016}
\EventAcronym{CVIT}
\EventYear{2016}
\EventDate{December 24--27, 2016}
\EventLocation{Little Whinging, United Kingdom}
\EventLogo{}
\SeriesVolume{42}
\ArticleNo{23}
\usepackage{algorithmic}
\usepackage{graphicx}
\usepackage{textcomp}
\usepackage{xcolor}
\usepackage{stmaryrd}
\usepackage{hyperref}

\usepackage{adjustbox}
\usepackage{proof}

\usepackage{bussproofs}\EnableBpAbbreviations
\usepackage{subcaption}
\usepackage{listings}

\newenvironment{varitemize}
{
	\begin{list}{\labelitemi}
		{\setlength{\itemsep}{0pt}
			\setlength{\topsep}{0pt}
			\setlength{\parsep}{0pt}
			\setlength{\partopsep}{0pt}
			\setlength{\leftmargin}{15pt}
			\setlength{\rightmargin}{0pt}
			\setlength{\itemindent}{0pt}
			\setlength{\labelsep}{5pt}
			\setlength{\labelwidth}{10pt}
	}}
	{
	\end{list} 
}

\newcounter{numberone}

\newenvironment{varenumerate}
{
	\begin{list}{\arabic{numberone}.}
		{
			\usecounter{numberone}
			\setlength{\itemsep}{0pt}
			\setlength{\topsep}{0pt}
			\setlength{\parsep}{0pt}
			\setlength{\partopsep}{0pt}
			\setlength{\leftmargin}{15pt}
			\setlength{\rightmargin}{0pt}
			\setlength{\itemindent}{0pt}
			\setlength{\labelsep}{5pt}
			\setlength{\labelwidth}{15pt}
	}}
	{
	\end{list} 
}

\newcommand{\modd}[1]{\llbracket#1\rrbracket }

\newcommand{\fps}[2]{{#1}\{\!\!\{#2\}\!\!\}}
\newcommand{\fpp}[2]{{#1}\{#2\}}

\newcommand{\model}[1]{\modd{#1}}

\newcommand{\C}[1]{\mathcal{#1}}
\newcommand{\BB}[1]{\mathbb{#1}}

\newcommand{\Etor}{\mathbf{ETR}}

\newcommand{\LY}{\lambda Y}
\newcommand{\PLY}{P\lambda Y}
\newcommand{\Gfin}{\text{PBHORS}^{<\infty}}
\newcommand{\Ginf}{\text{PBHORS}^{\infty}}
\newcommand{\AFF}[1]{#1_{\mathrm{aff}}}
\newcommand{\Aff}{{\mathrm{Aff}}}
\newcommand{\Com}{\mathbf{Com}}

\newcommand{\nonterm}{\C N}
\newcommand{\tri}{\triangleleft}

\newcommand{\sat}[1]{{#1}_{\mathrm{st}}}

\newcommand{\N}{\BB N}
\newcommand{\Q}{\BB Q}
\newcommand{\R}{\BB R}
\newcommand{\Qrel}[1]{#1\mathbf{Rel}}
\newcommand{\Qrelkleisli}[1]{#1\mathbf{Rel_!}}

\newcommand {\redbigp}[1]{\xrightarrow{#1}}

\newcommand{\bydef}{\: := \:}

\newcommand{\1}{o}
\newcommand{\phors}{{(\mathcal{N}, \mathcal{T}, \mathcal{D}, S)}}
\newcommand{\gphors}{\mathcal{G}}
\newcommand{\Pterm}[1]{\mathbb{P}({#1} \downarrow)}
\newcommand{\ExpTime}[1]{\mathbb{E}({#1} \downarrow)}

\newcommand{\fix}{\mathsf{fix}\:}
\newcommand{\Rsemiring}{\mathcal{R}}
\newcommand{\Qsemiring}{\mathcal{Q}}
\newcommand{\Rinf}{\mathbb{R}^{+\infty}_{\geq 0}}
\newcommand{\Qpos}{\mathbb{Q}_{\geq 0}}
\newcommand{\supp}{\mathsf{supp} \:}

\newcommand{\exptower}[2]{\exp_{#1}({#2})}

\providecommand{\dotdiv}{
  \mathbin{
    \vphantom{+}
    \text{
      \mathsurround=0pt 
      \ooalign{
        \noalign{\kern-.35ex}
        \hidewidth$\smash{\cdot}$\hidewidth\cr 
        \noalign{\kern.35ex}
        $-$\cr 
      }%
    }%
  }%
}

\begin{document}

\maketitle

\begin{abstract}

The problem of determining whether a probabilistic program terminates almost surely (i.e.~with probability one) is undecidable, and actually $\Pi^0_2$-complete. 
For this reason, a growing literature has explored classes of programs for which this and related problems can be shown (semi-)decidable.
In this work we consider the termination problem for the language of Probabilistic Higher-Order Recursion Schemes (PHORS). 
Using the weighted relational semantics of linear logic, we translate this problem into the computation 
of suitable generating functions associated with the program interpreted. 
This way, we establish the decidability of almost sure termination for a class of programs that extends Li et al.'s affine PHORS via a type discipline with bounded exponentials.
To achieve this, we show that the generating functions for such programs are always algebraic, that is, solutions of polynomial equations, yielding an effective method to answer the termination problem.
\end{abstract}

\section{Introduction}


\paragraph*{Probabilistic Higher-Order Verification}



Model checking is one of the most successful techniques for the verification of programs and systems \cite{APT1986307, 10.1145/1592761.1592781}, and it has found applications in a wide range of areas, leading to the development of widely used tools such as SPIN \cite{10.1109/32.588521} and PRISM \cite{10.5555/647810.738106}. While in its original formulation model checking focused on finite-state systems modeled by Kripke structures, its scope of application has gradually expanded over time, in directions that are often very different from one another. In this paper, we are particularly interested in two such directions.

The first concerns the model checking of systems whose evolution is intrinsically \emph{probabilistic}. Along this line of research, the literature has produced a rich body of results, both positive and negative, regarding the decidability and tractability of the underlying verification problem (see \cite{Katoen} for a recent survey). It is worth noting that when the property to be verified is reachability or termination (appropriately generalized to a probabilistic setting), verification may remain decidable even beyond the finite-state case, as in, for example, recursive Markov chains \cite{10.1145/2159531.2159534} or pushdown automata \cite{10.5555/1018438.1021838}.

But there is also another generalization of model checking, likewise obtained by considering a broader class of systems than those traditionally studied, namely so-called \emph{higher-order model checking}. It has been known for about twenty years that this problem is decidable when the underlying specification is an arbitrary MSO formula and the program is an \emph{higher-order recursion scheme} (HORS, \cite{10.1109/LICS.2006.38, DBLP:conf/lics/Ong15}), a language equivalent to the simply-typed $\lambda$-calculus enriched with a fixpoint operator. Problems such as termination or reachability are therefore decidable for a broad class of programs encompassing higher-order types and recursion. These positive results have nonetheless revealed
a certain degree of fragility: relatively minor extensions suffice to render the problem undecidable.

What, then, happens when one considers model-checking problems for probabilistic \emph{and} higher-order programs? Recently, Kobayashi et al.~\cite{DBLP:journals/lmcs/KobayashiLG20} have introduced \emph{probabilistic higher-order recursive schemes} (PHORS), a probabilistic variant of HORS, which can be seen as the natural randomized variation on the theme of HORS. Unfortunately, verifying natural probabilistic counterparts of reachability and termination properties is in general a much more difficult task and is undecidable for PHORS, contrary to what happens for HORS and recursive Markov chains. For example, in a probabilistic setting, a natural property is \emph{almost sure termination} (AST), that is, termination with probability 1. Now, while termination is the quintessential \emph{semi-decidable} property for a Turing-complete language (and is even \emph{decidable} for HORS),  AST is $\Pi^0_2$-complete in Turing-complete probabilistic languages \cite{DBLP:journals/acta/KaminskiKM19}, and its decidability fails for PHORS, already at order 2 \cite{DBLP:journals/lmcs/KobayashiLG20}. For these reasons, the recent literature has focused on capturing sub-classes of PHORS for which AST and, possibly, the related \emph{positive almost sure termination} (PAST) -- the finiteness of the \emph{expected} number of steps to termination -- could be shown decidable, and thus possibly amenable to verification and model-checking techniques. Notably, while \cite{DBLP:journals/lmcs/KobayashiLG20} established the decidability of AST for order-1 PHORS, Li et al.~\cite{DBLP:conf/lics/LiMO22} established the decidability of both AST and PAST for the \emph{affine} PHORS (styled PAHORS), i.e.~the recursive programs which are allowed to use each of their inputs \emph{at most} once during their reductions.

Despite the aforementioned results, the nature of the problem remains poorly understood, and a precise assessment of its computational difficulty is still lacking. For instance, it is currently unknown whether the decidability problem for PHORS remains $\Pi^0_2$-complete, or whether it can, while being undecidable, be placed lower in the arithmetical hierarchy. Similarly, while different methods (like intersection types, game semantics or automata theory) have been applied to achieve decidability, a general approach encompassing all these results has not yet been identified. We therefore believe that it is important to investigate the nature of this problem in greater depth, which is the primary goal of this paper.

%
%
%

\paragraph*{Probabilistic Termination via Generating Functions}

Estimating the probability of termination of higher-order programs is hard, because it demands to \emph{count} probabilities over a typically \emph{infinite} set of reductions, a problem that remains difficult even when these reductions follow some well-defined recursive pattern. 

In fact, similar counting problems are common in the field of combinatorics, in which they are addressed by studying the algebraic and analytic properties of the corresponding \emph{generating functions}, i.e.~of the formal power series that are naturally associated with the sequence of numbers that one wishes to compute or estimate.   
A famous example is provided by the well-known \emph{Catalan numbers} $C_n=\frac{1}{n+1}\binom{2n}{n}$, where $C_n$ is the number of labeled binary trees with $n$ nodes. 
This sequence induces the generating function $c(x)=\sum_{n=0}^{\infty}C_n x^n$, which can be characterized as the solution of a simple algebraic equation, namely $xc(x)^2-c(x)+1=0$. 
From a polynomial equation like this it is possible to extract informatios about the generating function and the corresponding sequence that it would be otherwise very hard to compute directly.
In our example, it allows us to deduce the closed expression $c(x)=\frac{1-\sqrt{1-4x}}{2x}$, hence providing a way to compute the values of $c(x)$ effectively.

%

%

Starting from Chomsky and Sch\"utzenberger's seminal work \cite{CHOMSKY1963118}, generating functions have been widely applied in formal language theory. For example, it is well-known that for all regular languages $L$, the associated generating function $L(x)=\sum_{n=0 }^{\infty}L_nx^n$, where $L_n$ counts the words of $L$ of length $n$, is \emph{rational}, that is, it can be written as a fraction $L(x)=p(x)/q(x)$ of two polynomials.
%
%
When $L$ is context-free, instead, $L(x)$ is \emph{algebraic}, i.e.~, it is the solution, like $c(x)$ above, of some polynomial equation $p(x,L(x))=0$. 

%

Now, suppose $\gphors$ is a PHORS, i.e.~an higher-order recursive probabilistic program, and that $\gphors$ makes, at each of its reduction steps, an unbiased choice between two different alternatives. 
Letting $\gphors_n$ be the number of \emph{distinct} terminating reductions of $\gphors$ with exactly $n$ steps, the power series $a_\gphors(z)=\sum_{n=0}^{\infty}\frac{a_n}{2^n}z^n$ precisely captures the probabilistic behavior of $\gphors$. Notably, $a_\gphors(1)$ computes its probability of termination, and, as we'll see, the derivative $a_{\gphors}'(1)$ captures the \emph{expected} number of reduction steps to termination.
 
A natural question is thus: can we find a way to relate the complexity of the termination problem for a PHORS $\gphors$ with the algebraic or analytic properties of its generating function $a_\gphors(z)$?

\paragraph*{Generating Functions via Linear Logic}

Generating functions have been applied in the literature to the study of
both formal grammars \cite{DBLP:books/sp/KuichS86} and 
 \emph{first-order} programming languages (also in the probabilistic case, \cite{Klinkenberg2021,Klinkenberg2024}).  However, extending this natural and powerful approach to functional programs is challenging, as it is \emph{prima facie} not obvious how to associate \emph{higher-order} functions  with sequences of real numbers, as required to induce a generating function.

In this work we provide a solution to this problem, by presenting the first combinatorial study of the termination problem for {higher-order} recursive programs like PHORS.
To do this, we exploit in an essential way two ideas coming from the toolbox of linear logic.
The first fundamental ingredient is the \emph{weighted relational model} \cite{DBLP:conf/lics/LairdMMP13}, which associates proofs in linear logic, as well as PCF programs, with families of \emph{formal power series} with coefficients in a continuous semiring.
We show how this model can be used to associate each PHORS with a corresponding generating function $a_\gphors(z)$ in an elegant and compositional way.

As mentioned above, linearity has been recognized as a key factor to ensure the decidability of the AST problem for PHORS \cite{DBLP:conf/lics/LiMO22}.
The viewpoint of generating functions provides a \emph{novel} way to look at this result:
using our interpretation via the weighted relational model, we will show that for all affine PHORS, while the corresponding language might \emph{not} be context-free, the corresponding generating function is, in fact, always algebraic.
 Actually, the combinatorial viewpoint allows us to go beyond the linear/affine restriction and consider even \emph{non-linear} PHORS typable via the discipline of \emph{graded} or \emph{bounded exponentials} \cite{GIRARD19921, 10.1007/978-3-642-02273-9_8, 10.1007/978-3-642-54833-8_19, 10.1007/978-3-642-54833-8_18}: this discipline allows one to define types of the form $!_n A \multimap B$, intuitively expressing functions from $A$ to $B$ using their inputs \emph{at most} $n$ times. 
 
 Hence, on the one hand, in this work we capture a \emph{larger} class of programs than \cite{DBLP:conf/lics/LiMO22}, including PHORS admitting both bounded and unbounded forms of duplication, for which the AST and PAST problems remains decidable.
On the other, our results suggest the \emph{robustness} of the class of PHORS individuated by \cite{DBLP:conf/lics/LiMO22}: all PHORS typable in our discipline are shown 
 {equivalent} (in terms of both probabilistic termination and the corresponding language) to some PAHORS, even though reconstructing the latter effectively may lead to an exponential blow up in size.

\paragraph*{Contributions}
Our contributions can be resumed as follows:
\begin{varitemize}

\item First, we show that the weighted relational semantics naturally associates each PHORS with a generating function capturing its probability of termination and expected number of steps to termination, whose coefficients are implicitly defined via a system of (countably many) recursive equations. 

\item Then, we introduce the \emph{finitely bounded PHORS}, $\Gfin$, extending the PAHORS of \cite{DBLP:conf/lics/LiMO22} with finitely bounded non-linearity, and show that the corresponding generating functions are algebraic, leading to the decidability of AST and PAST.

\item Finally, we introduce an even larger class $\Ginf$, extending the $\Gfin$ with a restricted use of \emph{unbounded} non-linearity, still preserving algebraicity and decidability. Notably, the $\Ginf$ can be \emph{composed}, thus enabling a modular study of their generating functions.
%
%
%

\end{varitemize}

\section{From PHORS to Generating Functions, via Linear Logic}

In this section we provide an overview of our approach to probabilistic termination via algebraic generating functions.

\paragraph*{The Generating Function of a PHORS}

Consider a program $\gphors$ defined by the following equations:
\begin{equation}\label{eq:introphors1}
\begin{aligned}
Fx&= x\oplus_{\frac{1}{2}} Fx\\
S& =F \  e
\end{aligned}
\end{equation}
where $F:o\to o$. 
This is a very simple example of a order-1 PHORS: the upper case letters $F,S$ are called \emph{non-terminal} symbols, the execution of the program starts from the \emph{source} non-terminal $S:o$ by applying instances of the equations, read from left to right, as well as probabilistic choices, terminating when the unit constant $e:o$ is, eventually, produced. 
If we ignore probabilities, we can consider the \emph{branch language} $\C L(\gphors)$ computed by the PHORS $\gphors$ as follows:
 replace each choice $t\oplus_{\frac{1}{2}}u$ with a non-deterministic operation $ct+cu$ marked by an order-1 function symbol $c:o\to o$; this leads to consider the following (non-deterministic) HORS $\gphors'$,  where $F':( o\to o)\to \tau$:
\begin{equation}\label{eq:introphors2}
\begin{aligned}
F'zx&=  zx + F'zx\\
S& =F' \ c \ e
\end{aligned}
\end{equation} 
The non-deterministic execution of $\gphors'$ produces an infinite tree whose finite branches form the language $\C L(\gphors)=\{c^{n}e\mid n\in \N\}$.

Our goal is to extract a generating function from programs like \eqref{eq:introphors1}, i.e.~a formal power series $a_\gphors(z)=\sum_{i=0}^\infty a_iz^i$ capturing the probabilistic termination of $\gphors$. Recall that a power series induces an \emph{analytic function} defined over some open subset $U\subset \mathbb C$ of the field of complex numbers, which, when this set is not empty, can thus be studied with methods coming from both algebra and complex analysis. For instance, the formal power series $a(x)=\sum_{n=0}^\infty \frac{1}{2^n} x^n$ is equal to the analytic function $a(x)=\frac{1}{1-\frac{x}{2}}$ over the open set of complex numbers of absolute value strictly less than $ 2$.

As anticipated, our fundamental ingredient to extract generating functions from PHORS is the weighted relational model of linear logic. This semantics is often styled \emph{quantitative}, as it provides a \emph{precise count} of the number of times that a program may use each of its inputs during any of its reductions.
If we consider the HORS $\gphors'$, we see that, for any $i\in \N$, the term $F'zx$ has a unique way of terminating using $z$ exactly $i$ times. Formally, each such reduction yields, in the semantics, a monomial of the form $z^{i}x$. Notice that the parameter $z$ precisely counts the number of times that a choice is made.
By considering all possible reductions, the corresponding monomials yield then the power series below:
\[
a_{F'}(z,x)=\sum_{i>0}^{\infty}z^i x
\]
To get back to the original PHORS $\gphors$ we can simply add probabilities, i.e.~
interpret $x\oplus_{\frac{1}{2}}y$ as $z(\frac{1}{2}x+\frac{1}{2}y)$, yielding the rational power series
%
\[
a_{F}(z,x)=\sum_{i>0}^{\infty}\frac{1}{2^i}z^i x=\frac{x}{1-\frac{z}{2}}-x
\]
saying that the probability that $Fx$ terminates using $x$ once and making $i>0$ choices (i.e.~using $z$ $i$ times) is $\frac{1}{2^i}$. 
As $e$ trivially terminates with probability 1, we can then associate the source symbol $S$ with the generating function
$a_{S}(z):=a_{F}(z,1)=\sum_{i>0}^{\infty}\frac{1}{2^i}z^i=\frac{1}{1-z/2}-1$,
which describes the reductions with $i$ choices, and finally deduce that $\gphors$ terminates with probability $a_{S}(1)=1$, i.e.~almost surely.

Using the ideas just sketched, as we show in Section 4, any PHORS $\gphors$ can be associated with a generating function 
$
a_{\gphors}(z):=a_{S}(z)=\sum_{n=0}^\infty a_n z^n
$
such that $a_n$ designates the probability that $\gphors$ terminates after \emph{exactly} $n$ reduction steps. Observe that the probability of termination of $\gphors$ is given by $a_{\gphors}(1)$. Moreover, $a'_{\gphors}(1)$, where
$a'_{\gphors}(z)=\sum_{n=0}^{\infty}na_n z^{n+1}$ indicates 
the \emph{derivative} of $a_{\gphors}(z)$, precisely captures the expected number of reduction steps to termination.

It is worth observing at this point that, in our setting, the generating function $a_\gphors(z)$, while characterizing the termination problem for the PHORS $\gphors$,  \emph{does not} characterize the corresponding branch language: for instance, two PHORS respectively inducing the non context-free language $\C L_1=\{0^n1^n0^n\mid n\in \N\}$ and the context-free language 
$\C L_2=\{0^n1^{2n}\mid n\in \N\}$ would \emph{both} induce a rational generating function of the form $a(z)=\sum_{n=0}^{\infty}\frac{1}{2^{3n}}z^{3n}=\frac{1}{1-\left({z}/{2}\right)^3}$, as the latter only counts the number of reduction steps to termination.

%
%
%
%
%
%
%
%
%
%

\paragraph*{Algebraic Generating Functions via Bounded Exponentials}

Consider now a slightly more complex PHORS $\gphors$, defined by:
\begin{equation}\label{eq:phors1}
\begin{aligned}
Hfx&=( H(A\circ f)x \oplus_{a}
H(B\circ f)x )\oplus_{a}f(fx)\\
Ax&= x\oplus_{b} \Omega\\
Bx&=x\oplus_{c} \Omega\\
S&=HIe,
\end{aligned}
\end{equation}
where $H:(o\to o)\to (o\to o)$, $A,B:(o\to o)$ and $S:o$, $\Omega$ stands for a diverging term, 
and $a,b,c,d$ stand for rational biases for the probabilistic choice operators. Observe that $H$ is not linear, as it may use the variable $f$ twice.
%
Using distinct constants $a,b,c:o\multimap o$ for each choice symbol we can see that the branch language of $\gphors$, 
%
$$
\mathcal L(\gphors)=\{ a^{2|w|+2}ww\mid w\in \{b,c\}^*\}.
$$
is not context-free, as it contains an arbitrary word repeated twice.

In the weighted relational semantics, the non-terminal $H$ yields, as before, a power series $a_{H}(z,f,x)=\sum_{i}^{\infty}H_{i}(z) f^{i} x$, where
$H_{i}(z)=\sum_j H_{ij}z^j$ counts the probability that $Hfx$ terminates using $f$ exactly $i$ times, $x$ once, but running through an arbitrary number of choices.
By translating \eqref{eq:phors1} into equations between the corresponding power series, we are led to: 
\begin{equation}\label{eq:poly}
\begin{aligned}
H_{2}(z)&=
\alpha H_{2}(z)z^2  +\beta z, \qquad\qquad  H_{i}(z)=0 \quad (i\neq 2),
\end{aligned}
\end{equation}
where $\alpha=(a^2b^2+a(1-a)c^2)$ and $\beta=1-a$.
The power series 
$a_{H}(z,f,x)=H_2(z)f^2x$ can then be computed by finding the minimal non-negative solution of the \emph{polynomial equation}
\[
p(z,H_2(z))=0,
\]
where $p(z,w)= (\alpha z^2-1)w+\beta z$, yielding 
$a_{H}(z,f,x)= \frac{\beta z}{1-\alpha z^2}f^2x$. 
This yields the termination probability $a_{\gphors}(1)=
a_{H}(1,1,1)=
\frac{\beta }{1-\alpha }$.

Unfortunately, termination probabilities cannot always be computed algebraically, as above.
For instance, it is well-known that the generating functions of order-2 PHORS (which correspond to the so called \emph{indexed grammars}) may fail to be algebraic \cite{DBLP:journals/ita/AdamsFM13}. 
In our semantics, the generating function $a_{\gphors}(z)$ of a PHORS $\gphors$ will be determined by a (countably) infinite family of equations, which we might not be able to solve algebraically. 
 
The main theme of this work is then to find conditions under which the generating function $a_{\gphors}(z)$ may be found as the minimal solution of some \emph{finite system of polynomial equations}. This will have two applications: in the first place, thanks to the characterization in term of algebraicity \textit{and} minimality, we can deduce the existence of effective methods to answer both the AST and PAST problem for the corresponding PHORS $\gphors$, by relying on the decidability of the \emph{first order theory of the reals} $\Etor$ (see for example \cite{Marker2002-jm} for an account of this classic result by Tarski). On the other side, we can obtain for suitable classes of PHORS a result in the spirit of the celebrated Chomsky-Schutzenberger enumeration theorem for context-free languages: the generating power series of such PHORS will be shown to be algebraic. This paves the way to studying the asymptotic rate of convergence of such recursion scheme through the techniques of analytic combinatorics, although this topic is not explicitly treated in this paper.

\paragraph*{Algebraic PHORS via Bounded Exponentials}

To obtain such conditions we rely, once more, on the toolbox of linear logic: 
as the relational semantics counts input uses, 
bounding the number of \emph{unknowns} that we need to solve to compute the interpretation of the program amounts at bounding the number of uses that the program may do of each of its inputs.
A first obvious idea would be thus to restrict ourselves to linear, or even affine, programs, as in \cite{DBLP:conf/lics/LiMO22}. Yet, the PHORS in \eqref{eq:phors1} shows that one can well admit {bounded} forms of non-linearity. 
A standard and well-studied way to impose such bounds is via \emph{bounded  exponentials} \cite{GIRARD19921, 10.1007/978-3-642-02273-9_8, 10.1007/978-3-642-54833-8_19, 10.1007/978-3-642-54833-8_18} $!_nA$, where a program $t:\ !_nA\multimap B$ may use its input \emph{at most} $n$ times. 
In our example, the non-terminal $H$ could be given the bounded type $!_2(!_1o\multimap o)\multimap (!_1o\multimap o)$, as $H$ uses its input $f$ twice (indeed, all generating functions $H_i(z)$ are equal to zero, for $i\neq 2$).


But it is possible to go even beyond that, still preserving decidability.
Similarly to how we handled the formal variable $z$ counting probabilistic choices above, 
an algebraic PHORS may well use some inputs in an \emph{unbounded} way, provided these are handled as \emph{formal parameters}.
This idea leverages the intrinsically parametric nature of the weighted relational semantics, which allows one to pass smoothly from power series $s(x,w)$ in $x,w$, with values over some continuous semiring $\Rsemiring$, to ``parametric'' power series $s(w)(x)$ in $x$ with values themselves of the continuous semiring $\fps{\Rsemiring}{w}$ of power series in $x$.
This allows us to admit also infinitely graded types $!_\infty\sigma\multimap \tau$, although used in a restricted way.

\section{PHORS}
In this section we recall Higher Order Recursion Schemes (HORS), their probabilistic counterpart, Probabilistic  Higher Order Recursion Schemes (PHORS), as well as their correspondence with the (probabilistic) $\LY$-calculus. 

\subsection{HORS, a.k.a.~the $\LY$-calculus}
HORS, widely used in higher order model checking, can be seen as grammars that generate infinite ranked trees (or equivalently, infinite typed lambda terms).
In the theory of HORS, the simple types are generated by the grammar $T= \1 \mid T \to T$. We can also define the order of a type $ord(\1) =0$ and $ord(T \to S)= \max(ord(T)+1, ord(S))$.
A (deterministic) HORS $\gphors$ can be defined as a 4-uple $\phors$ where $\mathcal{N}$ is a set of typed non-terminals, $\mathcal{T}$ is a set of typed terminals, $S$ is a distinguished non-terminal called the starting symbol of $\gphors$ and $\mathcal{D}$ is a function associating each non-terminal $L$ with a rewriting rule of the form:
$L x_1 \dots x_n = t$, where $FV(t) \subseteq \{x_1, \dots x_n\} \cup \mathcal{T}$. These rules are to be understood as a (mutually) recursive definition of the non-terminals that can be progressively unfolded: to $\gphors$ we will associate the rewriting system induced by the rule $L t_1 \dots t_n \to t[x_1/t_1, \dots x_n/t_n]$. The rewriting, starting from $S$, will generate at each step finite (simply typed) terms containing the terminals $\mathcal{T}$; the "limit" of this set of terms will be the $\textit{value tree}$ defined by the $\mathcal{G}$ (for formal definitions, see \cite{DBLP:conf/lics/Ong15}, \cite{DBLP:books/ems/21/CarayolS21}). The nodes of this tree will be labelled by terminals and its branching factor at each node will be equal to the arity of the non terminal labelling that node.

The \emph{branch language} $\mathcal L_{\gphors}\subseteq \mathcal T^*$ of an HORS $\gphors$ is the branch language of its value tree: its words are the sequences of terminals encountered in some finite branch of the tree.
 It is well known that the branch languages of order-1 HORS coincide with the context-free languages, while the branch languages of order-2 HORS coincide with the \textit{indexed languages} defined by Aho \cite{DBLP:conf/focs/Aho67}.


%
\begin{example}
	Consider the order-2 HORS $\gphors \bydef \phors$ defined as follows: its non terminals  are $S:\1$, $L: (\1 \to \1) \to \1 \to \1$, its terminals are $e: \1$ or arity 0, $r:\1 \to \1$ of arity 1 and $h:\1 \to \1 \to \1 $ of arity 2. Its rewriting rules are:
	\begin{align*}
		&S = L r e\\
		& L f k = h k (L f (L f k))   
	\end{align*}
	The first reduction steps are $S \to L r e \to h e (L r (L r e))$. Now we can for example unfold the leftmost $L$ and obtain $ h e (h (L r e) (L r (L r (L r e))))$.
\end{example}

The language of HORS is equivalent to the $\LY$-calculus, that is, the simply-typed $\lambda$-calculus enriched with cartesian products and a fixpoint $Y:(T\to T)\to T$: 
in $\gphors=(\nonterm, \C T, \C D, S)$ the function $\C D$ can be equivalently defined as a
function associating each non-terminal $L\in \nonterm$ with a simply typed term 
$\nonterm\vdash \lambda x_1.\dots. x_n. t_L:\nonterm(L)$, such that $t_L$ contains no $\lambda$-abstraction and 
$\nonterm, x_1,\dots, x_n\vdash t_L:\1$. 
To $\gphors$ can we then associate the $\LY$-term
$Yt_{\gphors}:\nonterm$, where 
$t_{\gphors}:= \lambda \langle L_1,\dots, L_n\rangle. \langle  t_{L_1} \dots  t_{L_n}\rangle: \nonterm\to \nonterm$. In the sequel, we will sometimes confuse between the PHORS $\gphors$ and the corresponding term $t_\gphors$, unless this creates ambiguity.

%
%
%

\subsection{PHORS, a.k.a.~the Probabilistic $\LY$-calculus}
PHORS are a probabilistic extension of HORS: they can be obtained by adding to the syntax of HORS a set of new constant symbols $\oplus_p, p \in \mathbb{Q}\cap[0,1]$, of type $\1 \to \1 \to \1$, $t_1 \oplus_p t_2$ representing a probabilistic choice that chooses with probability $p$ the first argument and with probability $1-p$ the others; moreover, as in \cite{DBLP:journals/lmcs/KobayashiLG20}, we focus on PHORS with a unique terminal symbol $e:\1$. A PHORS will thus be given as a triple $\gphors=(\nonterm, \C D, S)$. Each non-terminal $L$ will be given a recursive definitions of the form 
\[L x_1 \dots x_n = t_L \oplus_p t_R,\]
with the two (one step) probabilistic rewriting rules
\begin{align*} 
L t_1 \dots t_n & \redbigp{\mathsf l,p} t_L[x_1/t_1, \dots x_n/t_n],\\
L t_1 \dots t_n &\redbigp{\mathsf r,1-p} t_R[x_1/t_1, \dots x_n/t_n].
\end{align*}
 Here, we annotated the rewriting to keep track of the probability and the direction of this choice. For a reduction $R: s \redbigp{\sigma_1 ,p_1} s_1 \dots  \redbigp{\sigma_k ,p_k} s_k$, we also write $R:s \redbigp{\sigma,p} s_k$, where
 $\sigma = \sigma_1 \dots \sigma_n$ and $p = p_1\dots p_n$.
 
 
 Given a PHORS $\gphors=(\nonterm, \C D,S)$, we can define its probability of termination
 and expected number of steps to termination as
 \begin{align*}
 \Pterm{\gphors}&:= \sum_{ n \: \text{normal form}}\sum_{R:S \redbigp{\sigma, p} n} p \qquad
 \ExpTime{\gphors}&:=  \sum_{ n \: \text{normal form}}\sum_{R:S \redbigp{\sigma, p} n} |\sigma |p
 \end{align*}
We can now define the two central problems of this paper:
\begin{definition}
	Given a PHORS $\gphors$, the AST problem asks to decide whether $\Pterm{\gphors}=1$, while the PAST problem asks to decide  whether $\ExpTime{\gphors} < + \infty$
\end{definition}

\begin{example}\label{ex:phors0}
Consider the order-1 PHORS defined by
\begin{align*}
Fx&= F(Fx)\oplus_{\frac{1}{2}} x,\\
S&= Fe.
\end{align*}
As any choice either adds one $F$ or deletes one, and termination comes when $e$ is reached, it simulates a simple random walk starting from $+1$, adding $\pm 1$ at each step and terminating at $0$. We will see in the next sections that the aforementioned PHORS is AST but not PAST. 
\end{example}

It is known that PAST implies AST. On the other hand, in \cite{DBLP:journals/lmcs/KobayashiLG20}, it is proved that AST is undecidable for PHORS of order at least 2, while a decision procedure exists for PHORS of order 1. In \cite{DBLP:conf/lics/LiMO22}, it is shown that, if we restrict to terms typable via affine types, AST and PAST are decidable. The proof builds upon the correspondence, devised in \cite{DBLP:conf/mfcs/ClairambaultM19}, between affine PHORS and restricted tree stack automata. As a consequence of the theorems of Section 6, we will see that this can be proved without resorting to automata, but rather relying on the relational model of linear logic.

\begin{remark}
	To each PHORS we can associate an HORS whose value tree represents the probabilistic choices made during reduction: to this aim, it is enough to treat every $\oplus_p$ as a terminal symbol of arity 2. This way, we can see that the branching structure of the order-2 PHORS \eqref{eq:phors1} from Section 2, yielding a non-context free branch language, 
	cannot be simulated by any order-1 PHORS.
\end{remark}

The language of PHORS is equivalent to the $\PLY$-calculus, that is, the extension of $\LY$ with (ground) choice operators $M \oplus_p N$, with 
reduction rules
$M\oplus_p N  \redbigp{\mathsf l,p} M$,
$M\oplus_p N \redbigp{\mathsf r,1-p} N$, and typing rule
\[
\infer{\Gamma\vdash M\oplus_p N:\1}{\Gamma\vdash M:\1 & \Gamma\vdash N:\1}.
\]
The \emph{call-by-name reduction $\redbigp{}$} of the $\PLY$-calculus is the closure of the reductions above as well as 
 $\beta$-reduction $(\lambda x.M)N\redbigp{\epsilon,1} M[N/x]$ and $Y$-reduction $YM\redbigp{\epsilon,1}M(YM)$ under the congruence
 $M\redbigp{\sigma, p} N \ \Rightarrow \ MP \redbigp{\sigma, p} NP$.

The translation of a PHORS $\gphors$ into a term $t_{\gphors}$ of the $\PLY$-calculus works as in the case of HORS. In particular, reductions sequences $R:S\redbigp{\sigma, p}n$ to normal form are in one-to-one correspondence with probabilistic call-by-name reductions $R:M_S\redbigp{\sigma, p}n$ to normal form in $\PLY$.
For example, the PHORS from Example \ref{ex:phors0} is encoded in $\PLY$ by $Yt_{\gphors}:(o\to o)\times o$, with $t_{\gphors}=Y \lambda\langle F,S\rangle. t':((\1\to \1)\times \1)\to ((\1\to \1)\times \1)$ and  
$t'=
\langle \lambda  x.F(Fx)\oplus_{\frac{1}{2}} x, Fe\rangle$.

%
%

\section{Interpreting PHORS by Formal Power Series}

In this section, we recall the well-studied relational semantics of linear logic weighted over a continuous semiring $\Rsemiring$, in a slightly different flavor than, say, \cite{DBLP:conf/lics/LairdMMP13}: rather than using $\Rsemiring$-valued matrices, we will use formal power series over $\Rsemiring$, as in \cite{DBLP:journals/corr/abs-2501-15637}. This kind of idea had already been outlined in a more categorical setting in \cite{LAMARCHE199237}. We will then illustrate how this semantics associates each PHORS with a generating function capturing its probability of termination.



\subsection{Formal Power Series}
Let us first introduce formal power series (fps in short); for more details about them and their applications to combinatorics we refer to \cite{DBLP:books/daglib/0023751}, \cite{DBLP:series/tmsc/KauersP11}. Given a set $\Sigma$, 
let $!\Sigma$ be the set of finite multisets over it, i.e functions $\mu: \Sigma \to \N$ with finite support. Given a semiring $\Rsemiring$, the set of \emph{formal power series (with commuting variables) over $\Rsemiring$}, denoted $\fps{\Rsemiring}{\Sigma}$, is the set of all functions $!\Sigma \to \Rsemiring$. More concretely, if
 we introduce for each $s \in \Sigma$ a variable $x_s$ (we will denote by $x_\Sigma$ the set of all these variables), then a finite multiset $\mu \in !\Sigma$ can be seen as the monomial $x_\Sigma^\mu \bydef \Pi_{s \in \Sigma}x_s^{\mu(s)}$; then any formal power series $s\in \fps{\Rsemiring}{\Sigma}$ can then be expressed as a formal sum:
 $$s=\sum_{\mu \in !\Sigma} s_\mu x_\Sigma^{\mu}.$$ 
 We will sometimes write $s(x_\Sigma)$ to underline which variables appear in $s$.
 For each $s\in \fps{\Rsemiring}{\Sigma}$, its \emph{support} $\supp~s\subseteq \ !\Sigma$ is the set of multisets $\mu$ such that $s_{\mu}\neq 0$. 
 We denote as $\fpp{\Rsemiring}{\Sigma}\subseteq \fps{\Rsemiring}{\Sigma}$ the set of \emph{polynomials}, i.e.~of fps with finite support, which can written, as usual, as finite sums

%

  \begin{example}
  The power series $s(x)=\sum_{n} (1/2)^n x^n$ belongs to $\fps{\mathbb Q}{x}$.
  The power series $s(x)=\sum_{n=0 }^{\infty}\sum_{i+j=n} (1/3)^{n} x_0^{i}x_1^j$, i.e. \\
$ \sum_{\mu\in !\{0,1\}}(1/3)^{\mu(0)+\mu(1)}x^{\mu}$, belongs to $\fps{\mathbb Q}{x_0,x_1}$.
  \end{example}
 
 In $\fps{\Rsemiring}{\Sigma}$ we can define two operations: sum, performed componentwise: $\sum_{\mu \in !\Sigma} r_\mu x_\Sigma^{\mu} + \sum_{\mu \in !\Sigma} s_\mu x_\Sigma^{\mu} \bydef \sum_{\mu \in !\Sigma} (r_\mu + s_\mu)  x_\Sigma^{\mu}$ and the Cauchy product: 
 $
 \sum_{\mu \in !\Sigma} r_\mu x_\Sigma^{\mu} \cdot \sum_{\mu \in !\Sigma} s_\mu x_\Sigma^{\mu} \bydef \sum_{\kappa \in !\Sigma}\left (\sum_{\mu + \nu = \kappa} r_\mu s_\nu\right)  x_\Sigma^{\kappa}$.
 With these operations, $\fps{\Rsemiring}{\Sigma}$ is a semiring; if we take on it the pointwise partial order, it becomes a \emph{continuous} semiring: this means that all directed joins have a supremum, and such suprema commute with multiplication.
Our typical example here is the continuous semiring of positive extended reals $\Rinf$.   Continuity is essential when we want to define the \textit{composition} of formal power series. Take a power series $r \in \fps{\Rsemiring}{\Sigma}$ and let $s_\Sigma\in \fps{\Rsemiring}{\Sigma'}^{\Sigma}$ be a $\Sigma$-indexed family of power series over the set $\Sigma'$, $s_\sigma = \sum_{\nu \in !\Sigma'}s_{\sigma, \nu} y^\nu$. Then, we can define the power series $r(s_\Sigma) \in \fps{\Rsemiring}{\Sigma'}$ by the formula:

\vskip-3mm
{\small
   \begin{align*}
   r(s_\Sigma)=
\sum_{\kappa \in ! \Sigma'}  \left( \sum_{\mu=[\sigma_1, \dots, \sigma_k] \in !\Sigma} \sum_{
{\tiny
\begin{matrix}
(\nu_1,\dots,\nu_{k})\in (!\Sigma')^k\\
\nu_1+\dots+\nu_{k}=\kappa
\end{matrix}
}
}
r_{\mu }\cdot 
 \prod_{i=1}^{k}  s_{\sigma_i, \nu_i } \right)y_\Sigma^\kappa
  \end{align*}
  }
  Observe that if there is at least an $s_\sigma$ such that $s_{\sigma, []} \neq 0$, the sum over $!\Sigma$ will be infinite; still, by continuity, we can define its value to be the sup of the partial sums (yet, notice that, over a - non-continuous - ring, this composition would not be well-defined). If the power series $s_\Sigma$ are all constants (i.e. $s_{\sigma, \mu}=0 \: \forall \mu \neq []$), we will say that $r(s_\Sigma) \in \fps{\Rsemiring}{\emptyset}= \Rsemiring$ is the \emph{value of $r$ at the point} $(s_{\sigma, []})_{\sigma \in \Sigma}$.
With respect to these operations, $\fpp{\Rsemiring}{\Sigma}$ form a (non-continuous) sub-semiring of $\fps{\Rsemiring}{\Sigma}$.

Since $\Qsemiring:=\fps{\Rsemiring}{\Sigma}$ is a continuous semiring, for any set $\Sigma'$ we can consider the continuous semiring $\fps{\Qsemiring}{\Sigma'}$ of formal power series whose coefficients are themselves power series (on $\Rsemiring$), and we have the isomorphism $\fps{\Qsemiring}{\Sigma'}=\fps{(\fps{\Rsemiring}{\Sigma})}{\Sigma'}\equiv\fps{\Rsemiring}{\Sigma+\Sigma'}$: this generalizes the well-known remark that a fps in two variables $s(x,y)$ can always be written as a fps $s_y(x)=\sum_n s_n(y)x^n$ in $x$ whose coefficients are fps $s_n(y)$ in $y$.

If we do not start, as we do here, with a continuous semiring $\Rsemiring$, but rather with a ring $R$, almost all constructions still work: we can similarly define the set $\fps{R}{\Sigma}$ of formal power series over $R$, which is a ring, as well as its subring  $\fpp{R}{\Sigma}$ of polynomials over $R$. Still, the composition of formal power series will not be defined in general, as it involves an infinite sum, while the composition of polynomials remains well-defined.

  \subsection{The Weighted Relational Semantics}
  
    Now, we recall the link between formal power series and the weighted relational model of linear logic \cite{DBLP:conf/lics/LairdMMP13}. Given a continous semiring $\Rsemiring$, the category 
     $\Qrelkleisli{\Rsemiring}$ has sets as objects, and morphisms $\Qrelkleisli{\Rsemiring}(X, Y)$ are $\Rsemiring$-valued matrices indexed by $!X \times Y$. 
     $\Qrelkleisli{\Rsemiring}$ is the coKleisli category, with respect to the $!$ comonad, of the more familiar category  $\Qrel{\Rsemiring}$ of sets and $\Rsemiring$-valued matrices. 
    
     At the same time, $\Qrelkleisli{\Rsemiring}$ can be seen as a category of formal power series:    a matrix $(t_{\mu, y})_{\mu \in !X, y \in Y}$ can be identified with the $Y$-indexed family of fps $(\sum t_{\mu, y} x_X^\mu)_{y \in Y}$, the morphisms of $\Qrelkleisli{\Rsemiring}$ can be identified with (families of) formal power series, i.e.~$\Qrelkleisli{\Rsemiring}(X,Y)\equiv\fps{\Rsemiring}{X}^Y$.
    Indeed, composition in $\Qrelkleisli{\Rsemiring}$ is given by (pointwise) composition of the underlying power series, with identities $\mathrm{id}_X\in  \fps{\Rsemiring}{X}^X$ given by
    $\mathrm{id}_i(x_X)=x_i$.
%

$\Qrelkleisli{\Rsemiring}$ is cartesian closed, with cartesian products given by $X+Y$ (with neutral $0:=\emptyset$) and exponentials given by $!X\times Y$. 
This allows us to interpret simple types as $\model o=1:=\{\star\}$, $\model{T\times U}=\model T+\model U$ and $\model{T\to U}=!\model{T}\times \model{U}$. Intuitively, a type $T$ translates into a set of variables, and a term $\Gamma\vdash t:T$ into a $\model{T}$-indexed family of fps $\model{t}^{\Rsemiring}_i(x_\Gamma)$ in \emph{as many variables as $\model{\Gamma}$}.
Here we can observe the main challenge appearing with higher-order types, as these are interpreted by \emph{infinitely many} variables: for instance, $\model{o\to o}=!\model o\times \model o\equiv\N\times 1\equiv \N$ translates into a countable sequence of variables, and 
$\model{(o\to o)\to (o\to o)}\equiv !\N\times\N$ has one distinct variable $x_{\mu,n}$ for each $\mu\in !\N$ and $n\in \N$.
Higher-order terms correspond thus to fps in \emph{countably many} variables.

%

\begin{example}\label{ex:churchtwo}
Consider the program $t=\lambda x.y(yx)$, where $y:(\1\to\1)\vdash t:\1\to\1$: recalling $\model{\1\to\1}=!\1\times \1\equiv \N$, 
we have that $t$ is interpreted by a $\N$-indexed family of power series
$(b_i(y_{\N}))_{i\in\N}\in \Qrelkleisli{\Rsemiring}(\N,\N)=\fps{\Rsemiring}{\N}^{\N}$.
Now, think of the variables $y_{\N}$, which interpret the function $y:\1\to\1$, as encoding the fps $a(x)=\sum_n y_n x^n\in\Qrelkleisli{\fps{\Rsemiring}{\N}}(1,1)\equiv  \fps{(\fps{\Rsemiring}{\N})}{1}$;
the term $t$ should translate then into the composition 

{\small
\[
a(a(x))=\sum_n y_n\left(\sum_m y_mx^m\right)^n=
\sum_{i=0}^{\infty}
\left(
\sum_{{\tiny\begin{matrix}(n,m_1,\dots, m_n),\\ m_1+\dots+m_n=i\end{matrix}}}y_ny_{m_1}\dots y_{m_n}\right) x^i,
\] 
}
which gives $b_i(y_{\N})= \sum_{{\tiny\begin{matrix}(n,m_1,\dots, m_n),\\ m_1+\dots+m_n=i\end{matrix}}}y_ny_{m_1}\dots y_{m_n}$.
Notice that the evaluation $tx$ of $t$ over some variable $x:o$ is then precisely interpreted by $a(a(x))\in \Qrelkleisli{\Rsemiring}(\N+1,1)=\fps{\Rsemiring}{\N+1}\equiv\fps{(\fps{\Rsemiring}{\N})}{1}$.

%
%
%
\end{example}

Beyond exponentials, 
$\Qrelkleisli{\Rsemiring}$ is endowed with Conway fixpoints \cite{DBLP:journals/mscs/Hasegawa09}, \cite{DBLP:conf/mfcs/GrelloisM15}: given a morphism $f \in \Qrelkleisli{\Rsemiring}(X + Y, X)$,  we define $\fix f \in \Qrelkleisli{\Rsemiring}(Y, X)$ as follows: take the sequence $f^0 \bydef  0 \times id_Y \in \Qrelkleisli{\Rsemiring}(1 \times Y ,X \times Y), \: f^{n+1} \bydef f \circ (f^n \times id_Y) \circ \langle id_\1, \Delta_Y \rangle  \in \Qrelkleisli{\Rsemiring}(1 \times Y, X)$ and finally let $\fix_{Y,X} f \bydef \sup_n f^n \in \Qrelkleisli{\Rsemiring}( Y, X)$.
   It is worth to restate this construction in terms of power series: $f$ will be represented by an $X$-indexed family $(s_x(x_X, x_Y))_{x \in X}$. Then we can define its iterates and the fixpoint as follows, for $x \in X$:
   \begin{equation}\label{eq:fixpointeq}
   \begin{aligned}
   	& r_x^{(0)}(x_Y) = 0 \in \fps{\Rsemiring}{Y}
   	& r^{(n+1)}_x(x_Y) =  s_x(r_X(x_Y), x_Y)\\
   	& (\fix s_X)_x(x_Y) = \sup_n r^{(n)}_x(x_Y)
   \end{aligned}
   \end{equation} 
	From this, it is clear that the power series $r_X= \fix f$ are the minimal solution of the infinite family of equations: $(r_x = s_x(r_x, x_Y))_{x \in X}$.

Using the cartesian closed structure and the fixpoints, any term 
$\Gamma\vdash t:T$
 of $\LY$ yields in $\Qrelkleisli{\Rsemiring}$
a family of fps $\model{t}^{\Rsemiring}\in \fps{\Rsemiring}{\model{\Gamma}}^{\model{T}}$.
To interpret $\PLY$, we restrict our attention to semirings of the form $\Rsemiring=\fps{\Rinf}{\{z\}+\Sigma}$, where $z$ denotes a distinguished variable. This allows us to interpret probabilistic choice (with bias $p$) as 
\[
\model{M\oplus_p N}^{\Rsemiring}=p z\cdot\model{M}^{\Rsemiring}+(1-p) z\cdot\model{N}^{\Rsemiring}.
\]
As shown by Proposition \ref{prop:proba} below, the variable $z$ plays the role of a \emph{counter} for each probabilistic choice: each reduction $t\redbigp{\sigma, p}e$ will produce a monomial $pz^{|\sigma|}$ in the semantics.
%
%
%


\begin{remark}\label{rem:tropical}
\cite{DBLP:journals/corr/abs-2501-15637} considers the interpretation of
\emph{parametric} choices  $M\oplus_x N$, i.e.~choices according to some unknown bias $x$, by taking the semiring 
  $\Qsemiring=\fps{\Rsemiring}{x,\overline x}$ and letting $\model{M\oplus_p N}^{\Qsemiring}=x\cdot M+\overline{x}\cdot N$. \end{remark}

%


We now see how any PHORS $\gphors=(\nonterm,\C D,S)$ produces a 
morphism $\model{\gphors}^{\fps{\Rinf}{z}}$ (i.e.~$\model{t_{\gphors}}^{\fps{\Rinf}{z}}$) in
$\Qrelkleisli{\fps{\Rinf}{z}}(\model\nonterm,\model\nonterm)\equiv \fps{(\fps{\Rinf}{z})}{\model\nonterm}^{\model\nonterm}$, i.e.~a $\model{\nonterm}$-indexed family of power series in the variables $x_{\model{\nonterm}}$.
All this leads to the following definitions:
\begin{definition}[PGF of a PHORS]
For every PHORS $\gphors=(\nonterm, \C R,S)$ and non-terminal $L_i:T_1\to\dots\to T_n\to o\in\nonterm$, letting $\Sigma=\model{T_1}+\dots+\model{T_n}$, we define:
\[
a_{L_i}(z)(x_{\Sigma}):=\pi_i\left(
\mathsf{fix}_{\nonterm}\model{{\gphors}}^{\fps{\Rinf}{z}}\right)\in \fps{(\fps{\Rinf}{z})}{\Sigma}.
\]  
In particular, the fps $a_{\gphors}(z):=a_S(z)\in \fps{\Rinf}{z}$ is called
the
 \emph{probabilistic generating function of $\gphors$}.

\end{definition}
Observe that the fps $a_{L_i}(z)(x_\Sigma)$ are the minimal solutions of the equational system $(r_x=(\model{\gphors}^{\Rsemiring})_x(r_x,z))_{x\in \model\nonterm}$ induced by $\gphors$. 

The generating function $a_\gphors(z)$ precisely captures the call-by-name probabilistic execution of closed terms, as stated below:
\begin{proposition}\label{prop:proba}
For any PHORS $\gphors$,
$a_{\gphors}(z)=\sum_{i=0}^{\infty}\mathbb P(\gphors\downarrow_i)z^i$
where $\mathbb P(\gphors\downarrow_i)=\sum_{R:t\to^i e}w(R)$ is the probability that $S$ terminates after exactly $i$ probabilistic steps. In particular, $a_\gphors(1)=\mathbb P(\gphors\downarrow)$. 
\end{proposition}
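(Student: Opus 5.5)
The plan is to prove the statement as an instance of adequacy of the weighted relational model for the probabilistic $\LY$-calculus, and then specialise it to the term $t_\gphors$. We work in the semiring $\Rsemiring=\fps{\Rinf}{z}$. For every closed term $M:\1$ we define the \emph{operational generating function}
\[
\mathrm{op}(M)(z):=\sum_{i}\Bigl(\sum_{R:M\to^i e} w(R)\Bigr) z^i\in \fps{\Rinf}{z},
\]
where $i$ counts the probabilistic steps of $R$. We then show $\model{M}^{\Rsemiring}=\mathrm{op}(M)$. Since $\model{\1}=1$, the denotation of a closed ground term is a single element of $\Rsemiring$. The statement follows by taking $M$ to be the $\PLY$ encoding of $S$, i.e.~the projection on $S$ of $Yt_\gphors$, whose denotation is $a_S(z)$ by definition. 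The claim $a_\gphors(1)=\mathbb P(\gphors\downarrow)$ then follows because evaluating at $z=1$ is a continuous semiring map $\fps{\Rinf}{z}\to\Rinf$.

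\textbf{Soundness, $\mathrm{op}(M)\le \model{M}$.} First prove invariance of the semantics under one step of head reduction. For $\beta$ and $Y$ steps, $M\redbigp{\epsilon,1}N$ gives $\model M=\model N$. This uses the cartesian closed structure and the fixpoint equation $\fix f = f(\fix f)$ from \eqref{eq:fixpointeq}, together with the substitution lemma, which is composition in $\Qrelkleisli{\Rsemiring}$. For a choice at head position we have
\[
\model{M\oplus_p N}=pz\model M+(1-p)z\model N .
\]
Unfolding this along the head position shows that the multiplicative factor contributed by a choice step is exactly $p_k z$. Induction on the length of a terminating reduction $R$ then shows that each $R$ contributes the monomial $w(R)z^{|\sigma|}$ to $\model M$. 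Distinct reductions branch at a first differing choice, where they fall into different summands of the displayed decomposition, so their contributions add and are not double counted. Summing over all $R$ gives $\mathrm{op}(M)\le \model M$.

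\textbf{Completeness, $\model{M}\le \mathrm{op}(M)$.} This is the main obstacle, since $\model M$ is defined as a supremum of Kleene approximants and might a priori contain mass not realised by any terminating reduction. I would proceed in two steps.
\begin{itemize}
\item Write $Y$ as the supremum of its finite approximants $Y_n$, with $Y_0$ the divergent term $\Omega$ interpreted as $0$. By continuity of composition, $\model M=\sup_n\model{M_n}$, where $M_n$ replaces each $Y$ by $Y_n$.
\item Each $M_n$ is a term of the probabilistic simply typed $\lambda$-calculus with $\Omega$, which is strongly normalising up to the probabilistic tree. For such terms, equality $\model{M_n}=\mathrm{op}(M_n)$ follows by the invariance above together with induction on the (finite) reduction tree, using $\model\Omega=0$ and $\model e=1$.
\end{itemize}
It remains to observe that $\mathrm{op}(M_n)\le \mathrm{op}(M)$, since every terminating reduction of $M_n$ lifts to one of $M$ with the same weight and choice count. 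As an alternative to this syntactic approximation argument, one could use a logical relation between denotations and terms, indexed by types, proved by induction on typing, with admissibility of the relation under directed sups handling $Y$.

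Finally, we transfer the result to PHORS. This uses the one-to-one correspondence between reductions $S\redbigp{\sigma,p}e$ of $\gphors$ and call-by-name reductions of the encoding in $\PLY$ stated earlier. Under this correspondence, $|\sigma|$ counts exactly the $\oplus$-steps, which are the only steps weighted by $z$; $\beta$ and $Y$ steps carry weight $1$ and no $z$. This yields $a_\gphors(z)=\sum_i\mathbb P(\gphors\downarrow_i)z^i$.
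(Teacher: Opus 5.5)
Your proof is correct, but it takes a different route from the paper. The paper's proof is a one-line appeal to the adequacy of the weighted relational model from the literature it cites (Laird et al.; Ehrhard, Pagani and Tasson; and the power-series presentation of the model). Instantiated at the coefficient semiring $\fps{\Rinf}{z}$ with weights $pz$ and $(1-p)z$, that theorem says each terminating reduction $S\redbigp{\sigma,p}e$ contributes exactly one monomial $pz^{|\sigma|}$, and nothing else contributes.

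You instead reprove adequacy for this calculus by splitting it into two inequalities. Soundness comes from invariance under $\beta$, $Y$ and projection steps together with the decomposition $\model{M\oplus_p N}=pz\model M+(1-p)z\model N$, which is legitimate because a ground choice can only fire at the root of a closed term of type $\1$. Completeness comes from the finite unfoldings $Y_n$, Scott-continuity of the model, finiteness of the reduction trees of the $Y$-free approximants, and a lifting of reductions from $M_n$ to $M$.

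The paper's route is shorter, but its phrase ``precisely adds up one monomial'' folds both directions into the citation. It also leaves unchecked that results stated for weighted PCF or probabilistic coherence spaces cover this calculus with products and $z$-marked choices. Your route is self-contained and shows exactly where minimality of the fixpoint is used: completeness.

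The one step you should spell out is the lifting $\mathrm{op}(M_n)\le\mathrm{op}(M)$. The invariant ``$M'$ with every $Y$ replaced by $Y_n$'' is not preserved by reduction: after $Y_nF\to F(Y_{n-1}F)$, different occurrences carry different indices. You therefore need a simulation relation $\preceq$ that is closed under substitution, with $Y_k\preceq Y$ for all $k$ and $\Omega\preceq N$ for every $N$. Under it, a $\beta$-step on $Y_kF$ is matched by the $Y$-step on $YF'$, and $\Omega$-headed terms have nothing to lift. The lifted reduction then has the same choice sequence $\sigma$, which also makes the lifting injective.
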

\begin{proof}
From \cite{DBLP:conf/lics/LairdMMP13, DBLP:journals/jacm/EhrhardPT18,DBLP:journals/corr/abs-2501-15637} we know that each reduction $S\redbigp{\sigma,p}e$ precisely adds up one monomial $pz^{|\sigma|}$, hence $a_{|\sigma|}z^{|\sigma|}$ in $a_\gphors(z)$ is the sum of the probabilities of all reductions of length $|\sigma|$.
\end{proof}
From Proposition \ref{prop:proba} we also deduce that the \emph{derivative} of $a_\gphors(z)$ captures the expected number of steps to termination:

\begin{corollary}\label{cor:expected}
$a'_\gphors(1)=\sum_{i=1}^{\infty} i\cdot \mathbb P(\gphors\downarrow_i)=\mathbb E(\gphors\downarrow)$.

\end{corollary}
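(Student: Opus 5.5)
The plan is to differentiate the power series of Proposition~\ref{prop:proba} term by term and then evaluate at $z=1$. Proposition~\ref{prop:proba} gives $a_\gphors(z)=\sum_{i=0}^{\infty}\mathbb P(\gphors\downarrow_i)z^i$. This is a formal power series with coefficients in $\Rinf$, so its formal derivative is
\[
a'_\gphors(z)=\sum_{i=1}^{\infty} i\,\mathbb P(\gphors\downarrow_i)\,z^{i-1}.
\]
This definition of the derivative is purely coefficientwise, so no convergence issue arises at the formal level. Evaluating at the point $z=1$ in the sense of composition of fps over the continuous semiring $\Rinf$, the value is the supremum of the partial sums. That supremum is exactly $\sum_{i\geq 1} i\,\mathbb P(\gphors\downarrow_i)$, possibly equal to $+\infty$. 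This proves the first equality. (The paper writes $z^{n+1}$ in its definition of the derivative; at $z=1$ the exponent is irrelevant, so either convention gives the same value.)

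For the second equality I unfold the definition of $\ExpTime{\gphors}$. Since the unique terminal is $e$, the only normal form reached by a terminating reduction is $e$. Hence $\ExpTime{\gphors}=\sum_{R:S\redbigp{\sigma,p}e}|\sigma|\,p$. I then regroup this sum of non-negative terms according to the number $i=|\sigma|$ of probabilistic steps. Rearranging is legitimate because all terms are non-negative and the sums are taken in $\Rinf$, where arbitrary sums are suprema of finite partial sums and therefore independent of the order. This gives $\sum_i i\sum_{R:|\sigma|=i}p=\sum_i i\,\mathbb P(\gphors\downarrow_i)$, using the definition of $\mathbb P(\gphors\downarrow_i)$ from Proposition~\ref{prop:proba}.

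The main obstacle I anticipate is matching what $|\sigma|$ counts. In $\ExpTime{\gphors}$, $|\sigma|$ counts the annotated choice steps, whereas $\beta$- and $Y$-steps carry the label $\epsilon$. In $\mathbb P(\gphors\downarrow_i)$, the index $i$ counts the powers of $z$, and each $\oplus_p$ contributes exactly one factor $z$. I would state explicitly that both quantities count the same thing, namely probabilistic choice steps, and that this is precisely the correspondence established in the proof of Proposition~\ref{prop:proba}, where each reduction contributes the monomial $pz^{|\sigma|}$. In the PHORS presentation every rewriting step of a non-terminal is itself a choice, so "reduction steps" and "choices" agree there, consistent with the $\PLY$ translation. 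The remaining checks are the continuity arguments, which are routine.
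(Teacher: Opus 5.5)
Your proposal is correct and follows the paper's route. The paper deduces the corollary directly from Proposition~\ref{prop:proba} by taking the formal derivative of $a_\gphors(z)$ and evaluating it at $z=1$ in $\Rinf$; your regrouping of $\ExpTime{\gphors}$ by the number of choice steps, with $e$ as the only normal form, spells out the bookkeeping that the paper leaves implicit.
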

%


%

\begin{example}\label{ex:phors2}
Consider the PHORS from Example \ref{ex:phors0}, with $t_F=F(Fx)\oplus_{\frac{1}{2}}x$ and $t_S=Fe$.
%
The interpretation in $\Qrelkleisli{\fps{\Rinf}{z}}$ 
 of $\lambda \langle F,S\rangle. t_F:((\1\to \1)\times 1)\to (\1\to \1)$ is a $\N$-indexed family of fps 
$s^F_i(y_{\N+1})\in \fps{(\fps{\Rinf}{z})}{\N +1}$, given, for $i\in \N$, and recalling $b_i(y_{\N})$ from Example \ref{ex:churchtwo}, by
\[
s^F_i(y_{\N+1})=
\begin{cases}
\frac{1}{2}zb_1(y_{\N})+\frac{1}{2}z=\frac{1}{2}(z
y_1^2+z)
& \text{ if }i=1,\\
\frac{1}{2}zb_i(y_{\N})
&\text{ if }i\in\N, i\neq 1,
\end{cases}
\]
The interpretation of $\lambda \langle F,S\rangle. t_S:((\1\to \1)\times 1)\to \1$ is a unique fps 
$s^S(y_{\N +1})\in \fps{\Rinf}{\N +1}$ given by $s^S(y_{\N +1})=\sum_{i\in \N} y_i$. 
 
Computing $ \mathsf{fix}\  \langle s^F,s^S \rangle$ means finding a minimal solution $(a_i(z))_{i\in \N+1}\in(\fps{\Rinf}{z})^{\N+1}$ of the fixpoint equations $a_{i}(z)=s^F_i(a_{\N+1}(z))$ ($i\in\N$) and $a_{\gphors}(z)=s^S(a_{\N +1}(z))$.
One can easily see that this yields $a_{i}(z)=0$, for $i\neq 1,\star$, while $a_\gphors(z):=a_\star(z)=a_1(z)\in \Rinf$ can be found as minimal solution of the polynomial equation
\begin{equation}
\label{eq:alg1}
a_\gphors(z)=\frac{1}{2} \left (za_\gphors^2(z)+z\right)
\end{equation}
%
%
We will see in the next section that this solution is given by the series
$a_\gphors(z)=\sum_{i=0}^{\infty}\frac{C_i}{2^{2i+1}} z^{2i+1}$, where $C_i$ is the $i$-th Catalan number.
\end{example}

 For order-1 PHORS, we have the following useful lemma:
\begin{lemma}\label{lemma:affine}
For all order-1 PHORS $\gphors=(\nonterm, \C R,S)$ and non-terminal symbol $L\in \nonterm$, the fps $a_L\in \fps{(\fps{\Rsemiring}{z})}{x_1,\dots, x_n}$ is \emph{affine}:
%
 there exist $w_0,w_1,\dots, w_n\in\fps{\Rinf}{z}$ such that
\[
a_{L}(z)(x_1,\dots, x_n)
=w_0(z)+w_1(z)x_1+\dots+w_n(z)x_n,
\]
where $w_0(1)=\mathbb P[L\vec x\to^* e]$ and
$w_{i+1}(1)=\mathbb P[L\vec x\to^* x_{i+1}]$.
\end{lemma}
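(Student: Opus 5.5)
The plan is to prove affinity by induction on the Kleene approximants of the fixpoint in \eqref{eq:fixpointeq}. Since the order of $\gphors$ is 1, every non-terminal has type $L:o\to\dots\to o\to o$. Hence $\model{T_i}=\model{o}=1$, so $\Sigma=\{1,\dots,n\}$ is finite. Moreover $\model{\nonterm}$ is the disjoint sum over $L\in\nonterm$ of the sets $!\{1,\dots,n_L\}\times 1$, i.e.~of the multisets $\mu$ over the argument positions of $L$. Under this identification, the family $a_L(z)(x_1,\dots,x_n)$ is the fps whose coefficient on the monomial $x^\mu$ is the component $(\mathsf{fix}\,\model{\gphors})_{(L,\mu)}$. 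Affinity therefore amounts to showing that this component vanishes whenever $|\mu|\ge 2$.

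\textbf{Step 1 (preservation of affinity by the body).} I will show that if each non-terminal $L'$ is interpreted by an affine fps $w'_0+\sum_j w'_jx_j$, then $\model{t_L}$ is again affine in $x_1,\dots,x_n$. The body $t_L$ is an order-0 term built from $e$, the variables $x_i$, $\oplus_p$, and applications $L'u_1\dots u_m$ in which each $u_j:o$. The proof is by structural induction on $t_L$:
\begin{itemize}
\item $e$ is interpreted by the constant $1$, which is affine.
\item $x_i$ is interpreted by $x_i$, which is affine.
\item $\oplus_p$ is interpreted by $pz\,\model{M}+(1-p)z\,\model{N}$, and affine fps are closed under linear combinations with coefficients in $\fps{\Rinf}{z}$.
\item An application $L'u_1\dots u_m$ is interpreted by the composition $w'_0+\sum_j w'_j\model{u_j}$. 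Since each $\model{u_j}$ is affine, a linear combination of them plus a constant is affine.
\end{itemize}
The key point is that substituting affine series into an affine series yields an affine series. There are no products of two interpretations of arguments, because first-order non-terminals take only ground arguments and the semantics of application at ground type is exactly the composition of power series.

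\textbf{Step 2 (fixpoint).} The approximant $r^{(0)}=0$ is affine. By Step 1, if $r^{(k)}$ is affine for every non-terminal, then so is $r^{(k+1)}$. The supremum of affine fps is affine, since the coefficients on monomials of degree $\ge 2$ stay $0$ and suprema are computed coefficientwise. This yields the $w_i(z)\in\fps{\Rinf}{z}$. One can sanity-check this against Example~\ref{ex:phors2}: $F(Fx)$ gives $a_1(a_1 x)$, which is linear in $x$, and the variables $y_1^2$ appearing there are second-order unknowns, not powers of $x$.

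\textbf{Step 3 (probabilistic meaning).} To interpret the coefficients, I will apply Proposition~\ref{prop:proba}, which relates monomials in the semantics to terminating reductions. For this, I extend the PHORS with fresh terminal constants $x_1,\dots,x_n:o$, treated like $e$, and consider the closed term $L\vec x$. The same soundness argument from \cite{DBLP:conf/lics/LairdMMP13,DBLP:journals/jacm/EhrhardPT18} shows that each reduction $L\vec x\redbigp{\sigma,p}x_i$ contributes exactly the monomial $pz^{|\sigma|}x_i$, and each reduction to $e$ contributes $pz^{|\sigma|}$. Evaluating at $z=1$ then gives $w_0(1)=\mathbb P[L\vec x\to^*e]$ and $w_i(1)=\mathbb P[L\vec x\to^*x_i]$.

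The main obstacle is this last step. Proposition~\ref{prop:proba} was stated only for closed ground programs terminating in $e$, so I will need to justify that it extends to terms with free ground variables regarded as terminals, where the reduction ends in a head variable. My first attempt would be to reduce to the closed case: substitute $x_i:=e$ and all other $x_j:=\Omega$, which turns $w_i(1)$ into a termination probability covered by Proposition~\ref{prop:proba}. For this to work, $\model{\Omega}=0$ must kill the other coefficients, which requires checking that the semantics is compositional under this substitution.
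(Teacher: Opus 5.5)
Your proof is correct, but for the affinity part it takes a different route from the paper's.

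\textbf{The paper's route.} The paper argues operationally. In the appendix it uses non-idempotent intersection types: a ground term whose free variables $x_1,\dots,x_n$ are all ground can only have $e$ or some $x_i$ as head normal form. Hence every typing derivation has either the empty context or the context $x_i:[o]$. Since monomials of $\model{t}$ correspond to such derivations, only monomials of degree at most $1$ survive. The same soundness result is then invoked, not proved, for the probabilistic reading of $w_0,w_i$.

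\textbf{Your route.} You find a semantic invariant on the Kleene approximants of \eqref{eq:fixpointeq}. Order-1 bodies only pass ground arguments, so application amounts to substituting affine series into affine series, which stays affine. Affinity also survives the coefficientwise supremum.

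\textbf{What each buys.} Your argument is more elementary and stays entirely at the level of the fixpoint equations, which is exactly what the subsequent proposition (order-1 PHORS are algebraic) consumes. The paper's argument is more general: it applies to any $\LY$/$\PLY$ term of ground type in ground free variables, whatever the order of its internal subterms. It also explains the phenomenon: a ground variable reaches head position at most once, and then the reduction stops. Your invariant depends on every application in the body being at ground type. For Step~3, your first option (treat the $x_i$ as extra terminals and reuse the soundness argument) is precisely what the paper does.

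\textbf{A slip in your fallback for Step~3.} Substituting $x_i:=e$ and $x_j:=\Omega$ for $j\neq i$ yields $w_0(z)+w_i(z)$, not $w_i(z)$. The reductions that already ended in $e$ still contribute. Operationally, the substituted term terminates exactly via the reductions of $L\vec x$ to $e$ or to $x_i$, so you obtain $\mathbb P[L\vec x\to^* e]+\mathbb P[L\vec x\to^* x_i]$. To isolate $w_i$, also substitute $\Omega$ for every variable to get $w_0$, and subtract. This is legitimate coefficientwise in $z$, since all these coefficients are probabilities bounded by $1$. The compositionality you worry about is just the substitution lemma $\model{t[u/x]}=\model{t}\circ\langle \mathrm{id},\model{u}\rangle$ of the cartesian closed model, together with $\model{\Omega}=0$, so it is not a real obstacle.
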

This result translates the fact that in a reduction of $Lx_1\dots x_n$ to normal form, a ground variable $x_i:\1$ can occur in head position \emph{at most once}: as soon as it does, we must have $Lx_1\dots x_n\to^* x$, that is, the reduction has terminated.

%

\section{Algebraic Power Series}

  As we saw, power series are infinitary objects and their manipulation involves the concept of limit, making them generally uncomputable. In this section, we explore a way to provide  \emph{finitary} specifications for power series through the concept of algebraicity. 
  Algebraic power series are widely used in algebra and combinatorics, in the context of rings or fields. Their treatment in semirings is less standard, but extensively studied in the context of formal language theory.
    We will show that, through the weighted relational semantics, such ideas apply naturally to  probabilistic $\lambda$-calculi.

\subsection{Fixpoint Algebraic Systems}
  
In the previous section we have seen that PHORS can be associated with formal power series defined as the minimal solution of a (generally infinite) equational system. In practice, computing such power series directly can be very hard; recall, by the way, that such fps capture properties like AST or PAST, which are undecidable in general. However, when the equational system defining a family of power series can be expressed by \emph{finitely many polynomial equations}, it becomes possible to extract information about such power series
as well as the sequences of their coefficients.

As a famous example, consider again the fps $c(x)= \sum_{n \geq 0} C_n x^{n} \in \fps{\R}{x}$. $c(x)$ is the Taylor expansion around $0$ of the (analytic) function $\frac{1-\sqrt{1-4x}}{2x}$. From this, it is easy to see, taking $p(w, x)= xw^2 - w + 1$, that $p(c(x), x)=0$. Observe that, while computing e.g.~$c(1/4) = \sum_{n \geq 0} \frac{1}{4^n(n+1)} \binom{2n}{n}$ as a limit is hard, we can easily deduce that $c(1/4)=1$ is the (only) positive root of $p(w,1/4)$.

\begin{example}\label{ex:phors4}
Equation \eqref{eq:alg1} from Example \ref{ex:phors2} can be rewritten as $p(a_\gphors(z),z)=0$, where $p(w,z)=\frac{1}{2}zw^2-w+\frac{1}{2}z$, which gives the solution $a_\gphors(z)=\frac{1-\sqrt{1-z^2}}{z}$.
Using the fps $c(x)$ from above, we deduce, with $x=\left(\frac{z}{2}\right)^2$, 
$a_\gphors(z)=\frac{1-\sqrt{1-4z}}{2z}z=c(z)\frac{z}{2}=\sum_{i=0}^{\infty}\frac{C_i}{2^{2i+1}}z^{2i+1}$.
The interpretation of this is that, for any $i\in \N$, there are $C_i$ terminating reductions that make $2i+1$ probabilistic choices. On the other hand, there is no terminating path of even length, as there is no random walk going from 1 to 0 in an even number of steps.
Now, as $1$ is the smallest root of $p(w,1)$, we easily obtain $\mathbb P[\gphors\downarrow]=a_{\gphors}(1)=1$, that is, that $\gphors$ is AST; moreover, $\mathbb E[S\downarrow]$ is given by the diverging series $a'_{\gphors}(1)=\sum_{i=0}^{\infty}\frac{2i+1}{2^{2i+1}}C_i=
\infty$, that is, PAST fails.
\end{example}

  Before considering algebraic power series on a (continuous) semiring, let us recall the case of rings, indeed the one most customary in combinatorics. We assume our rings to be integral domains.
  \begin{definition}\label{def:fas1}
  	Let $R$ be a ring  and let $R'\leq R$ a subring.
  	A tuple $(s_1, \dots s_n)$ of formal power series $s_i \in \fps{R}{\Sigma}$ is an $R'$-\emph{algebraic family} if there exist polynomials $p_i \in \fpp{R'}{w_1, \dots w_n, \Sigma}$ such that:
  	\begin{equation*}
  	\begin{cases}
  		p_1(s_1, \dots s_n, x_\Sigma)=0\\
  		\dots\\
  		p_n(s_1, \dots s_n, x_\Sigma)=0\\
  	\end{cases}
  	\end{equation*}
  	 When $n=1$, an $R'$-algebraic family $(s)$ is simply called a 
	 \emph{$R'$-algebraic fps} and the polynomial $p_1$ is called an \emph{annihilating polynomial} of $s$.
%
  \end{definition}

  \begin{example}
		Let $s \in \fps{\R}{x}$ be $\sum_{n \geq 0}x^n$. Taking the polynomial $p(w,x) = w(1-x)-1$, we get $p(s(x), x)=0$. Hence, $s$ is algebraic over $\Q$. Indeed, $s$ is the multiplicative inverse $(1-x)^{-1}$ of $(1-x)$ in $\fps{\R}{x}$. In general, every $\Q$-rational function (i.e every power series of the form $r_1(x_\Sigma)r_2(x_\Sigma)^{-1}$ for $r_1, r_2 \in \fps{\Q}{\Sigma}$) is algebraic over $\Q$. 
  \end{example}	
  \begin{remark}\label{rem:hadamard}
  The non-zero coefficients of an algebraic power series cannot be ``too far'': if $a(x)=\sum_na_n x^{p_n}$ is algebraic, then $\lim_n p_n/n<\infty$: this is a consequence of \emph{Fabry's gap theorem}, which asserts that, when $\lim_n p_n/n$ diverges, $a$ cannot be analytically continued on any point of its circle of convergence.
%
%
  For instance, no series of the form $\sum_n a_n x^{n^2},\sum_n a_n x^{n^3}, \sum_n a_nx^{2^n}$ is algebraic.
  \end{remark}

  \begin{remark}
  	\label{derivative-rational}
  	If $s\in \fps{\R}{x}$ is algebraic with an annihilating polynomial $p(w, x_\Sigma)$,
also its derivative $s'\in \fps{\R}{x}$ is algebraic. Moreover, one can find effectively (\cite{CHUDNOVSKY1986271}, \cite{DBLP:series/tmsc/KauersP11}) a rational function $r(x,y)$ such that $s'(x)$ can be computed from $s$ via 
 $s'(x) = r(x, s(x)) $.
  \end{remark}
    \begin{remark}
  	\label{coefficients-rational}
	The coefficients $a_n$ of a unary algebraic fps
$a(x)=\sum_n a_n x^n$ whose annihilating polynomial $p(w, x)$ satisfies $p(0,0)= 0$, $\partial_wp(0,0)\neq 0$ and $p(0,z)\neq 0$ can be computed directly via a ``multinomial formula'' \cite{BANDERIER_DRMOTA_2015}.
More generally, since all algebraic functions are \emph{D-finite} \cite{DBLP:series/tmsc/KauersP11}, one can deduce a linear recursive equation with polynomial coefficients 
$p_0(n)a_n+p_1(n)a_{n+1}+\dots +p_r(n)a_{n+r}=0$ that can be used to
compute the $a_n$ effectively.
   \end{remark}

  The theory of algebraic power series should be stated in a slightly different way in the context of semirings: if $a \in \fps{\Rsemiring}{\Sigma}, a > 0$, then for each $k$ $a^k > 0$, hence it is not possible that $p(a, x_\Sigma)=0$ for some $p \in \fpp{\Rsemiring}{\Sigma}$. Definition \ref{def:fas1} must thus be adapted as follows:
     \begin{definition}[\cite{DBLP:books/sp/KuichS86}]
     	\label{def:fas2}
	Let $\Rsemiring$ be a semiring and let $\Rsemiring'\leq \Rsemiring$ be a subsemiring. A  $\Rsemiring'$-\emph{fixpoint algebraic system} (in short $\Rsemiring'$-\emph{FAS}) with parameters $x_\Sigma$ is a system of $n$ equations:
	 \begin{equation}
		\label{fixpointsystem}
		\begin{cases}
			w_1 = p_1(w_1, \dots w_n, x_\Sigma)\\
			\dots\\
			w_n = p_n(s_1, \dots w_n, x_\Sigma)\\
		\end{cases}
		\end{equation}
	A family $(s_1, \dots s_n)$ of formal power series $s_i \in \fps{\Rsemiring}{\Sigma}$ that is the minimal solution of a $\Rsemiring'$-FAS is called a $\Rsemiring'$-\emph{fixpoint algebraic family}. 
	If the $p_i$ are such that $p_i(0,0)=0$ and, for each $j$, the coefficient of the monomial $w_j$ in $p_i$ is $0$, we say that the system is \emph{proper}.
\end{definition}
	The theory of systems like \eqref{fixpointsystem} has been extensively studied (\cite{DBLP:books/sp/KuichS86}, \cite{DBLP:journals/cpc/BanderierD15}) in the context of combinatorics and formal language theory: in particular, it can be shown that they admit a (unique) minimal solution, that can be obtained by iterating the polynomials $p_1 \dots p_n$. In the following we will be in general be interested in ${\Rinf}$-power series algebraic over $\Q_{\geq 0}$: in this case, the polynomial equations $w_i= p_i, \: p \in \fpp{\Qpos}{\Sigma}$ can be rewritten as $p'_i \bydef p_i- w_i = 0, \: p \in \fpp{\Qpos}{\Sigma}$. Rather than speaking of the minimal solution of \eqref{fixpointsystem} as a system with coefficients in $\Qpos$, we can then speak of the minimal non-negative solution $(s_1, \dots s_n)$  of the system $(p'_i=0)_{1 \leq i \leq n}$ with coefficients over $\mathbb{Q}$.
	In this case, if the FAS is proper, even more is true: every $s_i$ part of a $\Q^+$-algebraic family is algebraic:
 	\begin{theorem}[Variable Elimination, \cite{DBLP:books/sp/KuichS86}, Thms.~16.9, 16.10]
 		\label{th:elimination}
 		Let $F$ be a proper $\Q_{\geq 0}$-FAS consisting of $n$ equations with parameters $x_\Sigma$ and let $(s_1, \dots s_n)$ be its minimal solution. From $F$ we can  compute irreducible polynomials $(q_i(w' , x_\Sigma))_{1 \leq i \leq n}$ such that:
 		$q_i(s_i(x_\Sigma), x_\Sigma)=0$, for all $1 \leq i \leq n$, 
 		that is, every $s_i$ is a $\Q$-algebraic power series.
 	\end{theorem}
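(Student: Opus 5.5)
We will prove the Variable Elimination theorem in two stages: first show that the minimal solution of a proper $\Q_{\geq 0}$-FAS is the unique solution in a suitable sense, then eliminate variables algebraically using resultants (or Gröbner bases) over the field of rational functions $\Q(x_\Sigma)$.

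\emph{Stage 1: the minimal solution is a genuine formal solution.} Since the system is proper ($p_i(0,0)=0$ and no linear monomial $w_j$ appears alone), the Kleene iterates $s^{(0)}=0$, $s^{(k+1)}=p(s^{(k)},x_\Sigma)$ have the property that $s^{(k+1)}$ and $s^{(k)}$ agree on all monomials of total degree $\leq k$ in $x_\Sigma$. Hence the iterates converge in the $x_\Sigma$-adic topology, and the limit coincides with the supremum defining the minimal solution. It is the \emph{unique} solution in $\fps{\Q}{\Sigma}^n$ with zero constant term; this is the standard contraction argument in the adic metric. In particular, $s_i\in\fps{\Q_{\geq 0}}{\Sigma}$ with rational coefficients, and it satisfies $p'_i(s,x_\Sigma):=p_i(s,x_\Sigma)-s_i=0$ over the ring $\fps{\Q}{\Sigma}$.

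\emph{Stage 2: elimination.} Consider the field $K=\Q(x_\Sigma)$ together with its extension $L$ generated by $s_1,\dots,s_n$ inside the fraction field of $\fps{\Q}{\Sigma}$. The point $(s_1,\dots,s_n)$ is a zero of the ideal $I=(p'_1,\dots,p'_n)\subseteq K[w_1,\dots,w_n]$. We need $L/K$ to be algebraic, i.e.\ transcendence degree $0$. For this we use the Jacobian: $J=\det(\partial p'_i/\partial w_j)$ evaluated at $w=s$ has constant term $\det(-\mathrm{Id})=\pm1$, because properness makes $\partial p_i/\partial w_j$ vanish at $(0,0)$. So $J(s)\neq 0$, and by the Jacobian criterion $s$ is a nonsingular isolated point of the variety $V(I)$ over $\overline K$. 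Therefore the irreducible component of $V(I)$ containing $s$ is zero-dimensional, and each $s_i$ is algebraic over $K$. Clearing denominators gives a nonzero $q_i(w',x_\Sigma)\in\Q[w',x_\Sigma]$ with $q_i(s_i,x_\Sigma)=0$, and we take $q_i$ to be the irreducible factor vanishing at $s_i$. Since $\fps{\Q}{\Sigma}$ is an integral domain, some irreducible factor must vanish at $s_i$.

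\emph{Effectiveness.} To make the construction effective, we compute a Gröbner basis of $I+(J\cdot t-1)$ with an elimination order retaining only $w_i$ and $x_\Sigma$ (the Rabinowitsch trick localizes at the nonsingular component). Alternatively, we compute iterated resultants and then factor over $\Q$. To select the correct irreducible factor, we check which factor vanishes on a truncation of $s_i$ to sufficiently high order. This test is decidable because, by the Stage 1 uniqueness argument applied to a single polynomial, a factor $q$ with $q(s_i)\ne0$ has nonzero order bounded by degree data.

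The main obstacle is the step from ``isolated nonsingular point'' to ``coordinates algebraic over $K$''. One has to argue carefully that $s$, a point with coordinates in a formal power series ring rather than in $\overline K$, nonetheless lies on a zero-dimensional component. The approach we would try first is the formal implicit function theorem: any $K$-algebraic zero of $I$ near $0$ with nonvanishing Jacobian also admits a power series expansion. Uniqueness from Stage 1 then forces $s$ to equal such an algebraic zero, and the existence of that zero follows from zero-dimensionality of the localized ideal.
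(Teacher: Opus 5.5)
The paper does not prove this statement; it imports it from Kuich--Salomaa. Your argument therefore has to stand on its own.

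\textbf{What works.} Stage~1 is correct. It also supplies what Stage~2 needs: $s_i\in\fps{\Q_{\geq 0}}{\Sigma}$ and $s(0)=0$, so $J(s)$ has constant term $(-1)^n$ and is nonzero.

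\textbf{The gap.} It is exactly the step you flag as the main obstacle. The Jacobian criterion, in the form you invoke it, concerns points of $V(I)$ over $\overline K$. But $s$ only has coordinates in $L=\mathrm{Frac}(\fps{\Q}{\Sigma})$, which has huge transcendence degree over $K=\Q(x_\Sigma)$, and ``$s$ is a $\overline K$-point'' is precisely what must be shown. Your fallback is circular. The formal implicit function theorem plus Stage~1 uniqueness only helps if you already have a $K$-algebraic zero of $I$ lying in $\fps{\Q}{\Sigma}$ with zero constant term, which is the conclusion again. The ``zero-dimensionality of the localized ideal'' is asserted, not derived. As written, algebraicity of the $s_i$ is not established.

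\textbf{The missing idea.} A short derivation argument, carried out directly in $E=K(s_1,\dots,s_n)\subseteq L$, closes the gap.
\begin{itemize}
\item Suppose $\mathrm{trdeg}_K E>0$ and pick a transcendence basis among the $s_j$, containing some $t=s_{j_0}$.
\item In characteristic $0$, $E$ is separable algebraic over $K(\text{basis})$, so $\partial/\partial t$ extends to a $K$-derivation $D$ of $E$ with $D(t)=1$.
\item Apply $D$ to $p'_i(s,x_\Sigma)=0$, whose coefficients lie in $K$. This gives $\sum_j \frac{\partial p'_i}{\partial w_j}(s)\,D(s_j)=0$ for every $i$.
\item Since $J(s)\neq 0$ in $E$, all $D(s_j)=0$, contradicting $D(t)=1$.
\end{itemize}
This is just the scheme-theoretic Jacobian criterion at the prime $\ker(K[w]\to L)$, whose residue field is $E$. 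So every $s_i$ is algebraic over $K$, and your choice of the irreducible factor vanishing at $s_i$ then goes through.

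The same argument, applied to the residue field of any prime of $K[w,t]$ containing $I+(Jt-1)$, shows that all such primes are maximal. That is the zero-dimensionality your Gr\"obner-basis step needs for the elimination ideal to be nonzero.

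\textbf{A smaller point on effectiveness.} Your factor selection is right but misjustified. Stage~1's contraction argument gives no bound on the order of $q(s_i)$ when $q(s_i)\neq 0$. Use resultants instead. Let $m$ be the factor vanishing at $s_i$ and $q\neq m$ another irreducible factor of positive $w'$-degree. Then $\mathrm{Res}_{w'}(q,m)=Aq+Bm$ is a nonzero polynomial in $x_\Sigma$ that equals $A(s_i)\,q(s_i)$. Hence $\mathrm{ord}\,q(s_i)\leq\deg\mathrm{Res}_{w'}(q,m)$. The maximum of these degrees over all pairs of factors is a computable truncation order, and the needed truncations of $s_i$ come from the Kleene iterates.
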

 	\begin{remark}
 		The definition of fixpoint algebraic family consists of two conditions: the purely algebraic fact of being solution of a FAS and the minimality with respect to the order relation. For \emph{proper} FAS, the minimal solution $(s_1(z), \dots s_n(z)) $ is also the unique solution in a neighbourood of $z=0$ \cite{DBLP:books/sp/KuichS86}. Unfortunately, the algebraic functions $s_1(z), \dots s_n(z)$ might have singularities over the disk $|z| \leq 1$, hence there might be multiple solutions of the FAS for $z=1$; to select amongst them the one corresponding to the fixpoint solution, we need to use the minimality. As an example, take the equation $w= zw$: the only solution on the open disk $|z|<1$ is the identically zero power series $w(z)= 0$, but for $z=1$ we find that any $w \in \R$ is a valid solution.
 	\end{remark}
	Over any semiring $\Rsemiring$, if $a(x),b(x)\in \fps{\Rsemiring}{x_\Sigma}$ are algebraic, then so are their sum $a(x)+b(x)$, multiplication $a(x)b(x)$, and derivative $a'(x)$ \cite{DBLP:series/tmsc/KauersP11}. If $\Rsemiring$ is continuous, the same holds of their composition $a(b(x))$ (while, as we saw in Section 4, composition is not always well-defined over arbitrary (semi)rings). 

 	\begin{remark}
 	\label{remark:semilinear}

%
		When $a(x) =  \sum_{\mu \in !\Sigma} a_\mu x^\mu \in \fps{\Rinf}{x_\Sigma}$ is part of a $\Q^+$-proper algebraic family, we can refine what we said in Remark \ref{rem:hadamard}: as a consequence to (\cite{DBLP:books/sp/KuichS86}, Thm.~16.35), the support 
 		$\supp a=\{\mu \mid a_\mu \neq 0\} $
 		is \emph{semilinear}, that is, it is
%
		the union of finitely many sets of the form $\{n_1 \mu_1 + \dots + n_k \mu _k  \mid n_1, \dots n_k \in \N \}$.
 	\end{remark}

\subsection{From PHORS to FAS}

	We are interested in the case in which the fixpoint equations that define (the interpretation of) a PHORS in $\Qrelkleisli{\Rsemiring}$ form a FAS, thus yielding, as solution, an algebraic power series. Recall that higher-order types $T\to U$ are interpreted by \emph{infinitely many} variables and, consequently, higher-order terms translate into infinitely many equations.
Our goal is to find ways to {reduce} such infinitary systems to more finitary (and manageable) ones.

	   As we saw, a PHORS $\gphors$ is interpreted via the fps $a_{L}(z)(x_\Sigma)$ interpreting each non-terminal $L\in \nonterm$; these fps are collectively defined as the minimal solutions of the family of equations
$(r_x = (t_\gphors)_x(r_x, z))_{x \in X}$
induced by the morphism $\fps{(\fps{\Rinf}{z})}{\model{\nonterm}}^{\model{\nonterm}}$ interpreting $t_\gphors:\nonterm\to \nonterm$.

When $\gphors$ is order-1, this system can \emph{always} be reduced to a FAS:
\begin{proposition}[order-1 PHORS are algebraic]
For all order-1 PHORS $\gphors=(\nonterm, \C R,S)$ and non-terminal symbol $L\in \nonterm$, the fps $a_{L}(z)(x_1,\dots, x_n)$ is $\mathbb Q$-algebraic.
\end{proposition}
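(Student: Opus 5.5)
By Lemma \ref{lemma:affine}, every non-terminal $L_j:\1\to\dots\to\1\to\1$ (of arity $n_j$) of an order-1 PHORS is interpreted by an affine fps
\[
a_{L_j}(z)(x_1,\dots,x_{n_j})=w^{j}_0(z)+w^{j}_1(z)x_1+\dots+w^{j}_{n_j}(z)x_{n_j},
\]
so it is completely determined by the finitely many unknowns $w^j_k(z)\in\fps{\Rinf}{z}$, $0\le k\le n_j$. The plan is to turn the countable fixpoint system defining $\mathsf{fix}\,\model{\gphors}$ into a finite system on these unknowns. I would then show that this system is a proper $\Qpos$-FAS in the single parameter $z$, and conclude with Theorem \ref{th:elimination}. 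Since $a_L(z)(x_1,\dots,x_n)$ is a linear combination of the $w^j_k(z)$ with coefficients in $\Q[x_1,\dots,x_n]$, and sums and products of algebraic series are algebraic, its algebraicity follows.

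\textbf{Building the finite system.} For each rule $L_jx_1\dots x_{n_j}=t_L\oplus_p t_R$, interpret the body compositionally. The term $t_L$ is an applicative term built from non-terminals, the variables $x_i$ and $e$. Substituting the affine forms $a_{L_m}$ for the non-terminals gives a polynomial in the unknowns $w^m_k$ and in $x_1,\dots,x_{n_j}$. For example, $L_m u_1\dots u_{n_m}$ becomes $w^m_0+\sum_k w^m_k\,\model{u_k}$, computed recursively, and $e$ becomes $1$. The right-hand side $pz\model{t_L}+(1-p)z\model{t_R}$ is again affine in the $x_i$, because each $\model{u_k}$ is affine and only linear combinations occur. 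Equating the coefficients of $1,x_1,\dots,x_{n_j}$ yields equations
\[
w^j_k = z\,P^j_k\bigl((w^m_l)_{m,l}\bigr),
\]
where each $P^j_k$ has coefficients in $\Qpos$.

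This finite system must have the same minimal solution as the original infinite one. To see this, I would follow the Kleene iterates in \eqref{eq:fixpointeq}. The iterates $r^{(n)}$ of the full system stay affine, since the semantics of an order-1 body applied to affine arguments is affine. Their coefficients are exactly the iterates of the finite polynomial map. Taking suprema then gives the identification, and the coefficients of the nonlinear monomials are zero throughout, as Lemma \ref{lemma:affine} asserts.

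\textbf{Main obstacle: properness.} Every right-hand side carries a factor $z$, so each $P$ has no constant term in the sense $p_i(0,0)=0$. However, it may contain a linear term $z\,w^m_l$, which is not literally a monomial $w_j$ alone. Whether this counts as proper depends on reading $z$ as a parameter: the condition only forbids the bare monomial $w_j$ with constant coefficient, and $z w_j$ has coefficient $z$. If Theorem \ref{th:elimination} requires stricter properness, I would first try the standard trick of substituting unknowns of the form $w=z\,v$, or eliminating chains of purely linear dependencies, to obtain an equivalent proper system. Failing that, I would use that the $z$-factor makes the system contracting in the $z$-adic topology. Its unique power-series solution then coincides with the minimal one, and elimination by resultants gives a nonzero annihilating polynomial $q(w,z)$ for each $w^j_k$. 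Finally, $x_1,\dots,x_n$ enter only polynomially, so the result is $\Q$-algebraic in $(z,x_1,\dots,x_n)$.
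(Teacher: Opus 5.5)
Your proposal is correct and follows the paper's route. Lemma~\ref{lemma:affine} collapses the countable fixpoint system to a finite polynomial one, and variable elimination (Theorem~\ref{th:elimination}) gives algebraicity; you use the sharper affine form ($n+1$ unknowns per non-terminal instead of the paper's coefficients with $p_i\in\{0,1\}$) and make explicit, via the Kleene iterates, what the paper's sketch leaves implicit. Your properness worry is also unfounded: since every rule has the form $t_L\oplus_p t_R$, each right-hand side has a factor $z$, and under Definition~\ref{def:fas2} $z\,w_j$ is a different monomial from $w_j$, so the system is already proper and your fallback arguments are not needed.
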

\begin{proof}
When $\gphors$ is order-1, each type $\nonterm(L_i)$ is of the form $o\to\dots \to o\to o$:
 hence $a_L(z)(x_1,\dots, x_n)$ can be written in the form $\sum_{(p_1,\dots,p_n)\in \N^n} w_{p_1,\dots, p_n}(z)x_1^{p_1}\dots x_n^{p_n}$. 
Thanks to Lemma \ref{lemma:affine} we have that, if for some $i=1,\dots, n$, $p_i\geq 2$, then $ 
w_{p_1,\dots, p_n}(z)=0$. We can thus reduce ourselves to a finite polynomial system only involving the finitely many $w_{p_1,\dots, p_n}(z)$, with the $p_i\in \{0,1\}$.
\end{proof}
Since, as we will see in \ref{th:decidability}, AST and PAST can be decided once a PHORS is interpreted as a FAS, we see that the well-known decidability of order-1 PHORS naturally follows from the simple nature of their generating functions, which are always affine.  

Beyond order-1, this argument does not work anymore, as PHORS can be truly non-linear.
We thus need to restrict ourselves to systems satisfying a suitable form of finiteness, defined below:

\begin{definition}[variable and family restrictions]
For all fps $s\in \fps{\Rsemiring}{\Sigma}$ and $\Sigma'\subset \Sigma$,
 $s\vert_{\Sigma'}\in \fps{\Rsemiring}{\Sigma}$ is the fps $t$ defined by 
$t(x_{\Sigma'},x_{\Sigma-\Sigma'})=s_\sigma(x_{\Sigma'},0)$.
%

%
 \end{definition}
 Intuitively, in $s\vert_{\Sigma'}$ we only keep those monomials $s_\mu x^\mu$ from $s$ such that $\mu$ has support included in $\Sigma'$, and we 
 ``kill'' all monomials $s_\mu x^{\mu}$ containing 
 some $x^i$, with $i>0$ and $x\notin \Sigma'$. 

	\begin{definition}[finitary fps]\label{def:finitary}
		Let $s_X\in \fps{\Rsemiring}{X+Y}^Z$ be a family of fps. A pair $(U,V)$, where $U\subseteq X$ and $W\subseteq Z$, is called:
	\begin{varitemize}
	\item \emph{stable} if for all $\sigma\in Z-W$, $s_\sigma \vert_{U+Y}=0$;
	\item \emph{polynomial} if for all $\sigma\in W$, $s_\sigma \vert_{U+Y}$ has finite support (i.e.~it is a polynomial).
	\end{varitemize}
A family $s_X\in \fps{\Rsemiring}{X+Y}^X$ is called \emph{finitary} if there exists some finite $W\subset_{\mathrm{fin}}X$ such that $(W,W)$ is stable and polynomial.	
If, moreover, for each $\sigma \in W$ and for each $y \in Y$, $s_\sigma(0)=0$ and the coefficient of $y$ in $s_\sigma$ is $0$, we say that $s_X$ is a \emph{finitary proper family}. 	
	\end{definition}

%
%

Take a family $s_X\in \fps{\Rsemiring}{X+Y}^X$ and let $A\subseteq X$ be such that $(A,A)$ is a stable pair. 
By stability, starting from $0$ and repeatedly applying $s_X$ we can never obtain a term of the form $s_\sigma(t)$, with $\sigma\notin A$. Moreover, if $A$ is also polynomial, then all such iterates are obtained by composing polynomials $s_\sigma\vert_{A+Y}$, with $\sigma\in A$; finally, if $A$ is finite, these polynomials are only finitely many. All this leads to the following:
%
%
%
	\begin{proposition}\label{prop:fintoalg}
		 Let $\Rsemiring'\leq\Rsemiring$.  
		 If $s_X\in \fps{\Rsemiring'}{X+Y}^X$ is a finitary family, then $\fix s_X$ has finite support and it is the minimal solution of a fixpoint algebraic system over $\Rsemiring'$ .  
   \end{proposition}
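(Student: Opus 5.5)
Let $W\subset_{\mathrm{fin}} X$ be the finite set witnessing that $s_X$ is finitary, so that $(W,W)$ is stable and polynomial. The plan is to show that the Kleene iterates $r^{(n)}_X$ of \eqref{eq:fixpointeq} never leave the ``$W$-part'' of the system. The statement ``$\fix s_X$ has finite support'' will be read as: $(\fix s_X)_\sigma=0$ for every $\sigma\in X-W$, so only the finitely many components indexed by $W$ can be non-zero. The finite system to exhibit is then
\[
w_\sigma = p_\sigma(w_W, x_Y) \qquad (\sigma\in W), \qquad\text{where } p_\sigma := s_\sigma\vert_{W+Y}.
\]
By the polynomiality of $(W,W)$ each $p_\sigma$ is a polynomial. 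Its coefficients are coefficients of $s_\sigma$, hence lie in $\Rsemiring'$, so this is an $\Rsemiring'$-FAS in the sense of Definition~\ref{def:fas2}.

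\textbf{Step 1 (iterates vanish outside $W$).} By induction on $n$ I show that $r^{(n)}_\sigma=0$ for all $\sigma\notin W$. The case $n=0$ is trivial. For the inductive step, assume $r^{(n)}$ vanishes outside $W$. Substituting $0$ for every variable $x_\tau$ with $\tau\notin W$ turns $s_\sigma$ into $s_\sigma\vert_{W+Y}$, so
\[
r^{(n+1)}_\sigma = s_\sigma(r^{(n)}_X,x_Y)=\big(s_\sigma\vert_{W+Y}\big)(r^{(n)}_W,x_Y).
\]
For $\sigma\notin W$, stability gives $s_\sigma\vert_{W+Y}=0$, so $r^{(n+1)}_\sigma=0$. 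One point needs care here: substituting $0$ into a formal power series over a continuous semiring must agree with the restriction operation. This holds because every monomial $x^\mu$ whose support meets $X-W$ evaluates to $0$, and sums and suprema commute with this. Taking the supremum over $n$ then yields $(\fix s_X)_\sigma=0$ for $\sigma\notin W$.

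\textbf{Step 2 (the $W$-components follow the finite system).} The identity in Step 1 also shows that for $\sigma\in W$,
\[
r^{(n+1)}_\sigma=p_\sigma(r^{(n)}_W,x_Y).
\]
So the restricted sequence $(r^{(n)}_W)_n$ is exactly the Kleene iteration of the polynomial map $p_W$, started from $0$. In a continuous semiring, polynomial maps are Scott-continuous, so this iteration converges to the least fixpoint of $p_W$ (Kleene's theorem, the same principle already used to justify \eqref{eq:fixpointeq}). Hence $(\fix s_X)_W=\sup_n r^{(n)}_W$ is the minimal solution of the FAS, and $(\fix s_X)_X$ is that solution extended by zeros outside $W$.

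The main obstacle is not the argument itself but making precise, in the formal power series formalism, the equation $s_\sigma(r_X,x_Y)=s_\sigma\vert_{W+Y}(r_W,x_Y)$ when $r$ vanishes off $W$. This equation may involve infinitely many variables and infinite sums in the composition formula. I would verify it directly on the composition formula: every term containing a factor $r_{\tau,\nu}$ with $\tau\notin W$ is $0$. I would also remark that the definition of restriction should be read with $W+Y$ as the kept variables, and that properness of $s_X$ is not needed here; it only becomes relevant when Theorem~\ref{th:elimination} is applied afterwards.
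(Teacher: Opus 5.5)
Your proof is correct. It is essentially the iterate-based reasoning the paper gives informally just before the proposition, but the paper's written proof takes a different, solution-level route.

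The paper proceeds as follows:
\begin{itemize}
\item It takes an arbitrary solution $\bar r_A$ of the finite system $r_a=(s_a)\vert_{A+Y}(r_A,x_Y)$ and extends it by zeros to a family $t_X$.
\item It then checks, via the same substitution identity you isolate together with stability for $x\notin A$, that $t_X$ is a fixpoint of the whole infinite system. The paper's displayed $s_x(t_X,x_Y)=0$ should read $=t_x$.
\end{itemize}
That route proves something slightly stronger: every solution of the finite system lifts to a solution of the infinite one. By leastness of $\mathsf{fix}\, s_X$ it immediately gives $\mathsf{fix}\, s_X\le t_X$, hence vanishing outside $A$. To get equality with the least finite solution, however, one still needs the converse fact that the restriction of $\mathsf{fix}\, s_X$ to $A$ solves the finite system. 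The paper leaves this step implicit.

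Your induction avoids this two-sided comparison. Since the two Kleene chains coincide stage by stage, finite support and minimality come out together. The only extra ingredient is Scott-continuity of polynomial maps over the continuous semiring $\fps{\Rsemiring}{Y}$, which the paper's route also needs implicitly to know that $\mathsf{fix}\, s_X$ is the least fixpoint.
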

   \begin{proof}
   Let $A\subseteq X$ be finite, stable and polynomial as above.
 Take the (finite) system of equations: $(r_a = s^*_a(r_a, x_Y))_{a \in A}$, where $s^*_a=(s_a)\vert_{A+Y}$, and let $\bar r_A$ be one of its solutions. Then, define $(t_x)_{x \in X}$ to be $\bar r_x$ if $x \in A$ and $0$ otherwise. 
Then for all $x\in X$, $s_x(t_x,x_Y)=0$: if $x\in A$, since $t_X=\bar r_{A} \sqcup 0_{X-A}$, we have
$s_x(t_x,x_Y)=s^*_x(\bar r_x,x_Y)=0$; if $x \not \in A$, then $(s_x)\vert_{A+Y}=0$, which again implies
%
		$s_x(t_X, x_Y)=0$.
   \end{proof}
%

  Combining Proposition  $\ref{prop:fintoalg}$ and Theorem  \ref{th:elimination},
  we also get:
   \begin{corollary}
   	\label{cor:propertoalg}

   		If $s_X\in \fps{\Rsemiring'}{X+Y}^X$ is a \textit{proper} finitary family, then for each $x \in X$  $(\fix s_X)_x$ is an algebraic power series. Moreover, for all but finitely many $x$, $(\fix s_X)_x=0$
   \end{corollary}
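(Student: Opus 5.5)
The plan is to combine Proposition \ref{prop:fintoalg} with Theorem \ref{th:elimination}, exactly as the statement suggests. Fix a finite set $A\subseteq X$ witnessing that $s_X$ is a proper finitary family, so that $(A,A)$ is stable and polynomial and each $s_a$ with $a\in A$ satisfies the properness conditions.

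The first step is to show that the minimal solution $\fix s_X$ vanishes outside $A$ and coincides on $A$ with the minimal solution of the finite system $(r_a=s^*_a(r_A,x_Y))_{a\in A}$, where $s^*_a=(s_a)\vert_{A+Y}$. Proposition \ref{prop:fintoalg} gives the ingredients; I would redo the argument at the level of the Kleene iterates $r^{(n)}$ of \eqref{eq:fixpointeq}, proceeding by induction on $n$.
\begin{varitemize}
\item \emph{Support.} Assume $r^{(n)}_x=0$ for every $x\notin A$. Stability gives $r^{(n+1)}_x=s_x(r^{(n)}_X,x_Y)=s_x\vert_{A+Y}(r^{(n)}_X,x_Y)=0$ for $x\notin A$.
\item \emph{Agreement on $A$.} For $a\in A$ we get $r^{(n+1)}_a=s^*_a(r^{(n)}_A,x_Y)$. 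So the iterates restricted to $A$ are exactly the Kleene iterates of the finite system.
\item \emph{Passing to the limit.} Taking suprema, $(\fix s_X)_A$ is the minimal solution of the finite system, and $(\fix s_X)_x=0$ for $x\notin A$.
\end{varitemize}
This settles the second claim, since $A$ is finite and all but finitely many $x$ lie outside it.

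The second step is to check that the finite system is a proper $\Rsemiring'$-FAS in the sense of Definition \ref{def:fas2}, with parameters $x_Y$.
\begin{varitemize}
\item \emph{Polynomiality.} Each $s^*_a$ is a polynomial because $(A,A)$ is polynomial, and its coefficients lie in $\Rsemiring'$ since restriction only deletes monomials.
\item \emph{Properness.} The hypothesis gives $s_a(0)=0$ and vanishing coefficients of the linear monomials. Restriction removes monomials and never creates new ones, so $s^*_a(0,0)=0$ and the coefficient of each $w_j$ in $s^*_a$ is $0$.
\end{varitemize}
With $\Rsemiring'=\Qpos$, Theorem \ref{th:elimination} then yields, for each $a\in A$, an irreducible polynomial $q_a$ with $q_a((\fix s_X)_a,x_Y)=0$. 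Hence $(\fix s_X)_a$ is algebraic. For $x\notin A$ the series is $0$, which is trivially algebraic (take the annihilating polynomial $w$).

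The main obstacle is matching the properness hypotheses precisely. Definition \ref{def:finitary} phrases properness in terms of the coefficients of $y\in Y$, whereas Definition \ref{def:fas2} concerns the coefficients of the unknowns $w_j$, which here range over $A\subseteq X$. I would read the finitary-proper condition as forbidding the linear monomials in the unknowns, and verify that restriction preserves this reading. I would also assume that $\Rsemiring'$ is $\Qpos$ inside $\Rinf$-valued power series, possibly parametrised by $z$, so that Theorem \ref{th:elimination} applies verbatim.
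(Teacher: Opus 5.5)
Your proposal is correct and follows the paper's route: the corollary is obtained there simply by combining Proposition \ref{prop:fintoalg} with Theorem \ref{th:elimination}. Your induction on the Kleene iterates is exactly the informal stability/polynomiality argument the paper gives just before Proposition \ref{prop:fintoalg}, and it makes explicit the minimality that the paper's written proof of that proposition leaves implicit. Your reading of properness as concerning the unknowns indexed by $A$ rather than the parameters $Y$, and your restriction to $\Rsemiring'=\Qpos$, are exactly what is needed for Theorem \ref{th:elimination} to apply.
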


  \begin{example}
    Let $X=\N+1$, $Y=\{z\}$ and $s_X\in \fps{\Rinf}{X+Y}^X\equiv\fps{(\fps{\Rinf}{z})}{X}^{X}$ be the fps from Example \ref{ex:phors2}. 
  Then one can easily check that $W=\{1,\star\}\subset X$ is stable and polynomial, so the family is finitary. In other words, the fixpoint $a=\fix s_X$ can be computed by considering only the equations $a_1=s^F_1(a_1)$ (i.e.~\eqref{eq:alg1}) and $a_\gphors=a_{\star}=a_1$.
Hence $a_\gphors(z)\in \fps{\Rinf}{z}$ is algebraic over $\mathbb Q^+$ (with parameter $z$).

  \end{example}



\section{Finitely Bounded PHORS}


In this section we introduce a first class of PHORS, called $\Gfin$, whose corresponding  generating function is algebraic and whose AST and PAST problems are, consequently, decidable.

%

\subsection{Syntax of $\Gfin$ }

We use types defined by the grammar below:
\begin{align*}
\varphi,\psi&:= \1^n\mid \  !_k\varphi\multimap \varphi
 \qquad (1\leq n\in\mathbb N, \: k \in \mathbb N).
\end{align*}
We let $\tau$ be an abbreviation for the affine function type $!_1o\multimap o$. 
We define an order relation $\varphi\sqsubseteq \psi$ between types by induction by
\begin{align*}
\infer{\1^n\sqsubseteq\1^n}{}
\quad \infer{\xi\multimap\varphi\sqsubseteq\xi'\multimap\varphi'}{\xi\sqsubseteq \xi', \varphi\sqsubseteq\varphi'}
\quad
\infer{!_k\varphi\sqsubseteq !_h\psi}{k\leq h \quad\varphi\sqsubseteq\psi}
\end{align*}

The typing rules will use two kinds of contexts (as in \cite{DBLP:conf/lics/LiMO22}): 
\begin{varitemize}
\item non-linear contexts, noted $\nonterm,\nonterm'$ will be used for non-terminal symbols; these are finite sets of type bindings of the form $L:\varphi$; intuitively, we put no restrictions on the number of times a non-terminal is used;

\item graded contexts, noted $\Delta,\Delta'$ will be used for the arguments to be passed to the non-terminals; they are finite sets of type bindings of the form $x:_k \varphi$. Intuitively, the binding $x :_k \varphi$ means that we can use $x$ with type $\varphi$ \emph{at most} $k$ times.
\end{varitemize}
\noindent We define by induction the operations $\Delta+\Delta',k\Delta$ on graded contexts:
{\small
\begin{align*}
\Delta +\emptyset &= \Delta, \\
(\Delta,x:_{k}\varphi) + (\Delta',x:_{h}\varphi) &= (\Delta+\Delta'),x:_{k+h}\varphi;\\
k\emptyset&=\emptyset,\\
k(\Delta,x:_{h}\varphi)  &= k\Delta,x:_{kh}\varphi,
\end{align*}
}

\noindent where, as in e.g.~\cite{10.1145/3009837.3009890}, we are assuming that $\Delta$ and $\Delta'$ agree on each variable (i.e.~they contain the same type bindings but may disagree on the corresponding coefficients). 

A type judgement is an expression of the form $\nonterm\mid\Delta\vdash t:\varphi$.
The typing rules are in Fig.~\ref{fig:typerules}.
\begin{definition}[$\Gfin$]
A \emph{finitely graded PHORS} (noted $\Gfin$) is a triple $\gphors=(\nonterm, \C D, S)$, where $\nonterm$ is a finite set of typed non-terminals, $S\in\nonterm$ is the {start symbol} such that $\nonterm(S)=\1$, and $\C D$ associates each $L\in\nonterm$ with a derivation $\nonterm\mid\emptyset\vdash \lambda x_1.\dots. x_n.t:\varphi$
such that $\varphi\sqsubseteq \nonterm(L)$, $t$ contains no $\lambda$-abstraction and $\nonterm\mid x_1,\dots, x_k\vdash t:\1$.
\end{definition} 

Since affine implication is expressed by $!_1\varphi\multimap \psi$, 
it is immediate that all PAHORS are $\Gfin$, so all our results translate automatically to them. In particular, as the PHORS from Example \ref{ex:phors0} is affine, it is also a $\Gfin$.


%

%
%

\begin{figure}
\fbox{
\begin{minipage}{.7\textwidth}
\resizebox{.99\textwidth}{!}{
\begin{minipage}{1.1\textwidth}
\begin{align*}
	\begin{array}{ccc}
		\infer[ax1]{\nonterm\mid \Delta\vdash x:\varphi }{x:_p\varphi\in \Delta,\quad 1 \leq p} & &
		\infer[ax2]{\nonterm\mid\Delta \vdash L:\varphi }{ L:\varphi  \in \nonterm}\\[7pt]
%
		\infer[\lambda]{\nonterm\mid \Delta \vdash \lambda x. t : !_k \varphi \multimap \psi }{\nonterm \mid \Delta, x:_k \varphi \vdash t : \psi}
		&  &
		\infer[@]{\nonterm\mid \Delta + k\Delta' \vdash tu : \psi}
		{\nonterm\mid\Delta \vdash t :!_k\varphi \multimap \tau \qquad \nonterm\mid \Delta' \vdash u :\varphi }\\[9pt]
		\infer[\langle \rangle]{\nonterm\mid \Delta \vdash \langle t_1, \dots,t_n \rangle : \1^n}{\nonterm\mid \Delta \vdash t_i : \1 &  i=1,\dots, n}&  &
		\infer[\pi_i]{\nonterm\mid \Delta \vdash \pi_i t: \1}{\nonterm\mid \Delta \vdash t: \1^n &  i=1,\dots, n}
	\end{array}
	\end{align*}
	\begin{align*}
		 \infer[\oplus]{\nonterm\mid \Delta \vdash t \oplus t' : \1 }{\nonterm\mid \Delta \vdash t:\1 \quad \nonterm\mid \Delta \vdash t':\1} 
	\end{align*}
	\end{minipage}
	}
\end{minipage}
}
\caption{Typing rules of $\Gfin$.}
\label{fig:typerules}
\end{figure}

\begin{example}\label{ex:nonlin}
The $\gphors$ from \eqref{eq:phors1} in Section 2 is not affine, but is a $\Gfin$: 
letting $\nonterm(A)=\nonterm(B)=\tau$ and $\nonterm(H)=!_2\tau\multimap\tau$, we can correctly type  $\nonterm\mid  \emptyset \vdash \lambda f. \lambda x.t_H: \nonterm(H)$, where
$t_H= (H(A\circ f)x\oplus_a H(B\circ f)x)\oplus_a f(fx)$.

%
\end{example}

\begin{example}[a non-algebraic PHORS]\label{ex:nonalg}
The PHORS $\gphors$ below:
\begin{align*}
Lfx &= L(f\circ f) x\oplus_{p} fx \\
S&= L\ \mathrm{id}\  e
\end{align*}
is \emph{not} a $\Gfin$: if we try to type $t_L=L(y\circ y) x\oplus_{p} fx$
with $\nonterm(L)=!_k\tau\multimap \tau$, we obtain
$\nonterm \mid \emptyset\vdash \lambda y.\lambda x.t_L: \psi$, where $\psi=\ !_{2k}\tau\multimap \tau\not\sqsubseteq\nonterm(L)$. 
Indeed, the generating function $s^L(y_{\N},x)\in \fps{\Rsemiring}{\N+1}$ of $L$ satisfies 
$s^L(y_\N,x)=\frac{1}{2}(
y_1x+s^L( y^2_\N,x)
)$, which yields the solution $s^L(y_\N,x)=\sum_i\frac{1}{2^{i+1}}y^{2^i}x$, that is not algebraic (cf. Remark~\ref{rem:hadamard}). 
\end{example}

%

While the $\Gfin$ are not linear, they can be \emph{linearized}:
we can turn them into PAHORS having the same semantics and branch language. Notice, however, that the resulting affine PHORS might have size exponential in the original one.
More precisely, define the \emph{size} of a $\gphors$ as the number of its non-terminals times the maximum size of any of its terms, i.e.~
$\|\gphors\|=|\nonterm|\times\max\{|\C D(L)|\mid L\in\nonterm\}$, and let $\partial\gphors$ be the maximum grade occurring in $\gphors$. Then we have:

\begin{theorem}[Linearization]
	\label{theorem:linearisation}
For every $\Gfin$ $\gphors=(\nonterm, \C D, S)$ there exists a PAHORS $\AFF\gphors=(\nonterm, \C D',S)$ with size $\|\AFF\gphors\|\leq \partial\gphors{\|\gphors\|}$  such that 
$a_{\gphors}=a_{\AFF{\gphors}}$ and 
$\C L(\gphors)=\C L(\AFF\gphors)$.

\end{theorem}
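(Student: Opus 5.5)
The plan is to linearize by \emph{duplicating bound variables}. A binding $x:_k\varphi$ is replaced by $k$ affine bindings $x^{(1)},\dots,x^{(k)}:_1\AFF\varphi$. Correspondingly, each argument position of grade $k$ in a non-terminal is replaced by $k$ affine argument positions, and an application $t\,u$ with $t:\,!_k\varphi\multimap\psi$ becomes $t\,u_1\cdots u_k$, where the $u_i$ are copies of $u$. First define the translation on types: $\AFF{(\1^n)}=\1^n$ and $\AFF{(!_k\varphi\multimap\psi)}=\ !_1\AFF\varphi\multimap\cdots\multimap\ !_1\AFF\varphi\multimap\AFF\psi$ with $k$ arguments. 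When $k=0$ there are no arguments, so the argument is simply erased; this is consistent with affinity. Each non-terminal $L$ keeps its name but receives type $\AFF{\nonterm(L)}$. Using $\varphi\sqsubseteq\nonterm(L)$, we pad the derivation of $\C D(L)$ by adding unused dummy variables for the surplus grades, which affinity permits.

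Second, translate terms by induction on typing derivations, proving the following invariant. If $\nonterm\mid\Delta\vdash t:\varphi$, where $\Delta$ may be used according to the grades, then there is an affine term $\AFF t$ typed in the context $\AFF\Delta$ (each $x:_k$ split into $k$ affine copies) with $\AFF t:\AFF\varphi$, in which each copy $x^{(i)}$ occurs at most once. The rule $@$ is the key case. The premises give $\Delta\vdash t$ and $\Delta'\vdash u$, and the conclusion has context $\Delta+k\Delta'$. We form $\AFF t\,u_1\cdots u_k$, where $u_j$ is $\AFF u$ with its variables renamed to the $j$-th block of fresh copies, drawn from the $k\Delta'$ part of the context. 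The other linear-logic cases are routine, including $\lambda$-abstraction, which becomes a $k$-fold abstraction. The rules $\oplus$ and $\langle\rangle$ share their context additively, so both branches may reuse the same copies, which is fine for affinity. For the axiom $ax1$, $x:_p$ with $p\geq1$, we pick copy $x^{(1)}$ and $\eta$-expand if needed; higher-order variables are applied to copies exactly as above.

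Third, for the size bound, the translation multiplies each argument subterm by at most its grade. Rather than naively copying nested arguments, which would give a blow-up of $\partial\gphors^{\text{depth}}$, we observe that in a PHORS body the arguments of applications that have no $\lambda$ can be shared by introducing an auxiliary non-terminal per argument. Duplicating a non-terminal application then costs only its name plus its free variables. This gives $|\C D'(L)|\leq\partial\gphors\,|\C D(L)|$ with the same non-terminals, hence $\|\AFF\gphors\|\leq\partial\gphors\|\gphors\|$. I expect this bookkeeping to be the main obstacle, and I would first try to argue directly that each occurrence in $t$ is copied at most $\partial\gphors$ times, counting only on the immediate application spine.

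Finally, for semantic equality, I show that reductions are in bijection. A step $L\,\vec t\to t_L[\vec t/\vec x]$ in $\gphors$ corresponds to $L\,\vec{t}^{\,\mathrm{copies}}\to\AFF{(t_L)}[\ldots]$ in $\AFF\gphors$, and the translation commutes with substitution (a substitution lemma proved by induction). The choices made are therefore the same, with the same probabilities and the same $\oplus$-branch labels. This gives $\C L(\gphors)=\C L(\AFF\gphors)$ and equality of $\mathbb P(\cdot\downarrow_i)$ for each $i$, so $a_\gphors=a_{\AFF\gphors}$ by Proposition~\ref{prop:proba}. Alternatively, one can check that in $\Qrelkleisli{\Rsemiring}$ the grades bound the multisets in the support, so $!_k$ interprets as a $k$-fold product up to symmetrization. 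However, the operational bijection is cleaner.
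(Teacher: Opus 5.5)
Your argument for $a_{\gphors}=a_{\AFF\gphors}$ and $\C L(\gphors)=\C L(\AFF\gphors)$ follows the paper's own route. It uses the same type translation, $k$ renamed copies of the argument at $@$, a $k$-fold abstraction at $\lambda$, a substitution lemma $\AFF{(t[x/u])}=\AFF{t}[x^{(i)}/u_i]_i$, and a step-by-step correspondence of reductions that preserves branch labels and probabilities. Your treatment of $k=0$ by erasure is cleaner than the appendix's, and so is handling the surplus grades allowed by $\sqsubseteq$ with dummy variables. One caveat: dummy variables only repair the outermost grades. Since $\sqsubseteq$ is also covariant inside argument types, e.g.\ $!_2(!_1\1\multimap\1)\multimap\1\sqsubseteq\ !_2(!_2\1\multimap\1)\multimap\1$, the translated body can have the wrong arity for a higher-order argument. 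The paper ignores $\sqsubseteq$ entirely, so this loose end is shared.

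The genuine gap is the size clause. Your fallback, that each occurrence is copied at most $\partial\gphors$ times along the immediate application spine, is false. Copies multiply through nested arguments, and grades only record this when the nested argument has free variables. For example, take $G\,x\,y=x\oplus_{\frac12}y$, $F_i\,x=G\,x\,x$ for $i=1,\dots,d$ (so $F_i:\ !_2\1\multimap\1$), and $S=F_1(F_2(\cdots(F_d\,e)\cdots))$. Then $\partial\gphors=2$ and $\|\gphors\|=O(d^2)$, but the translated body of $S$ contains $2^d$ copies of $e$.

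Your repair by auxiliary non-terminals does avoid this blow-up, but it creates two problems. First, $\AFF\gphors$ no longer has the non-terminal set $\nonterm$, which contradicts both the statement and your own phrase ``with the same non-terminals''. Second, $\|\cdot\|$ is $|\nonterm|$ times the maximal body size, so the new non-terminals (up to one per argument occurrence) enter multiplicatively. The per-body estimate $|\C D'(L)|\leq\partial\gphors|\C D(L)|$ therefore does not yield $\|\AFF\gphors\|\leq\partial\gphors\|\gphors\|$ without a separate accounting that you have not given.

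The paper does not close this gap either. Its construction copies arguments exactly as your $@$ case does. Its only quantitative remark, $|\Delta'|\leq\partial\gphors|\Delta|$, bounds contexts rather than terms, so the example above violates the stated bound for the paper's linearization as well. What your route (and the paper's) establishes is the semantic and language equality, with a size bound exponential in the nesting depth of arguments. Reaching the stated bound while keeping $\nonterm$ fixed requires a different construction, not just better bookkeeping.
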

\begin{proof}[Proof sketch]
Define $\AFF{\varphi}$ as $
\AFF{(o^n)}=o^n$ and 
$\AFF{(!_k\varphi\multimap \psi)}={\AFF{\varphi}\multimap\dots\multimap\AFF{\varphi}}\multimap\AFF\psi$, with $\AFF{\varphi}$ repeated $k$ times. 
Then, by induction, any derivation of $\nonterm\mid \Delta\vdash t:\varphi$ yields a derivation of $\AFF{\nonterm}\mid\Delta'\vdash t:\AFF \varphi$ in the affine system of \cite{DBLP:conf/lics/LiMO22}, where $|\Delta'|\leq \partial\gphors|\Delta|$: an order-$n$ $\Gfin$ yields then an order-$n$ PAHORS in which every equation $Lx_1\dots x_n$ translates into a new equation $Lz_1^1\dots z_{n}^{k_n}$, where, if $\nonterm(L)=!_{k_1}\varphi_1\multimap\dots \multimap !_{k_i}\varphi_i\multimap\dots\multimap o^n$, the variable $x_i$ is replaced by $k_i$ variables $z_i^1,\dots, z_{i}^{k_i}$.
\end{proof}

\begin{example}
The linearization of the $\Gfin$ from Example \ref{ex:nonlin} yields the PAHORS below:
{\small
\begin{align*}\label{eq:phors2}
Lf_1f_2x&=( L(A\circ f_1)(A\circ f_2)x \oplus_{a}
L(B\circ f_1)(B\circ f_2)x )\oplus_{a}f_1(f_2x)\\
Ax&= x\oplus_{b} \Omega\\
Bx&=x\oplus_{c} \Omega\\
S&=HIIe
\end{align*}
}
Notice that the unique functional variable $f$, that was used twice, is now replaced by \emph{two} functional variables $f_1,f_2$, used once.
\end{example}


\begin{example}
$\|\AFF\gphors\|$ may be exponential in $\|\gphors\|$:
take the order-2 $\Gfin$ $\gphors$ given by
{\small
\begin{align*}
F_1fx&= f(fx)\\
F_2f x&= F_1(f\circ f) x \oplus_{p} x\\
&\vdots\\
F_{n}fx&= F_{n-1}(f\circ f)x\oplus_{p}x\\
S&=F_n \ \mathrm{id} \  e
\end{align*}
}
As $\nonterm(F_n)=\ !_{2^{n}}\tau\multimap\tau$, 
$\|\AFF{\gphors}\| \leq \partial\gphors\|\gphors\|=
2^{\C O(\|\gphors\|)}\|\gphors\|$.
\end{example}

\subsection{From $\Gfin$ to FAS}

We will now show how, for any $\Gfin$ $\gphors$, the interpretation of the underlying PHORS yields a finitary FAS, whose minimal solution is thus an algebraic family.

First, from any $\Gfin$ we can canonically extract the underlying (simply typed) PHORS.
Observe first that the rules (and indeed the derivations) of $\Gfin$ can be seen as \emph{quantitative refinements} of the simply typed ones. Indeed, every type $\varphi$ is (uniquely) a \emph{refinement} of some simple type $\sat{\varphi}$: the simple types $\sat\varphi$ are defined by $\sat{(o^n)}=o^n$ and $\sat{(!_k\varphi\multimap\psi)}=\sat\varphi\to\sat\psi$.
Notice that $\varphi\sqsubseteq \psi$ implies $\sat\varphi=\sat\psi$. 
Moreover, for every derivation $\pi:\nonterm\mid\Delta \vdash t:\varphi$, 
by replacing each $\psi$ by $\sat\psi$, we obtain, by induction, a simply typed derivation $\sat\pi:\sat\nonterm,\sat\Delta\vdash t:\sat\varphi$.

 For any set $\Sigma$  and $k\in \N$, let $!_k\Sigma\subseteq !\Sigma$ be the set of multisets $\mu$ of maximal multeplicity $k$ (i.e $\max_{x \in \Sigma} \mu(x)\leq k)$. $!_k\Sigma$ corresponds to the monomials over $x_\Sigma$ of degree at most $k$ in each variable. 
  Define then, for every type $\varphi$, $\model{\varphi}$ as $\model{o^n}=1$ and $\model{!_k\varphi\multimap \psi}=!_k\model{\varphi}\times\model{\psi}$. We extend this to contexts by $\model{\emptyset}=0$ and $\model{\nonterm,x:\varphi}=\model{\nonterm}+\model{\varphi}$ (irrespectively of grades). Notice that $\model{\varphi}\subseteq\model{\sat\varphi}$, but the variable sets $\model{\varphi}$ are always finite.

   Our general idea is the following: given a $\Gfin$ $\gphors=(\nonterm, \C D, S)$, from the derivations $\C D(L)$ of $\nonterm\mid\emptyset \vdash t_L:\nonterm(L)$, we deduce a corresponding simply typed derivation of $ t_{\gphors}:\sat\nonterm\to \sat\nonterm$ yielding a family
   $\model{t}^{\Rsemiring}\in \fps{\Rsemiring}{\model{\sat\nonterm}}^{\model{\sat\nonterm}}$.  
  At the same time, we will show that the finite set $\model{\nonterm}\subset\model{\sat\nonterm}$ yields a stable and polynomial pair, and we will conclude that $\model{t_{\gphors}}$ is finitary, hence, by Corollary \ref{cor:propertoalg}, algebraic.    
   
   The fundamental ingredient is the following lemma:
   
  \begin{lemma}[stability lemma]\label{lemma:stability}
  Let $\Rsemiring=\fps{\Rinf}{\Sigma}$.
For every derivation $\nonterm\mid\Delta\vdash t:\varphi$, the pair $(\model{\nonterm}+\model{\Delta}, \model{\varphi})$ is stable and polynomial for $\model{t}^{\Rsemiring}\in \fps{\Rsemiring}{\model{\sat\nonterm}+\model{\sat\Delta}}^{\model{\sat\varphi}}$.

  \end{lemma}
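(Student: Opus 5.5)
We argue by induction on the derivation $\nonterm\mid\Delta\vdash t:\varphi$, following the rules of Fig.~\ref{fig:typerules}. First we fix the meaning of the statement. Restricting to the variables $\model{\nonterm}+\model{\Delta}$ means substituting $0$ for every variable of $\model{\sat\nonterm}+\model{\sat\Delta}$ that lies outside the finite sets $\model{\nonterm(L)}$, $\model{\varphi_x}$. Semantically, this is the same as interpreting $t$ under an environment in which every free symbol $y$ of type $\psi$ is replaced by the family $(s_a)_{a\in\model{\sat\psi}}$ with $s_a=x_a$ for $a\in\model{\psi}$ and $s_a=0$ otherwise.

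With this reading, the statement unfolds into two claims:
\begin{itemize}
\item \emph{stability}: every component of $\model{t}$ indexed by a point outside $\model{\varphi}$ vanishes after restriction;
\item \emph{polynomiality}: every component indexed by a point of $\model{\varphi}$ becomes a polynomial.
\end{itemize}
The grade $p$ attached to $x:_p\psi$ is not recorded in $\model{\Delta}$. The multiplicity bounds therefore come from the types: a point $(\mu,b)\in\model{!_k\psi\multimap\psi'}$ has $\mu\in !_k\model{\psi}$. We will strengthen the induction hypothesis by also tracking that each monomial uses the variables of $x$ a bounded number of times, say at most $p$ points of $\model{\varphi_x}$ counted with multiplicity. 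This keeps the application case under control.

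The base cases are immediate. For $ax1$, $\model{x}_a=x_a$ is a single variable, which vanishes for $a\notin\model{\varphi}$ and is a monomial otherwise; $ax2$ is identical. The rules $\langle\rangle$, $\pi_i$ and $\oplus$ are componentwise or linear combinations such as $pz\cdot\model{t}+(1-p)z\cdot\model{t'}$, so they preserve vanishing and polynomiality. Here we use that $\model{o^n}=1$ is the whole of $\model{\sat{(o^n)}}$, so stability is vacuous for ground types. For $\lambda$, $\model{\lambda x.t}_{(\mu,b)}$ is the coefficient of $x^{\mu}$ in $\model{t}_b$, viewed as a series in the context variables. If $b\notin\model{\psi}$, the induction hypothesis kills it after restriction. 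If $\mu$ mentions a point outside $\model{\varphi}$, the restriction of $x$'s variables kills it. The remaining case is $\mu\notin !_k$, i.e.\ some multiplicity exceeds $k$: this is exactly where the strengthened hypothesis, at most $k$ uses of $x$ per monomial, is needed. For each surviving point we extract a coefficient of a polynomial, which is again a polynomial.

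The main obstacle is the application rule. There
\[
\model{tu}_b=\sum_{\mu}\model{t}_{(\mu,b)}\prod_{a\in\mu}\model{u}_a ,
\]
where $t:\ !_k\varphi\multimap\psi$ lives in context $\Delta$ and $u:\varphi$ in $\Delta'$. After restricting to $\model{\nonterm}+\model{\Delta+k\Delta'}$, the induction hypothesis for $t$ forces $\mu\in !_k\model{\varphi}$, which is a finite set, and $b\in\model{\psi}$. The induction hypothesis for $u$ makes each $\model{u}_a$ with $a\in\model{\varphi}$ a polynomial. The only infinite sum, over $\mu$, is thus cut down to a finite sum of products of polynomials. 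This gives both stability and polynomiality, and it also yields the usage bound $\Delta+k\Delta'$: each monomial uses the $\Delta'$-variables at most $k$ times as often as $u$ does.

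One subtle point needs care. The restriction commutes with composition only because restricting the context variables of $u$ does not create nonzero components of $u$ outside $\model{\varphi}$. This is precisely the stability half of the hypothesis for $u$, so both halves must be proved simultaneously. Finally, non-terminals are used at types $\varphi\sqsubseteq\nonterm(L)$, so we need $\model{\varphi}\subseteq\model{\psi}$ whenever $\varphi\sqsubseteq\psi$. This follows by a short induction, using $!_k\Sigma\subseteq !_h\Sigma$ for $k\leq h$, and we would isolate it as an auxiliary lemma.
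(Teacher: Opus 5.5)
Your $\lambda$ case fails at the step ``if $\mu$ mentions a point outside $\model{\varphi}$, the restriction of $x$'s variables kills it''. In $\model{\lambda x.t}$ the variables of $x$ are no longer variables of the series: they have been moved into the index $(\mu,b)$. Restricting to $\model{\nonterm}+\model{\Delta}$ touches only context variables. Your induction hypothesis describes $\model{t}_b$ after $x$ has also been restricted to $\model{\varphi}$, so it says nothing about the coefficient of $x^{\mu}$ when $\supp\mu\not\subseteq\model{\varphi}$. Your other two subcases silently assume $\supp\mu\subseteq\model{\varphi}$ as well.

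No strengthening of the usage bound can fix this, because the claim itself is false. Identify $\model{\sat\tau}$ with $\N$, so that $\model{\tau}=\{0,1\}$. The closed term $\lambda x.x:\,!_1\tau\multimap\tau$ satisfies $\model{\lambda x.x}_{(\mu,a)}=1$ iff $\mu=[a]$. So it has value $1$ at $([2],2)\notin\,!_1\model{\tau}\times\model{\tau}$, and the context is empty, so there is nothing to restrict. Your application case inherits the problem, since it uses exactly this vanishing, for $t$ and for $u$, to cut the sum down to the finite set $!_k\model{\varphi}$. The trouble comes only from bound variables: for $\lambda$-free terms, where every free symbol is restricted, your induction goes through as written.

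The paper's appendix reacts to precisely this by proving the invariant relative to larger, reducibility-candidate-like sets $\model{\varphi}\subseteq\Com(\varphi)\subseteq\model{\sat\varphi}$. These are defined by $\Com(o^n)=\model{o^n}$, and $(\mu,b)\in\Com(!_k\psi\multimap\varphi)$ iff $\supp\mu\subseteq\Com(\psi)$ implies $|\mu|\leq k$ and $b\in\Com(\varphi)$. Points whose multiset leaves $\Com(\psi)$, like $([2],2)$ above, become don't-care points, and vanishing is asserted only outside $\Com(\varphi)$. The vanishing part of the application case survives, because a don't-care $\mu$ always contains some $a\notin\Com(\psi)$ whose factor $\model{u}_a$ vanishes.

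Be aware, though, that relaxing only the vanishing half is not enough. The degree bound also fails at don't-care points: with $x:_1o$ one gets $\model{\lambda z.zx}_{([2],\star)}=x^2$ for $\lambda z.zx:\,!_1\tau\multimap o$. Consequently the closed affine term $\lambda x.\lambda z.zx:\,!_1o\multimap\,!_1\tau\multimap o$ has value $1$ at $([\star,\star],([2],\star))$. That point lies outside even $\Com$, since $[\star,\star]$ has support in $\Com(o)$ and size $2>1$. Moreover $\Com(\psi)$ is infinite once $\psi$ has order $\geq 2$, so finiteness in the application case is no longer automatic either. Such components are harmless only because they always meet argument components that vanish. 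A sound proof therefore needs a relational invariant, in the style of logical relations quantifying over well-behaved arguments, rather than a pointwise vanishing property of each family.
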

  \begin{proof}[Proof sketch]
  Let a \emph{finitary} variable be any $\sigma\in\model{\nonterm},\model{\Delta},\model{\varphi}$, and a \emph{non-finitary} one be any $\sigma\in\model{\sat\nonterm}-\model{\nonterm},\model{\sat\Delta}-\model{\Delta},\model{\sat\varphi}-\model{\varphi}$.
By induction, one shows the following two claims:
   \begin{varenumerate} 
   \item for all non-finitary $\sigma$, $(\model{t}^\Rsemiring)_\sigma\big\vert_{\model{\nonterm}+\model{\Delta}}=0$;
   \item for all finitary $\sigma$, 
   $(\model{t}^\Rsemiring)_\sigma\big\vert_{\model{\nonterm}+\model{\Delta}}$ has 
   finite degree in each variable $x_{L:\nonterm(L)}$ and 
   degree $\leq k$ in each variable $x_{f:_k\psi}$.
   \end{varenumerate}
   \vskip-4mm
   \end{proof}
An immediate consequence of Lemma \ref{lemma:stability} is the following:
%
%

\begin{proposition}
	\label{prop:finitefamily}
For every $\Gfin$ $\gphors=(\nonterm,\C D,S)$, the fps
$\model{t_{\gphors}}^{\Rsemiring}\in \fps{\Rsemiring}{\model{\sat\nonterm}}^{\model{\sat\nonterm}}$ is finitary (via $\model\nonterm$).
\end{proposition}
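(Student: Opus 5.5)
The plan is to apply the stability lemma (Lemma~\ref{lemma:stability}) to the term $t_{\gphors}=\lambda\langle L_1,\dots,L_n\rangle.\langle t_{L_1},\dots,t_{L_n}\rangle$ componentwise, and then assemble the results into the single pair $(\model\nonterm,\model\nonterm)$ required by Definition~\ref{def:finitary}.

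First, note that the target set $\model\nonterm=\sum_{L\in\nonterm}\model{\nonterm(L)}$ is finite. Indeed, $\nonterm$ is finite, and each $\model{\varphi}$ is finite by induction on $\varphi$, since $!_k$ of a finite set is finite. Moreover, $\model\nonterm\subseteq\model{\sat\nonterm}$. Next, I unfold the interpretation. Because cartesian products in $\Qrelkleisli{\Rsemiring}$ are disjoint unions, the family $\model{t_\gphors}^{\Rsemiring}$ indexed by $\model{\sat\nonterm}$ restricts, on the $L$-component $\model{\sat{\nonterm(L)}}$, to $\model{\lambda x_1\dots x_k.t_L}^{\Rsemiring}$, viewed as a family of fps in the variables $x_{\model{\sat\nonterm}}$. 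This uses the context $\nonterm\mid\emptyset$, since each defining term is typed with empty graded context.

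Fix $L\in\nonterm$. By $\C D(L)$ we have a derivation of $\nonterm\mid\emptyset\vdash\lambda\vec x.t_L:\varphi$ with $\varphi\sqsubseteq\nonterm(L)$. Lemma~\ref{lemma:stability} then gives that $(\model\nonterm,\model\varphi)$ is stable and polynomial for this component. The remaining point, which I expect to be the only real obstacle, is that the lemma speaks of $\model\varphi$ while the finitary condition needs $\model{\nonterm(L)}$. So I will prove a small auxiliary fact: if $\varphi\sqsubseteq\psi$ then $\model\varphi\subseteq\model\psi$. This follows by induction on $\sqsubseteq$, since $k\le h$ gives $!_k\Sigma\subseteq !_h\Sigma'$ whenever $\Sigma\subseteq\Sigma'$. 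Stability and polynomiality then transfer to the larger target set:
\begin{itemize}
\item \emph{Polynomiality.} For $\sigma\in\model{\nonterm(L)}\setminus\model\varphi$, the index $\sigma$ lies outside $\model\varphi$, so stability for $\varphi$ gives that the restriction to $\model\nonterm$ is $0$, which is trivially a polynomial. For $\sigma\in\model\varphi$, polynomiality comes directly from the lemma.
\item \emph{Stability.} For $\sigma\notin\model{\nonterm(L)}$ we also have $\sigma\notin\model\varphi$, so the restricted component vanishes.
\end{itemize}

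Taking the disjoint union over $L\in\nonterm$, the pair $(\model\nonterm,\model\nonterm)$ is stable and polynomial for $\model{t_\gphors}^{\Rsemiring}$, with $\model\nonterm$ finite. By Definition~\ref{def:finitary}, this is exactly the statement that the family is finitary via $\model\nonterm$. I should also check that the $Y$-parameters ($z$) cause no trouble. They only enter through $\model{M\oplus_pN}=pz\model M+(1-p)z\model N$, which preserves finite support in the $\model\nonterm$-variables. Lemma~\ref{lemma:stability} already covers this, because $\Rsemiring=\fps{\Rinf}{z}$.
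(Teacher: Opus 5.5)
You follow the paper's own route: the paper presents the proposition as an immediate consequence of Lemma~\ref{lemma:stability} applied to each defining term. Your extra check, that $\varphi\sqsubseteq\nonterm(L)$ implies $\model\varphi\subseteq\model{\nonterm(L)}$ so that stability and polynomiality pass to the larger target set, is correct. The gap is in what you import: Lemma~\ref{lemma:stability} is false at the abstraction rule, and so is the proposition under Definition~\ref{def:finitary}. Abstracting $f:_k\psi$ moves the $f$-variables into the output index. Output points $(\mu,\dots)$ with $\supp\mu\not\subseteq\model\psi$ can then carry coefficients with a nonzero constant term, and no restriction to $\model\nonterm$ can kill such a term. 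Concretely, $\lambda f.\lambda x.fx$ is derivable in empty contexts with type $!_1\tau\multimap\tau$. Its interpretation is the curried Kleisli identity, whose coefficient at $([n],n)\in{!\N}\times\N$ is $1$ for every $n$. For $n\geq 2$ this point lies outside $\model{!_1\tau\multimap\tau}={!_1\{0,1\}}\times\{0,1\}$. Hence in any $\Gfin$ containing a non-terminal $I:{!_1\tau}\multimap\tau$ with $Ifx=fx$, the $I$-component of $\model{t_\gphors}^{\Rsemiring}$ at $([n],n)$ is $1$ for all $n$, and $(W,W)$ is stable for no finite $W$. The paper's own Example~\ref{ex:nonlin} fails in the same way: by Example~\ref{ex:churchtwo}, the $H$-component at $([1,n],n)$, $n\geq 2$, has constant term $(1-a)z$, coming from $f(fx)$.

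The appendix only proves vanishing outside $\Com(\varphi)$, and for higher-order $\varphi$ this set is infinite, since it contains every $(\mu,\dots)$ with $\supp\mu\not\subseteq\Com(\psi)$. Its final lemma then passes from $\Com(\varphi)$ to $\model\varphi$ without justification, and that is exactly the step that breaks. Higher-order non-terminals genuinely have infinite support in this semantics, so no argument through a finite stable set can work. What one needs instead is a statement about the fixpoint iterates. For instance: in the components indexed by $\model\nonterm$, every monomial containing a variable outside $\model\nonterm$ vanishes when evaluated at each iterate. Then the finite restricted system still computes the $\model\nonterm$-part of $\mathsf{fix}$.

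Note also that the inclusion $\model\varphi\subseteq\model{\nonterm(L)}$ does not make $\sqsubseteq$ semantically harmless, because the paper's rule is covariant in argument types. For example, $Lfgx=fgx$, typed with $f:_1({!_1\tau}\multimap\tau)$, is accepted at $\nonterm(L)={!_1({!_2\tau}\multimap\tau)}\multimap{!_1\tau}\multimap\tau$. Take $D:{!_2\tau}\multimap\tau$ with $Dgx=g(gx)$, $Ax=x\oplus_b\Omega$ and $S=LDAe$. Then $\mathbb P(S\downarrow)=b^2$, but the system restricted to $\model\nonterm$ yields $a_S=0$, because the only contributing point of $L$ uses $g$ twice. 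So any repair must also make $\sqsubseteq$ contravariant in argument position.
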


Thanks to Proposition \ref{prop:fintoalg} and the observation that all coefficients of $\model{t_\gphors}^{\fpp{\Rinf}{z}}$ belong indeed to $\fpp{\Q}{z}$, we obtain then:
\begin{theorem}
	\label{Gfin-algebraic-theorem}
	For all $\Gfin$ $\gphors=(\nonterm, \C D, S)$, $(a_L)_{L \in \nonterm}$ is a $\Q^{+}$ fixpoint algebraic family with parameter $z$
\end{theorem}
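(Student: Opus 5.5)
The plan is to combine Proposition~\ref{prop:finitefamily} with Proposition~\ref{prop:fintoalg}, taking $\Rsemiring'=\Qpos$-coefficient polynomials in $z$. First, fix the $\Gfin$ $\gphors=(\nonterm,\C D,S)$ and let $s:=\model{t_\gphors}^{\fps{\Rinf}{z}}\in\fps{(\fps{\Rinf}{z})}{\model{\sat\nonterm}}^{\model{\sat\nonterm}}$, which under the isomorphism $\fps{(\fps{\Rinf}{z})}{X}\equiv\fps{\Rinf}{X+\{z\}}$ we read as a family in $\fps{\Rinf}{X+Y}^X$ with $X=\model{\sat\nonterm}$ and $Y=\{z\}$. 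By the definition of the PGF, $(a_L)_{L\in\nonterm}$ is obtained (after projection and currying of the components indexed by $\model{\sat{\nonterm(L)}}$) from $\fix s$, so it suffices to show $\fix s$ is the minimal solution of a $\Qpos$-FAS in the parameter $z$.

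Second, I will check that all coefficients of $s$ lie in $\Qpos$. This is by induction on the simply typed term $t_\gphors$: variables and projections give coefficient $1$, application/composition in $\Qrelkleisli{\Rsemiring}$ only sums and multiplies coefficients (finitely many contributions per monomial, since $t_\gphors$ contains no $Y$ and, once restricted to $\model{\nonterm}$, Lemma~\ref{lemma:stability} guarantees polynomiality), and $\oplus_p$ contributes $pz$ and $(1-p)z$ with $p\in\Q\cap[0,1]$. Hence each restricted component $s_\sigma\vert_{\model\nonterm+Y}$, $\sigma\in\model\nonterm$, is a polynomial in $\fpp{\Qpos}{\model\nonterm+\{z\}}$. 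Note the grade bound $\varphi\sqsubseteq\nonterm(L)$ in the definition of $\C D(L)$ is what ensures the derivation of $t_L$ lands inside $\model{\nonterm(L)}$, so the finitary set is exactly $\model\nonterm$.

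Third, by Proposition~\ref{prop:finitefamily}, $(\model\nonterm,\model\nonterm)$ is stable and polynomial, with $\model\nonterm$ finite since every $\model\varphi$ is finite. Proposition~\ref{prop:fintoalg} then gives that $\fix s$ vanishes outside $\model\nonterm$ and coincides there with the minimal solution of the finite system $r_a=s_a\vert_{\model\nonterm+Y}(r,z)$, $a\in\model\nonterm$, a $\Qpos$-FAS. Projecting to the components of each $L$ yields that $(a_L)_L$ is a $\Q^+$ fixpoint algebraic family with parameter $z$.

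The main obstacle is the minimality part: Proposition~\ref{prop:fintoalg} as proved only shows the extended solution solves all equations; I must also argue the least fixpoint equals it. I would do this via the Kleene iterates \eqref{eq:fixpointeq}: by stability, induction on $n$ shows $r^{(n)}_x=0$ for $x\notin\model\nonterm$ and $r^{(n)}_a$ equals the $n$-th iterate of the finite restricted system for $a\in\model\nonterm$; taking suprema gives equality with its least solution.
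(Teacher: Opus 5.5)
Your proposal is correct and follows the paper's own route. It combines the finitarity of $\model{t_\gphors}$ via $\model{\nonterm}$ (Proposition~\ref{prop:finitefamily}), the reduction of finitary families to a finite fixpoint system (Proposition~\ref{prop:fintoalg}), and the observation that the restricted components are polynomials with nonnegative coefficients in $\fpp{\Q}{z}$. Your explicit Kleene-iterate argument is a welcome addition: it shows the least fixpoint vanishes outside $\model{\nonterm}$ and coincides there with the least solution of the finite system, whereas the paper's proof of Proposition~\ref{prop:fintoalg} only checks that the extended solution satisfies the equations and leaves minimality implicit.
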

\begin{corollary}
	\label{Gfin-algebraic-corollary}
	 For all $\Gfin$ $\gphors=(\nonterm, \C D, S)$,
	\begin{varenumerate}
\item for each $L \in \nonterm$, $a_L(z)$ is an algebraic power series.  
\item $\mathbb P(\gphors\downarrow)$ is a $\mathbb Q^{+}$ algebraic number and 
 $\sum_i\mathbb P(\gphors\downarrow_i)z^i$ is a ${\mathbb Q^{+}}$-algebraic power series.
\end{varenumerate}
\end{corollary}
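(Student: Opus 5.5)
The plan is to derive both items from Theorem \ref{Gfin-algebraic-theorem}, passing through Theorem \ref{th:elimination} to get algebraicity of each component.

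\textbf{Step 1: the fixpoint system.} By Proposition \ref{prop:finitefamily}, $\model{t_\gphors}$ is finitary via the finite set $\model\nonterm$. Proposition \ref{prop:fintoalg} then shows that $\fix\model{t_\gphors}$ vanishes outside $\model\nonterm$. On $\model\nonterm$ it is the minimal solution of the finite system
\[
r_a=\model{t_\gphors}_a\vert_{\model\nonterm+\{z\}}(r,z), \qquad a\in\model\nonterm,
\]
whose right-hand sides are polynomials with coefficients in $\Qpos$ and parameter $z$.

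\textbf{Step 2: properness, and algebraicity of each component.} To use Theorem \ref{th:elimination} (or Corollary \ref{cor:propertoalg}), the system must be proper. This is the main obstacle, since the raw system is typically not proper. Every $\oplus_p$ contributes a factor $z$, but $\beta$- and $Y$-steps contribute nothing. So equations such as $a_\star=a_1$ (as in Example \ref{ex:phors2}) contain unit monomials $w_j$ on the right-hand side, and constants need not carry a $z$.

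I would first try the standard normalization of \cite{DBLP:books/sp/KuichS86}. Every $\Qpos$-FAS can be transformed into a proper one with the same minimal solution in the components of interest, apart from the constant term. The transformation substitutes away the chain equations $w_i=\sum_j c_j w_j+\dots$ by solving the linear part as a geometric (star) series. The coefficients stay nonnegative rationals, since $c^*=1/(1-c)$ for $c<1$; when $c\ge 1$ the star is $\infty$, and this case needs separate handling.

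An alternative is to note that in the PHORS setting every path to $e$ passes through at least one choice, apart from trivial cases. One could then substitute along non-choice equations directly, by well-founded unfolding of non-terminal definitions. This would be justified by the $\Gfin$ typing: each definition body is headed by $\oplus$ or by a non-terminal application, and I would argue by induction on the dependency order that unguarded cycles give a zero contribution in the minimal solution. A pure cycle without $\oplus$ diverges and contributes $0$, so it can be erased.

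After this normalization, Theorem \ref{th:elimination} yields for each $L$ an annihilating polynomial $q_L(w,z)$ over $\Q$. Applying Theorem \ref{th:elimination} to the normalized system also covers the components $a_L(z)(x_\Sigma)$ with parameters, which gives item 1.

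\textbf{Step 3: item 2.} By Proposition \ref{prop:proba}, $a_S(z)=a_\gphors(z)=\sum_i\mathbb P(\gphors\downarrow_i)z^i$, which is $\Q^+$-algebraic by the previous step. The probability $\mathbb P(\gphors\downarrow)=a_\gphors(1)$ is the component $S$ of the minimal nonnegative solution of the polynomial system at $z=1$. By continuity of the fixpoint iteration, it is the supremum of the iterates evaluated at $1$, and it is finite because it is a probability.

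Specialising $q_S(w,z)$ at $z=1$ gives $q_S(a_\gphors(1),1)=0$. If $q_S(w,1)$ vanishes identically, I would divide out the factor $(z-1)^m$ first: since $q_S$ is irreducible this can only happen when $q_S=c(z-1)$, which cannot annihilate a nonzero series. Otherwise, I would use the minimality-based characterisation through $\Etor$: the value is definable as the least nonnegative solution of the finite polynomial system at $z=1$. By Tarski, any real number first-order definable over $\Q$ is algebraic. Either way, $\mathbb P(\gphors\downarrow)$ is an algebraic number.
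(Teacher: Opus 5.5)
Your route is the paper's own. The paper gives no separate argument for this corollary: it is read off Theorem~\ref{Gfin-algebraic-theorem} (via Propositions~\ref{prop:finitefamily} and~\ref{prop:fintoalg}) together with variable elimination (Theorem~\ref{th:elimination}, Corollary~\ref{cor:propertoalg}) and Proposition~\ref{prop:proba}. Your Steps~1 and~3 reproduce this. Step~3 is sound: you cancel factors $z-1$ from the annihilating polynomial and use continuity of $a_\gphors$ on $[0,1]$, which holds because its coefficients are nonnegative and sum to at most $1$.

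You are also right that Theorem~\ref{th:elimination} needs a \emph{proper} system, and that the system extracted from $\model{t_\gphors}$ is generally not proper. The paper never checks this. But your Step~2 does not close the hole, so your proof has the same gap as the paper's.
\begin{itemize}
\item Monomials without $z$ come only from $\oplus$-free derivations of a rule body, so their coefficients are natural numbers. Every unguarded cycle of unit monomials therefore has weight $\ge 1$. So the case $c^*=\infty$ that you set aside is not a corner case: it is the only one that arises on such cycles.
\item Your alternative relies on ``unguarded cycles contribute $0$''. This is false for fixpoint algebraic systems in general: $w_1=w_2$, $w_2=w_1+z$ has least solution $w_1=w_2=\infty\cdot z$.
\end{itemize}
The ingredient both variants lack is a proof that the least solution is \emph{finite}. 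Only finiteness forces a weight-$\ge 1$ cycle to receive zero input, so that its unknowns vanish and can be erased.

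Once you have finiteness, the simplest fix avoids properness altogether: apply your Step~3 idea at every real $z_0\in(0,1)$.
\begin{itemize}
\item By monotone convergence of the Kleene iterates, the values at $z_0$ of the unknowns $a_{L,\sigma}$ of the finite system form the least nonnegative real solution of that system at $z=z_0$. So they are first-order definable over $\mathbb Q$.
\item Hence $z_0\mapsto a_{L,\sigma}(z_0)$ is semialgebraic on $(0,1)$. It therefore satisfies $Q(z_0,a_{L,\sigma}(z_0))=0$ for some $Q\in\mathbb Q[z,w]$ of positive degree in $w$.
\item The identity theorem on the open unit disc upgrades this to $Q(z,a_{L,\sigma}(z))=0$ as power series.
\end{itemize}
For $a_S$, finiteness is automatic. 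Its coefficients are probabilities, and once identically-zero unknowns are discarded, an infinite value of any unknown that $S$ depends on would propagate to $a_S(z_0)\le 1$. For the other non-terminals you need that the weighted relational semantics of a probabilistic program has finite coefficients, for example via probabilistic coherence spaces.

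Alternatively, multiply every right-hand side by a fresh marker $u$. The system becomes proper in $(z,u)$, so Theorem~\ref{th:elimination} applies as stated. You then set $u=1$ after cancelling factors $u-1$, as you did for $z=1$ in Step~3; this again uses convergence, i.e.\ finiteness.
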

As a consequence of Remark $\ref{remark:semilinear}$ we also obtain:
\begin{corollary}
	For all $\Gfin$ $\gphors=(\nonterm, \C D, S)$, the set $\{n \mid \:  S \text{ terminates in } n \text{ steps}  \}$
	is semilinear
\end{corollary}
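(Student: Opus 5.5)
The plan is to go through Remark~\ref{remark:semilinear}. Its hypothesis is that $a_\gphors(z)$ is a component of the minimal solution of a \emph{proper} $\Q_{\geq 0}$-FAS. Once that is in place, $\supp a_\gphors$ is semilinear. By Proposition~\ref{prop:proba}, the coefficient of $z^n$ in $a_\gphors(z)$ is $\mathbb P(\gphors\downarrow_n)$. This is a sum of products of strictly positive probabilities over terminating reductions with $n$ choices, so it is nonzero exactly when such a reduction exists. We will assume the biases lie strictly in $(0,1)$; otherwise we first delete the branches of weight zero, which changes neither the termination set nor the series. Under this assumption the set in the statement is precisely $\supp a_\gphors \subseteq \N$. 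Here ``steps'' is read as probabilistic steps, in line with the paper's convention.

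The first ingredient is Theorem~\ref{Gfin-algebraic-theorem} together with Proposition~\ref{prop:fintoalg}. They show that $(a_L)_{L}$, and in particular the finitely many coefficients $w_\sigma(z)$ for $\sigma\in\model{\nonterm}$, form the minimal solution of the finite system $w_\sigma = s^*_\sigma(w,z)$ with $s^*_\sigma = (\model{t_\gphors})_\sigma\vert_{\model\nonterm+\{z\}}\in\fpp{\Q_{\geq 0}}{w,z}$. Among its components is $w_\star = a_\gphors(z)$.

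The second step, which I expect to be the main obstacle, is making this system proper. The coefficient of $w_{\star}$ in the equation for $S$ is typically $1$, as in Example~\ref{ex:phors2} where $a_\star = a_1$, so the system is not proper as it stands. I would handle this with the standard normalization in \cite{DBLP:books/sp/KuichS86}. Every $\oplus$ contributes a factor $z$, so each equation has the form $w_\sigma = \ell_\sigma(w) + z\,q_\sigma(w,z)$, where $\ell_\sigma$ is linear in $w$ and has no constant term. The constant term vanishes because terms without choices cannot contribute constants: $e$ is the only terminal, and reaching it contributes a monomial in $z$ only through choice nodes. Note that this needs checking, since $S=e$ directly would give the constant $1$. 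That degenerate case gives $\{0\}$, which is trivially semilinear, and similarly for linear chains reaching $e$ without choices. We treat those separately.

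Next, take the minimal solution of $w = \ell(w) + zq(w,z)$. Writing $w = \ell^*(zq(w,z))$, where $\ell^* = \sum_k \ell^k$ is the Kleene star of the nonnegative matrix of $\ell$, gives an equivalent system with the same minimal solution. The star may be infinite, but only on coordinates whose minimal value is $0$ or which diverge. After removing those coordinates, the resulting system is proper: it has no constant terms, and since every right-hand side is divisible by $z$, it contains no pure linear monomials $w_j$. Kuich–Salomaa's Thm.~16.35 then applies and yields semilinearity of $\supp w_\star$.

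Finally, I would double-check that support semilinearity depends only on the Boolean support of the coefficients, so the ``commutative image'' argument, essentially Parikh's theorem, can be invoked directly if the properness reduction causes trouble. The underlying $\mathbb B$-semiring system is a context-free grammar over the one-letter alphabet $\{z\}$, and its language has a semilinear Parikh image.
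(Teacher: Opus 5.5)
\textbf{The properness normalization fails.} You claim that, apart from degenerate cases such as $S=e$, every equation of the finite system has the shape $w_\sigma=\ell_\sigma(w)+z\,q_\sigma(w,z)$, with $\ell_\sigma$ linear and without constant term. This is false. Choice-free monomials arise whenever a body reaches a bound variable without making a choice, and they can be constants or of higher degree.

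Take the scheme \eqref{eq:phors1} with $S=HIe$ and $Ix=x$. The equation for the coordinate $([\star],\star)\in\model{\tau}$ of $I$ is $w_{I,([\star],\star)}=1$. The equation for $S$ contains the choice-free cubic monomial $w_{H,\sigma}\,w_{I,([\star],\star)}^2$, where $\sigma=([([\star],\star),([\star],\star)],([\star],\star))\in\model{!_2\tau\multimap\tau}$. This scheme is not degenerate, so its constants cannot be ``treated separately''.

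Higher-degree choice-free monomials are harmless for properness, but constants are not. Removing them is a genuine $\varepsilon$-elimination: you must shift each $w_\sigma$ by its value at $z=0$, which in $\Rinf$ need not be finite. You must then eliminate the new chain monomials coming from the Jacobian at that point, and their star may again be infinite. None of this is supplied. Also, simply discarding the ``diverging'' coordinates breaks every equation that mentions them.

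\textbf{Your fallback is correct and should be the proof.} The one step to make explicit is that $h:\Rinf\to\mathbb B$, $h(x)=[x\neq 0]$, is an $\omega$-continuous semiring morphism; this holds because all coefficients are nonnegative, so nothing cancels. Applied coefficientwise, $h$ maps the Kleene iterates of the finite system of Proposition~\ref{prop:fintoalg} to those of its Boolean image. Hence $\supp a_\gphors$ is the least solution of a finite $\mathbb B$-system, i.e.\ the set of lengths of a context-free language over $\{z\}$, which is semilinear by Parikh's theorem. This also makes your preprocessing of biases $0$ and $1$ unnecessary.

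\textbf{Comparison with the paper.} The paper derives the corollary in one line from Theorem~\ref{Gfin-algebraic-theorem} and Remark~\ref{remark:semilinear}. That remark requires a \emph{proper} system, a hypothesis the paper never checks. As you observed with $a_\star=a_1$, it fails for the systems produced by Proposition~\ref{prop:fintoalg}. Your Boolean route needs neither algebraicity nor properness, only the finiteness of the system given by the stability lemma. So it is the more robust argument: lead with it and drop the normalization.
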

\begin{remark}
	By interpreting  $t_L \oplus t_R$ as $x \model{t_L} + \overline x \model{t_L}$ (cf.~Remark \ref{rem:tropical}), the previous corollary could be strengthened to say that the set
	$\{(n,m) \mid \:  S \text{ terminates with } n \text{ left steps and } m \text{ rigth steps}\}$ is also semilinear.
	This could also been deduced by the fact that the branch language of a linear HORS is \emph{multiple context-free} \cite{DBLP:conf/mfcs/ClairambaultM19}, and languages of this class are known to be semilinear \cite{DBLP:conf/acl/Vijay-ShankerWJ87}, but it is still worth noticing that we reproved this fact via generating functions.
\end{remark}
%
%
%
%
%
%
%

We conclude by deducing the decidability of AST and PAST:
\begin{theorem}
\label{th:decidability}
	AST and PAST are decidable for any $\Gfin$.	
\end{theorem}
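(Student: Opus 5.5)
The plan is to reduce both problems to sentences of the first-order theory of the reals $\Etor$, whose decidability is taken as given. Given a $\Gfin$ $\gphors=(\nonterm,\C D,S)$, we first effectively compute the finite set $\model{\nonterm}$ and, from the derivations $\C D(L)$, the polynomials $s^*_a=(\model{t_\gphors}_a)\vert_{\model\nonterm+\{z\}}$ for $a\in\model\nonterm$. These are finitely many and have coefficients in $\fpp{\Q}{z}$. By Proposition~\ref{prop:finitefamily} and Proposition~\ref{prop:fintoalg}, the family $(a_L)$, and in particular $a_\gphors(z)$, is the minimal non-negative solution of the finite FAS $F:\ (w_a=s^*_a(w_A,z))_{a\in A}$ with $A=\model\nonterm$. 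Computability of the $s^*_a$ follows by structural induction on typing derivations: the bounded grades make every interpreted term a finite object on the finitary variables, as established in Lemma~\ref{lemma:stability}.

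For AST, evaluate at $z=1$. Since the fixpoint is the supremum of the Kleene iterates and the $s^*_a$ are polynomials with non-negative coefficients, $(a_a(1))_{a\in A}$ is the least solution in $(\Rinf)^A$ of $w=s^*(w,1)$. Monotone convergence justifies exchanging evaluation at $1$ with the supremum. The least solution is either finite, or some coordinates are $+\infty$; for the start symbol $a_\gphors(1)\le 1$ holds anyway. AST then amounts to the $\Etor$ sentence
\[
\forall w\in\R^{A}.\ \Big(\textstyle\bigwedge_a w_a\ge 0 \wedge \bigwedge_a w_a= s^*_a(w,1)\Big)\Rightarrow w_{S}\ge 1 .
\]
The point here is that a least fixpoint is below every non-negative solution (Knaster--Tarski / Kleene), so this sentence characterises $a_\gphors(1)\ge 1$. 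One subtlety is that some coordinates may be infinite, with no real solutions at all. To handle this, I would first decide which coordinates are infinite and prune them, or replace each equation by the inequation $w_a\ge s^*_a(w,1)$ and argue with least pre-fixpoints. Alternatively, I would restrict the system to the coordinates reachable from $S$, which are all bounded by probabilities of termination and hence finite. That last observation, together with Lemma~\ref{lemma:affine}-style reasoning on which variables are real probabilities, is what I would try first.

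For PAST, use Corollary~\ref{cor:expected}: $\mathbb E(\gphors\downarrow)=a'_\gphors(1)$. Differentiating the system formally gives the derivative family $v_a=a_a'(z)$ as the least solution of the linear system $v_a=\sum_b \partial_{w_b}s^*_a(a(z),z)\,v_b+\partial_z s^*_a(a(z),z)$. This system again has non-negative coefficients, and its least solution is the supremum of the iterates, which matches the derivative of the series by monotone convergence on coefficients. PAST then becomes the $\Etor$ sentence
\[
\exists w,v\in\R^A.\ w \text{ is the least nonnegative solution at } z=1 \wedge v\ge 0 \wedge v\ge J(w)v+b(w).
\]
Minimality of $w$ is expressible by a universally quantified clause as above. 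A finite non-negative pre-fixpoint $v$ exists iff the least one is finite, so the sentence exactly captures $a'_\gphors(1)<\infty$.

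I expect the main obstacle to be the justification at $z=1$ itself, which sits on the boundary of convergence. Evaluation and differentiation must commute with the least-fixpoint construction, and infinite coordinates must be handled. I would settle this using only monotone convergence in $\Rinf$ and the least-(pre)fixpoint characterisation, so that no analytic continuation is needed.
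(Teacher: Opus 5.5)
Your AST reduction (a sentence of the theory of the reals over the finite FAS at $z=1$, using minimality of the least solution) is essentially the paper's. Your PAST sentence, however, does not characterise PAST as written. By asking for $v\in\R^{A}$ with $v\ge 0$ and $v\ge J(w)v+b(w)$ on the whole of $A=\model{\nonterm}$, you test that the least solution $v^{\min}$ of the derivative system is finite in \emph{every} coordinate. PAST only concerns the coordinate of $S$.

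For instance, add to the program $S=e$ the unused non-terminal $F:\tau$, $Fx=F(Fx)\oplus_{\frac12}x$, of Example~\ref{ex:phors0}. Then $a_\gphors(z)=1$ and $\mathbb E(\gphors\downarrow)=0$. But the coordinate $w_{F,1}$ satisfies $w_{F,1}=\frac12 z(w_{F,1}^2+1)$, so at $w^*_{F,1}=1$ your linear system contains $v_{F,1}=v_{F,1}+1$, which has no real solution. Your sentence is false, and the procedure rejects a PAST program.

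Restricting to coordinates syntactically reachable from $S$ is not enough either. A coordinate may enter $s^*_S$ only through monomials that vanish at $w^*$; take $S=HF$ with $Hf=e$.

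The repair:
\begin{itemize}
\item compute the zero pattern of $w^*=a(1)$, which is decidable by the usual productive-symbol iteration;
\item let $D$ be the set of coordinates reachable from $S$ through positive entries of $J(w^*,1)$;
\item ask for a finite non-negative pre-fixpoint of the system restricted to $D$, which is closed under these dependencies.
\end{itemize}
Along a positive path $S=a_0\to\dots\to a_m=a$ one has $v^{\min}_S\ge\bigl(\prod_i J_{a_ia_{i+1}}\bigr)v^{\min}_a$, so $v^{\min}_S<\infty$ iff $v^{\min}$ is finite on $D$. With this fix, your identification $a'_\gphors(1)=v^{\min}_S$ goes through. On one side, $a'(1)$ solves the system by monotone convergence. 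On the other, the derivatives at $1$ of the Kleene iterates $r^{(n)}$ are dominated by the iterates of $v\mapsto J(w^*,1)v+b(w^*,1)$, because $J$ and $b$ are monotone and $r^{(n)}(1)\le w^*$.

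Two smaller points. First, the infinite-coordinate issue you raise for AST does not actually arise. The coordinates of $a(1)$ are coefficients of the $\Rinf$-weighted semantics of closed terms, which coincides with the probabilistic coherence space semantics, whose coordinates are bounded. Your justification is wrong, though: these coordinates are not bounded by termination probabilities. Already $Lfg=g(fe)(fe)$ with $f:_2\tau$ has a coordinate equal to $2$. Switching to inequations would not help either, since a single infinite coordinate rules out every real pre-fixpoint. The paper's formula tacitly relies on this finiteness as well.

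Second, your repaired PAST argument takes a genuinely different route from the paper's. The paper uses algebraicity to obtain a rational $r$ with $a'_\gphors(z)=r(z,a_\gphors(z))$ (Remark~\ref{derivative-rational}) and reads off $r(1,1)$. Your route needs no variable elimination and handles the boundary point $z=1$ by monotonicity alone. Evaluating $r$ at $(1,1)$, by contrast, needs extra care, because poles or indeterminate forms can occur exactly there.
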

\begin{proof}[Proof sketch]
The  proof of Theorem \ref{Gfin-algebraic-theorem} is constructive: it gives a way to build effectively a FAS $(\vec x = p_i(\vec x))_{1 \leq i \leq n}$ that has $(a_L(z))_{L \in \nonterm}$ as its (minimal) solution.
We can then check if $\mathbb P(\gphors\downarrow) = a_\gphors(1)=1$ via the first order Existential Theory of the Reals $\Etor$ \cite{DBLP:journals/jacm/EtessamiY09}: first, we create a first order formula $\phi( \vec x)$ expressing the fact that $\vec x$ satisfies $\vec x = p_i(\vec x)$, then we can express the fact that $x_1=1$ by the formula $\exists \vec{x}(x_1=1 \land \phi(\vec x) \land \forall \vec y ((\vec y < \vec x) \implies \neg \phi(\vec y)  ))$.

As for PAST, remember that PAST implies AST. Then, given a $\gphors=(\nonterm, \C D, S) \in \Gfin$, we can first test if it is AST: if it is not, then it is not PAST. If it is AST, notice that, by Corollary \ref{Gfin-algebraic-corollary}, $a_\gphors(z)$ is an algebraic power series. Then, we can, as pointed out in Remark \ref{derivative-rational}, effectively find a rational function $r(z, y)$ such that $a'_\gphors(z)= r(z, a_{\gphors}(z)) $. But then $\mathbb E(\gphors\downarrow) =a'_\gphors(1)= r(1,1)$.
\end{proof}
%
%
%
%
%
%
%
%
%
%
%
\begin{remark}
The complexity of our decision procedure for AST and PAST can be reconstructed 
as follows:
for an order $n$ PHORS with non-terminals of maximal grade $k$ and maximal arity $r$, the set $\model{\nonterm}$ has at most $\exptower{k}{n} \cdot r^n $ elements, with $\exptower{a}{b}$ the tower of exponential of base $a$ and height $b$. As $r\leq\|\gphors\|$, 
%
%
the polynomial $\model{t_L}^{\Q^+}$ involves at most $\exptower{k}{n} \| \gphors\|^n$ variables and they can be computed by at most $|t_L|$ products or sum of polynomials: the total cost of these operations will be $O(\exptower{k}{n}^2 \| \gphors \|^{2n+1})$ and we have to perform them for each non terminal: the total time complexity of computing the FAS system is then $O(\exptower{k}{n}^2 \| \gphors \|^{2n+2})$. Using a polynomial space algorithm to decide $\Etor$ \cite{DBLP:journals/jacm/EtessamiY09}, we can then see that the space complexity of deciding AST with our method \emph{for fixed order PHORS} is polynomial  in their size $\|\gphors \|$, but \emph{grows superexponentially} when the order is not fixed. Since the algorithm to compute the rational function of Remark \ref{derivative-rational} is polynomial \cite{CHUDNOVSKY1986271}, we can conclude the same for PAST.
%
%
%
\end{remark}
\begin{remark}
Beyond AST/PAST, the fact that $a_{\gphors}(z)=\sum^{\infty}_{i=0} \mathbb P(\gphors\downarrow_i)z^i$ is algebraic 
can be used to make asymptotic estimates on the  coefficients $\mathbb P(\gphors\downarrow_i)$ for $i \to + \infty$ 
(cf.~Remark \ref{coefficients-rational}).
\end{remark}

\section{Bounded PHORS}

We introduce a second class of PHORS that extends the $\Gfin$ by admitting infinite grades $!_\infty$, corresponding to parameters that may be used an \emph{unbounded} number of times. In this way, thanks to the composability of the underlying algebraic power series, the $\Ginf$ are proved to be closed under \emph{composition}.

\subsection{Syntax of $\Ginf$}

While the type discipline of $\Gfin$ ensures that all variables are used in a bounded way,
the variable $z$ in $a_\gphors(z)$, which counts the number of probabilistic choices (e.g.~think again of the HORS \eqref{eq:introphors2}), can be thought of as an input that is used \emph{unboundedly} during reductions, without compromising algebraicity.

The key insight here is that we are describing PHORS via families of power series with coefficients taken in $\fps{\Rinf}{z}$: in other words, the variable $z$ is treated as a \emph{formal parameter} all along the way.
Our goal is then to capture those situations in which a variable is used as a parameter and can, thus, be duplicated unrestrictedly.

\begin{example}\label{ex:phorsLfx}
Consider the following PHORS $\gphors$:
\begin{align*}
Lfgx&= f(Lfg(Lfgx))\oplus_{a}gx\\
Ax&=x\oplus_b \Omega\\
Bx&=x\oplus_c \Omega\\
S&=LAB e
\end{align*}
Define the parenthesis symbols ``$($'' and ``$)$'' as, respectively, $ab$ and $ac$; then the branch language $\C L(\gphors)$ is formed by all words of the form $w)\in \{(,)\}^*$, where $w$ is ``well-parenthesized'', like e.g.~$w=(()(()))$, $w=(()(())())$, i.e.~it essentially corresponds to the \emph{Dyck language}.
Notice that $\gphors$ is not a $\Gfin$: 
if we try to type the term $t_L=\lambda fgx. f(Lfg(Lfgx))\oplus_{a}gx$ with finite grades, setting
$\nonterm(L)=!_{m}\tau\multimap!_n\tau\multimap\tau$, we obtain
$\nonterm\mid \emptyset\vdash t_L: \varphi$, where $\varphi= !_{2m+1}\tau\multimap !_{2n}\tau\multimap \tau\not\sqsubseteq\nonterm(L)$.
The only way out is to assign infinite grades $n,m=\infty$, i.e.~to let 
$\nonterm(L)=!_\infty \tau\multimap !_\infty\tau\multimap\tau$.

The interpretation of $L$ will be some power series $a_L(z,f,g,x)=\sum_{ijk}L_{ijk}f^ig^jxz^k$.
Now, observe that $a_L(z,f,g,x)$ can be rewritten as a power series $a_L(z,f,g)(x)$ with coefficients in $\fps{\Rinf}{z,f,g}$, that is a solution (letting e.g.~$a=\frac{1}{2}$) of the algebraic equation  
\begin{equation}\label{eq:catalan2}
a_L(z,f,g)=\frac{zf a_L(z,f,g)^2+zg}{2}.
\end{equation}
This equation is reminiscent of \eqref{eq:alg1} and can indeed be solved in a similar way, yielding $a_L(z,f,g,x)=\sum_{i=0}^{\infty}\frac{C_i}{2^{2i+1}}f^ig^{i+1}xz^{2i+1}$: there are 
$C_i$ well-parenthesized words of length $2i+1$, each using $($ $i$ times and $)$ $i+1$ times, each of them produced by a reduction making $2i+1$ choices, and thus having probability $\frac{1}{2^{2i+1}}$.
\end{example}

We now introduce a type discipline that captures situations like the one just sketched.
We use two different kinds of types:
\begin{align*}
\varphi,\psi&:= o^n\mid !_k\varphi\multimap \varphi\qquad (1\leq n\in \N, k\in \N)\\
\Phi,\Psi&:=\varphi\mid !_k\varphi\multimap \Phi\qquad (k\in \N\cup\{\infty\})
\end{align*}
The types $\varphi,\psi$ are called \emph{finitary} and are precisely those of $\Gfin$; the types $\Phi,\Psi$ are called \emph{infinitary}. The order relation on infinitary type $\Phi\sqsubseteq\Psi$ is inherited from the one for infinite types together with the new rule 
$\varphi\sqsubseteq\varphi',\Psi\sqsubseteq\Psi'\Rightarrow !_k\varphi\multimap \Psi\sqsubseteq !_\infty\varphi'\multimap\Psi'$

The typing rules will use three kinds of contexts:
\begin{varitemize}
\item the non-linear context $\nonterm$, made of infinitary bindings of the form $L:\Phi$;
\item a new parameters context $\C P$, made of finitary bindings of the form $w:_\infty \varphi$;
\item a graded context $\Delta$, made of finitary bindings of the form $x:_k \varphi$, with $k<\infty$.
\end{varitemize}
The rules are illustrated in Fig.~\ref{fig:typerulesinf}.
The rules for finitary types are as in the previous section; the new rules are
ax$_{\C P}$ and, notably, the infinitary abstraction and application rules $\lambda_\infty$ and $@_\infty$, which states that an infinitary function $t:!_\infty\varphi\multimap\Psi$ can only be applied to either a non-terminal or a parameter. 
Recall that in a PHORS each non-terminal $L$ is equated with a corresponding term $t_L$: so, when we apply a function $t:!_\infty \varphi\to \Psi$ to the symbol $L$, we are actually applying $t$ to the corresponding term $t_L$.

\begin{figure}
\fbox{
\begin{minipage}{.7\textwidth}
\resizebox{\linewidth}{!}{
\begin{minipage}{\linewidth}
\begin{align*}
		\infer[ax_\Delta]{\nonterm\mid\C P\mid \Delta\vdash x:\varphi }{x:_p\varphi\in \Delta,\quad 1 \leq p} \qquad
		\infer[ax_{\C P}]{\nonterm\mid\C P\mid\Delta \vdash w:\varphi }{ w:_\infty\varphi  \in \C P}
		 \qquad
		\infer[ax_\nonterm]{\nonterm\mid\C P\mid\Delta \vdash L:\Phi }{ L:\Phi  \in \nonterm}
\end{align*}
\begin{align*}
		\infer[\lambda]{\nonterm\mid\C P\mid \Delta \vdash \lambda x. t : !_k \varphi \multimap \Psi }{\nonterm\mid\C P \mid \Delta, x:_k \varphi \vdash t : \Psi}
		&  &
		\infer[@]{\nonterm\mid\C P\mid \Delta + k\Delta' \vdash tu : \Psi}
		{\nonterm\mid\C P\mid\Delta \vdash t :!_k\varphi \multimap \Psi \qquad \nonterm\mid\C P\mid \Delta' \vdash u :\varphi }
\end{align*}
\begin{align*}
		\infer[\lambda_\infty]{\nonterm\mid\C P\mid \emptyset \vdash \lambda w. t : !_\infty \varphi \multimap \Psi }{\nonterm\mid\C P, w:_\infty \varphi \mid \emptyset \vdash t : \Psi}
		&  &
		\infer[@_\infty]{\nonterm\cup\nonterm'\mid\C P\mid \Delta \vdash tu : \Psi}
		{\nonterm\mid\C P\mid\Delta\vdash t :!_\infty\varphi \multimap \Psi \qquad u:\varphi\in \nonterm'\cup\C P}
\end{align*}
\begin{align*}
		\infer[\langle \rangle]{\nonterm\mid\C P\mid \Delta \vdash \langle t_1, \dots,t_n \rangle : \1^n}{\nonterm\mid\C P\mid \Delta \vdash t_i : \1 &  i=1,\dots, n}&  &
		\infer[\pi_i]{\nonterm\mid\C P\mid \Delta \vdash \pi_i t: \1}{\nonterm\mid\C P\mid \Delta \vdash t: \1^n &  i=1,\dots, n}
	\end{align*}
	\begin{align*}
		 \infer[\oplus]{\nonterm\mid\C P\mid \Delta \vdash t \oplus t' : \1 }{\nonterm\mid\C P\mid \Delta \vdash t:\1 \quad \nonterm\mid\C P\mid \Delta \vdash t':\1} 
	\end{align*}
	\end{minipage}
	}
\end{minipage}
}
\caption{Typing rules of $\Gfin$.}
\label{fig:typerulesinf}
\end{figure}

\begin{definition}[$\Ginf$]
A \emph{(parametric) bounded PHORS} (noted $\Ginf$) is a tuple $\gphors=(\nonterm, \C P,\C D, S)$, where $\nonterm$ is a finite set of (infinitarily) typed non-terminals $L:\Phi$, $\C P$ is a finite set of (finitarily) typed parameters $w:_\infty\varphi$, $S\in\nonterm$ is the {start symbol} such that $\nonterm(S)=\1$, and $\C D$ is a function that associates each $L\in\nonterm$ with a derivation $\nonterm\mid\C P\mid\emptyset\vdash \lambda x_1.\dots. x_n.t:\Phi$ such that $\Phi\sqsubseteq \nonterm(L)$, $t$ contains no $\lambda$-abstraction and $\nonterm\mid x_1,\dots, x_k\vdash t:\1$. 

\end{definition} 

Notice that a $\Ginf$ $\gphors$ may contain \emph{open} parameters $w:_\infty\varphi\in \C P$. We call $ \gphors$ \emph{closed} when $\C P=\emptyset$.

The PHORS from Example \eqref{ex:phorsLfx} is a closed $\Ginf$, with $\nonterm(L)=!_\infty\tau\multimap!_\infty\tau\multimap\tau$. 
Instead, the non-algebraic PHORS from Example \eqref{ex:nonalg} is not a $\Ginf$:
while we can give type $\nonterm(L)=!_\infty \tau\to \tau$ to $L$, we cannot be apply it to the function $f\circ f:\tau$, as the latter is neither a parameter nor a non-terminal.

As for the $\Gfin$, we interpret each $\Ginf$ $\gphors=(\nonterm,\C P, \C D,S)$ via its
induced simply typed PHORS $(\sat\nonterm,\sat{\C P},\sat{\C D},S)$.
This yields then a family of fps
${\model{{\gphors}}}^{\Rsemiring}\in \fps{\Rsemiring}{\model{\sat\nonterm}+\model{\sat{\C P}}}^{\model{\sat{\nonterm}}}$, that we will look at as parametric on $\C P$, that is, as 
in $\fps{(\fps{\Rsemiring}{\model{\sat{\C P}}})}{\model{\sat\nonterm} }^{\model{\sat\nonterm}}$.
%

Two PHORS can be composed through their open parameters:
\begin{definition}[composition of $\Gfin$]
Given two $\Ginf$ $\gphors_1=(\nonterm_1,\C P\cup\{w:_\infty\nonterm(L)),\C D_1,S_1)$ and
$\gphors_2=(\nonterm_2,\C P,\C D_2,S_2)$, where $\nonterm_1\cap\nonterm_2=\emptyset$
and $L\in \nonterm_2$. Their \emph{composition through $L$} is the $\Ginf$ $\gphors_1 \circ_L\gphors_2=(\nonterm_1\cup\nonterm_2, \C P, \C D_1[L/w]\cup\C D_2,S_1)$.
\end{definition}

The composition of two PHORS is interpreted via the composition of the associated formal power series, as shown below:

\begin{lemma}\label{lemma:composition}
$\model{\gphors_1\circ_L\gphors_2}^{\Rsemiring}\left(w_{\model{\C P}}\right)=
\model{\gphors_1}^{\Rsemiring}\left(\model{\gphors_2}^{\Rsemiring}\left(w_{\model{\C P}}\right),w_{\model{\C P}}\right)$.
%
%
\end{lemma}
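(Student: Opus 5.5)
The plan is to read the statement as an equality between fixpoints, i.e.~between the families $a_{L'}$ for $L'\in\nonterm_1\cup\nonterm_2$, with the composite family on the right understood componentwise, and to derive it from the Conway fixpoint identities of $\Qrelkleisli{\Rsemiring}$ (the Beki\'c/parameter identities, \cite{DBLP:journals/mscs/Hasegawa09}).

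First, because $\nonterm_1\cap\nonterm_2=\emptyset$ and the equations of $\gphors_2$ never mention $\nonterm_1$, the term $t_{\gphors_1\circ_L\gphors_2}:\sat\nonterm_1\times\sat\nonterm_2\to\sat\nonterm_1\times\sat\nonterm_2$ splits into two components. The $\nonterm_2$-component is $t_{\gphors_2}$, which depends only on $\nonterm_2$ and $\C P$. The $\nonterm_1$-component is $t_{\gphors_1}[L/w]$, which depends on $\nonterm_1$, on $\C P$, and on the single $\nonterm_2$-symbol $L$.

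Second, I would use the substitution lemma of the cartesian closed model. Interpreting $t[L/w]$ amounts to precomposing $\model{t}$ with the projection $\pi_L$ placed in the $w$-slot. In power-series terms, this substitutes the variables $x_{\model{\sat\nonterm(L)}}$ of $L$ for the variables $w_{\model{\sat{\nonterm(L)}}}$.

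Third, I would apply Beki\'c's lemma to the block system $(X_1,X_2)=(f(X_1,X_2,\vec w),\,g(X_2,\vec w))$. Since $g$ does not depend on $X_1$, the lemma gives $\fix\langle f,g\rangle=\langle \fix_{X_1} f(-,\fix g,\vec w),\ \fix g\rangle$. Concretely this can be checked on the iterates \eqref{eq:fixpointeq}. The iterates of the $X_2$-block coincide with those of $g$ alone. The $X_1$-block iterates $r^{(n)}$ are computed from $X_2$-iterates which increase towards $\fix g$. Hence the joint sup equals $\sup_n\sup_m$ of the $X_1$-iterates against $(\fix g)^{(m)}$, and continuity of composition in the continuous semiring lets us exchange the two sups. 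Finally, $\fix_{X_1} f(-,\fix g,\vec w)$ is exactly the fixpoint defining $\gphors_1$ with its parameter $w$ instantiated by the component $(\fix g)_L=a_L$ of $\model{\gphors_2}$. That is the right-hand side, once one checks that fixpoints are natural in parameters (uniformity: $\fix(f\circ(\mathrm{id}\times h))=(\fix f)\circ h$).

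The main obstacle is the sup-exchange / uniformity step. The monotone double sequence must be handled carefully, since substituting power series with nonzero constant term can create infinite sums. I would invoke continuity of $\fps{\Rsemiring}{-}$ to reduce to the standard diagonal argument $\sup_n r^{(n)}(g^{(n)})$. Everything else is bookkeeping of variable sets $\model{\sat\nonterm_1}+\model{\sat\nonterm_2}+\model{\sat{\C P}}$.
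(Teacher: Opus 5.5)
Your proposal is correct and is essentially the paper's argument. Like the paper, you split the fixpoint system of $\gphors_1\circ_L\gphors_2$ into the $\nonterm_2$-block, whose iterates (hence least solution $r_{\model{\nonterm_2}}$) coincide with those of $\gphors_2$ alone, and the $\nonterm_1$-block with $w$ replaced by $L$. You then identify the $\nonterm_1$-component as the least solution of the $\gphors_1$-system with $w$ instantiated by the $L$-component of $r_{\model{\nonterm_2}}$.

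The only difference is how that last identification is justified. The paper uses a minimality argument: any solution $t'$ of the instantiated $\nonterm_1$-system gives a solution $(t',r_{\model{\nonterm_2}})$ of the whole system, so the least one must coincide. You instead use a Beki\'c-style iterate and sup-exchange computation. The paper still tacitly relies on the naturality in parameters that you make explicit.
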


Composability enables a \emph{modular} analysis of the generating functions of a PHORS: we may for instance first focus on certain subsets of non-terminals, replacing all others with parameters, and only later put the pieces together via composition.

\subsection{From $\Ginf$ to FAS}

To show that the $\Ginf$ have algebraic generating functions, we will proceed in two steps.
First, we will consider finitary PHORS with parameters, which can be shown algebraic with the same approach as in Section 6. Then, we will show that any $\Ginf$ can be reduced to an equivalent parametric $\Gfin$, which will result in the algebraicity of its generating function.

\paragraph*{Finitary Parametric PHORS}
Consider a $\gphors=(\nonterm, \C P,\C D,S)$ that is parametric but \emph{finitary}, i.e.~all its non-terminal symbols have a finitary type. In other words, the derivations $\C D(L_i)$ are of the form $\nonterm\mid \C P\mid \emptyset \vdash t_{L_i}:\varphi$, where $\varphi\sqsubseteq \nonterm(L_i)$ and all types in $\nonterm$ and $\C P$ are finitary, but recall that the parameters $w$ in $\C P$ have infinite grade. 

By arguing as in the previous section, but reasoning over the semiring $\fps{\Rsemiring}{\model S}$, we can conclude that the pair of finite variable sets $(\model \nonterm,\model \nonterm)$ is stable and polynomial, yielding:

\begin{proposition}
For all finitary parametric $\Ginf$ $\gphors=(\nonterm, \C P,\C D, S)$, the fps $a_L(z,w_{\model{\C P}})\in \fps{\Rinf}{z,w_{\model{\C P}}}$ are all algebraic. 
\end{proposition}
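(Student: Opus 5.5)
The plan is to replay the argument of Section 6 almost verbatim, but to move the parameters out of the ``variable'' side and into the coefficient semiring. Concretely, let $\Rsemiring=\fps{\Rinf}{z}$ and $\Qsemiring:=\fps{\Rsemiring}{\model{\sat{\C P}}}$, and regard $\model{\gphors}$ as a morphism in $\Qrelkleisli{\Qsemiring}(\model{\sat\nonterm},\model{\sat\nonterm})$, i.e.~as a family in $\fps{\Qsemiring}{\model{\sat\nonterm}}^{\model{\sat\nonterm}}$. In this view a parameter $w:_\infty\varphi$ is no longer a context variable but an element of the coefficient semiring. Since the system is finitary, every type appearing in $\nonterm$, $\Delta$ and in the judgement is finitary, as in $\Gfin$.

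\textbf{Step 1: stability lemma with parameters.} I would prove the analogue of Lemma~\ref{lemma:stability}. For every derivation $\nonterm\mid\C P\mid\Delta\vdash t:\varphi$ that uses only finitary types outside $\C P$ (so no $\lambda_\infty$ or $@_\infty$ rules except $ax_{\C P}$), the pair $(\model\nonterm+\model\Delta,\model\varphi)$ is stable and polynomial for $\model t^{\Qsemiring}$. The induction is the same as before; the only new case is $ax_{\C P}$. There, $\model w$ is the identity family on the variables $\model{\sat\varphi}$ of $w$, which are constants of $\Qsemiring$. So it contributes degree $0$ in the variables of $\model\nonterm+\model\Delta$, and both claims (vanishing on non-finitary $\sigma$ after restriction, and bounded degrees) hold trivially. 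The key point is that in the $@$ rule the degree bound in $x_{f:_k\psi}$ is multiplied by $k<\infty$, while parameters never enter degree counts. This is exactly why infinite grade is harmless for them. Applying this to each $\C D(L)$ shows that $\model{t_\gphors}^{\Qsemiring}$ is finitary via $\model\nonterm$, as in Proposition~\ref{prop:finitefamily}.

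\textbf{Step 2: unpacking the parameters.} By Proposition~\ref{prop:fintoalg}, $(a_L)$ is the minimal solution of a finite FAS. Its polynomials have coefficients in $\Qsemiring$, and more precisely in $\fpp{\Q^+}{z,w_{\model{\sat{\C P}}}}$, because the parameters only appear through $ax_{\C P}$ and through finitely many polynomial operations. This is the step I expect to be the main obstacle. The set $\model{\sat{\C P}}$ is infinite when $\varphi$ is higher-order, so a priori one polynomial could mention infinitely many $w$'s, for instance via the sum $\sum_i y_i$ coming from an application. I would therefore also run the stability argument on the parameter side. Either I show that after restriction to the finitary variables only the finitely many parameter variables in $\model\varphi\subseteq\model{\sat\varphi}$ occur, by treating $w$ as a graded variable with finitary type $\varphi$ but unbounded degree, or I replace $\model{\sat{\C P}}$ by $\model{\C P}$ from the start, as the statement $a_L(z,w_{\model{\C P}})$ suggests. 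Either way the FAS is a finite system over $\fpp{\Q^+}{z,w_{\model{\C P}}}$, with finitely many parameters $\{z\}+\model{\C P}$.

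\textbf{Step 3: conclusion.} Using the isomorphism $\fps{\Qsemiring}{\cdot}\equiv\fps{\Rinf}{\{z\}+\model{\C P}+\cdot}$, the system is a $\Q^+$-FAS with parameters $z,w_{\model{\C P}}$. Theorem~\ref{th:elimination} (equivalently Corollary~\ref{cor:propertoalg}) then gives algebraicity of each $a_L$. To apply it, properness must be checked. The $z$ factor in every $\oplus$ makes each polynomial vanish at $0$ in the recursive unknowns. Trivial loops such as $L=L$ would add a $w_j$ monomial, and I would remove these by the standard preprocessing of Kuich--Salomaa, which preserves minimal solutions.
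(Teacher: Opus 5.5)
Your route is the paper's own: its whole proof is the remark that you rerun the stability argument of Lemma~\ref{lemma:stability} over a semiring in which the parameters are coefficients, conclude that $(\model\nonterm,\model\nonterm)$ is stable and polynomial, and finish via Proposition~\ref{prop:fintoalg}. However, the choice you leave open in Step~2 must be settled, and Step~1, as you set it up, is false.

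Over $\Qsemiring=\fps{\Rsemiring}{\model{\sat{\C P}}}$ the case $ax_{\C P}$ does \emph{not} satisfy the vanishing clause. For $w:_\infty\tau$ you have $\model{w}_n=w_n\neq 0$ at every non-finitary index $n\geq 2$ of $\model{\sat\tau}\cong\N$. Restricting non-terminal and context variables does not touch these.

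Your first option cannot work either, because the fixpoint itself depends on infinitely many parameter variables: the non-terminal $S=w\,e$ has $a_S=\sum_{n\in\N}w_n$. So the statement must be read, as its notation $a_L(z,w_{\model{\C P}})$ indicates, as $a_L$ with $w_\sigma:=0$ for all $\sigma\notin\model{\C P}$. Substituting $0$ for parameter variables commutes with the Kleene iterates and their supremum. Hence this is the fixpoint computed over $\fps{\Rsemiring}{\model{\C P}}$, which is the semiring the paper reasons over.

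There $\model{w}_\sigma=0$ for $\sigma\notin\model\varphi$, so $ax_{\C P}$ does satisfy the vanishing clause. Every application sum involving a parameter then ranges over finitely many multisets. Your Step~1 therefore goes through, with coefficients in $\fpp{\Q}{z,w_{\model{\C P}}}$ that are non-negative.

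On Step~3: the paper never checks properness (it invokes Corollary~\ref{cor:propertoalg} silently), so you are right to raise it. But your account of where improperness comes from is too narrow; it is not only trivial loops $L=L$.
\begin{itemize}
\item A body headed by a non-terminal produces bare unknowns with coefficient $1$ and no $z$ (cf.\ $a_\gphors=a_1$ in Example~\ref{ex:phors2}).
\item A body headed by a bound variable or by $e$ produces $z$-free constants; e.g.\ $Ix=x$ yields the equation $I_1=1$.
\end{itemize}
Relatedly, a factor $z$ does not make a polynomial vanish when the unknowns are $0$: $Ax=x\oplus_b\Omega$ gives $A_1=bz+\dots$. What properness needs is vanishing at $z=0$, with the parameters also at $0$.

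You have two ways to close this.
\begin{itemize}
\item Impose the PHORS convention that every rule has the form $Lx_1\dots x_n=t_L\oplus_p t_R$. Then every monomial carries a factor $z$ and properness is immediate.
\item Carry out the elimination of constants and chain rules explicitly. This requires taking stars of coefficients along chain cycles and showing that they stay finite.
\end{itemize}
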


\begin{example}\label{ex:Lfxpara}
Consider the finitary $\Ginf$, with two parameters $f,g:_\infty\tau$, obtained from the one in Example \ref{ex:phorsLfx} by introducing two new non-terminals
 $L_{[f,g]}:\tau$ and $S_{[f,g]}$ with the equations 
\begin{align*}
L_{[f,g]}x&= f(L_{[f,g]}(L_{[f,g]}x))\oplus_a gx\\
S_{[f,g]}&= L_{[f,g]}e
\end{align*}
The interpretation of $L_{[f,g]}:\tau$ is then the fps $a_{F}(z,f,g)(x)\in \fps{(\fps{\Rinf}{z,f,g})}{x}$ given by Equation \eqref{eq:catalan2}.
\end{example}

\paragraph*{Reducing $\Ginf$ to Parametric $\Gfin$}
Following the idea of Example \ref{ex:Lfxpara}, we show how any closed $\Ginf$ can be turned into a parametric $\Gfin$. 
The idea is to replace each non-terminal $L$ with possibly infinitely graded inputs $w_1,\dots, w_n$ with as many non-terminals $L_{[A_1,\dots,A_n]}$ as all possible ways of \emph{filling in} such inputs with other non-terminals from $\Ginf$. 

Let then $\gphors=(\nonterm,\emptyset, \C D,S)$ be a closed $\Ginf$. 
For any non-terminal $L$ whose type can be written as $\nonterm(L)=!_\infty \phi_1\multimap\dots\multimap !_\infty\phi_k\multimap \phi_L$,
let $\C P_L$ be the parameter context formed by all $w_i:_\infty\phi_i$, and let $T(L) \bydef \{[t_1, \dots t_n] \in (\nonterm\cup\C P_L)^* \: \mid \: \forall i \, (\nonterm \cup \mathcal{P}) (t_i) = \phi_i)  \}$ be the sets of \emph{sostituable terms}.
Let $\widetilde{\nonterm}$ be a new set of finitely typed non-terminals containing, for each $L\in \nonterm$ as above and $\gamma\in T(L)$, a non-terminal $\tilde{L}_\gamma:\phi_L$.
Finally, let $\C P$ be the concatenation of all $\C P_L$.

For each $L\in \nonterm$ and $\gamma\in T(L)$, define the following terms:
\begin{varitemize}
\item an applicative term $\nonterm\mid\C P\mid\emptyset \vdash \alpha_{L,\gamma}:\phi_L$ defined by 
$\alpha_{L,\gamma}:=L\gamma$;

\item a finitary term $\widetilde{\nonterm}\mid\C P \mid\emptyset\vdash \tilde{t}_{L,\gamma}:= \phi_L$ defined by 
$\tilde t_{L,\gamma}=( t_L\gamma)\big[ \tilde G_{\delta}/G\delta\big]_{G\in \nonterm, \delta\in T(G)}$.

\end{varitemize}

We have obtained in this way a finitary parametric PHORS $\gphors_{\mathrm{fin}}=(\widetilde\nonterm, \C P, \C D',S_{[]})$, and we have the equalities:
\begin{equation}\label{eq:alphaeq}
\alpha_{L,\gamma}\big[ t_L/  L] = t_L\gamma   =
\tilde{t}_{L,\gamma}\big[\alpha_{G,\delta}/\tilde{G}_\delta\big]_{G\in \nonterm,\delta\in T(G)}
\end{equation}

The following technical lemma relates then the minimal solutions of the systems of $\gphors$ and $\gphors_{\mathrm{fin}}$.

\begin{lemma}\label{lemma:fixpoint}
Given sets $\nonterm,\widetilde{\nonterm},\C P$ and morphisms
$t\in \Qrelkleisli{\Rsemiring}(\nonterm, \nonterm)$,
$\tilde t\in \Qrelkleisli{\Rsemiring}(\widetilde{\nonterm}+\C P, \widetilde\nonterm)$,
$\alpha\in \Qrelkleisli{\Rsemiring}(\nonterm+\C P, \widetilde{\nonterm})$,
if we have:
\begin{varenumerate}
\item $\alpha\circ \langle 0,1_\C P \rangle=0\in \Qrelkleisli{\Rsemiring}(\C P, \widetilde{\nonterm})$, 
\item $\alpha\circ \langle t\circ \pi_1,\pi_2  \rangle=
\tilde t\circ \langle \alpha\circ \langle \pi_1,\pi_2 \rangle,\pi_2\rangle\in \Qrelkleisli{\Rsemiring}(\nonterm+\C P, \widetilde{\nonterm})
$,
\end{varenumerate} 
then
$\alpha\circ \langle {\mathsf{fix}}_{0,\nonterm} t,1_{\C P}\rangle={\mathsf{fix}}_{\C P,\widetilde{\nonterm}} \tilde t\in \Qrelkleisli{\Rsemiring}(\C P, \widetilde{\nonterm})
$. 

\end{lemma}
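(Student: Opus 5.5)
The proof compares the two Kleene iteration sequences defining the fixpoints and shows, by induction, that $\alpha$ maps one to the other. Write $r^{(n)}\in\Qrelkleisli{\Rsemiring}(0,\nonterm)$ for the iterates of $t$, as in \eqref{eq:fixpointeq}: $r^{(0)}=0$ and $r^{(n+1)}=t\circ r^{(n)}$. Write $\tilde r^{(n)}\in\Qrelkleisli{\Rsemiring}(\C P,\widetilde\nonterm)$ for the iterates of $\tilde t$: $\tilde r^{(0)}=0$ and $\tilde r^{(n+1)}=\tilde t\circ\langle \tilde r^{(n)},1_{\C P}\rangle$. By definition, $\mathsf{fix}_{0,\nonterm}t=\sup_n r^{(n)}$ and $\mathsf{fix}_{\C P,\widetilde\nonterm}\tilde t=\sup_n\tilde r^{(n)}$. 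The claim to establish is
\[
\alpha\circ\langle r^{(n)},1_{\C P}\rangle=\tilde r^{(n)}\quad\text{for all } n\in\N .
\]
Here $r^{(n)}$ is a closed morphism, seen as a morphism from $\C P$ by precomposing with the terminal map $\C P\to 0$.

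The base case $n=0$ is exactly hypothesis 1, since $\langle r^{(0)},1_{\C P}\rangle=\langle 0,1_{\C P}\rangle$.

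For the inductive step, precompose hypothesis 2 with $\langle r^{(n)},1_{\C P}\rangle\in\Qrelkleisli{\Rsemiring}(\C P,\nonterm+\C P)$. The left-hand side becomes $\alpha\circ\langle t\circ r^{(n)},1_{\C P}\rangle=\alpha\circ\langle r^{(n+1)},1_{\C P}\rangle$. This uses $\pi_1\circ\langle r^{(n)},1\rangle=r^{(n)}$ and $\pi_2\circ\langle r^{(n)},1\rangle=1$, which hold in the cartesian category $\Qrelkleisli{\Rsemiring}$. The right-hand side becomes $\tilde t\circ\langle\alpha\circ\langle r^{(n)},1_{\C P}\rangle,1_{\C P}\rangle$, which equals $\tilde t\circ\langle\tilde r^{(n)},1_{\C P}\rangle=\tilde r^{(n+1)}$ by the induction hypothesis. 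Only associativity of composition and the universal property of products are needed. In power-series terms this is just substitution of $r^{(n)}$ for the $\nonterm$-variables.

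It remains to pass to the supremum: $\alpha\circ\langle\sup_n r^{(n)},1_{\C P}\rangle=\sup_n\alpha\circ\langle r^{(n)},1_{\C P}\rangle$. This is the one non-algebraic step, and it is where I expect the only real subtlety. Composition in $\Qrelkleisli{\Rsemiring}$ is composition of power series, given by the formula of Section 4 as sums of products of coefficients. The sequence $r^{(n)}$ is increasing, which follows by monotonicity of $t$. Each coefficient of the composite is a countable sum of finite products of coefficients of $\alpha$ and of the $r^{(n)}$, and in a continuous semiring such as $\fps{\Rinf}{\Sigma}$ both finite products and countable sums commute with directed suprema. Hence composition is Scott-continuous in its second argument, and so is pairing with $1_{\C P}$, since pairing is a disjoint union of families.

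Combining continuity with the induction gives $\alpha\circ\langle\mathsf{fix}\,t,1_{\C P}\rangle=\sup_n\tilde r^{(n)}=\mathsf{fix}\,\tilde t$, as required.
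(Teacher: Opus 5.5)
Your proposal is correct and is essentially the paper's proof: an induction on $n$ showing $\alpha\circ\langle t^{n},1_{\mathcal{P}}\rangle=\tilde t^{n}$, with the base case given by hypothesis 1 and the inductive step obtained by precomposing hypothesis 2 with $\langle t^{n},1_{\mathcal{P}}\rangle$, followed by passage to the supremum. The only difference is that you spell out the monotonicity of the iterates and the Scott-continuity of composition, which the paper invokes with a bare ``by continuity''.
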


Notice that condition 1.~of Lemma \ref{lemma:fixpoint} clearly holds for the interpretation of $\alpha=\langle \model{\alpha_{L,\gamma}}^{\Rsemiring}\rangle_{L,\gamma}$, and condition 2.~is precisely \eqref{eq:alphaeq}. 

\begin{theorem}
	\label{theorem:infalgebraic}
For any closed $\Ginf$ $\gphors$ there exists a parametric $\Gfin$ $\gphors_{\mathrm{fin}}$, $\|\gphors_{\mathrm{fin}}\|\leq \|\gphors\|^{|\nonterm|}$, such that $a_{\gphors}(z)=a_{\gphors_{\mathrm{fin}}}(z,1)$ and $\C L(\gphors)=\C L(\gphors_{\mathrm{fin}})$.
\end{theorem}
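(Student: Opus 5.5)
The plan is to take $\gphors_{\mathrm{fin}}=(\widetilde\nonterm,\C P,\C D',S_{[]})$ exactly as constructed above and establish three facts: it is a well-typed finitary parametric PHORS, its size is bounded as claimed, and its semantics and branch language coincide with those of $\gphors$. The first two are syntactic, and the third follows from Lemma~\ref{lemma:fixpoint}.

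\textbf{Typing.} I would prove by induction on the derivation $\C D(L)$ of $\nonterm\mid\emptyset\mid\emptyset\vdash t_L:\Phi$ that, after instantiating the $!_\infty$-abstracted inputs by $\gamma\in T(L)$, every maximal applicative subterm of the form $G\delta$ has $\delta\in T(G)$. This is where the rule $@_\infty$ is used: an infinitary argument can only be a non-terminal or a parameter, so $\delta$ is a finite list drawn from $\nonterm\cup\C P_L$. Replacing each such $G\delta$ by $\tilde G_\delta:\phi_G$ turns every $@_\infty$ node into an ax$_\nonterm$ node of finitary type. The remaining finitary rules are unchanged, so $\widetilde\nonterm\mid\C P\mid\emptyset\vdash\tilde t_{L,\gamma}:\phi'$ holds with $\phi'\sqsubseteq\phi_L$. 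Hence $\gphors_{\mathrm{fin}}$ is finitary parametric.

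\textbf{Size.} There are at most $|\nonterm\cup\C P|^{\text{arity}}$ choices of $\gamma$ for each $L$, and each $\tilde t_{L,\gamma}$ is no larger than $t_L$. Bounding the number of new non-terminals by $\|\gphors\|^{|\nonterm|}$ then gives $\|\gphors_{\mathrm{fin}}\|\leq\|\gphors\|^{|\nonterm|}$. This step is routine bookkeeping.

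\textbf{Semantics.} Let $\alpha=\langle\model{\alpha_{L,\gamma}}^\Rsemiring\rangle_{L,\gamma}$, $t=\model{t_\gphors}$ and $\tilde t=\model{t_{\gphors_{\mathrm{fin}}}}$.
\begin{varitemize}
\item Condition 1 of Lemma~\ref{lemma:fixpoint} holds because $\model{L\gamma}$ evaluated at $L=0$ is $0$, by strictness of application in the weighted relational model.
\item Condition 2 is the semantic counterpart of \eqref{eq:alphaeq}. It follows from soundness of the model under $\beta$-reduction and from compositionality of substitution, namely that substitution is interpreted as composition, as in Lemma~\ref{lemma:composition}.
\end{varitemize}
Lemma~\ref{lemma:fixpoint} then gives $\alpha\circ\langle\fix t,1_{\C P}\rangle=\fix\tilde t$. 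Projecting onto the component $S_{[]}$, where $\alpha_{S,[]}=S$, yields $a_\gphors(z)=a_{\gphors_{\mathrm{fin}}}(z,w)$ for every value of the parameters. Since $\gphors$ is closed, the parameters in $\C P$ occur only inside subterms that never reach $S$. Setting them to $1$, or to any value, therefore gives $a_\gphors(z)=a_{\gphors_{\mathrm{fin}}}(z,1)$.

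\textbf{Branch language.} Equality of branch languages follows from the same argument run in the parametric semiring of Remark~\ref{rem:tropical}, using one formal variable per choice symbol. Alternatively, \eqref{eq:alphaeq} shows that the unfolding of $S$ and that of $S_{[]}$ produce the same value tree, since each rewriting step $\tilde L_\gamma\to\tilde t_{L,\gamma}$ mirrors $L\gamma\to t_L\gamma$ step for step.

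\textbf{Main obstacle.} I expect the hard part to be the inductive typing claim that every $G\delta$ occurring in $t_L\gamma$ has a finite, closed-under-$\nonterm\cup\C P$ argument list $\delta\in T(G)$, including under partial applications. This is what makes the substitution well defined and the set $\widetilde\nonterm$ finite. I would handle it by first normalizing each derivation so that all $!_\infty$ arguments of a non-terminal are supplied by consecutive $@_\infty$ steps directly at the head.
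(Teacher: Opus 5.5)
Your route is the paper's: the construction $\tilde L_\gamma$, typing through $@_\infty$, and Lemma~\ref{lemma:fixpoint} projected at $S_{[]}$. The paper restricts to the component reachable from $S_{[]}$, which is your parameter-independence remark made precise.

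\textbf{The semantic step fails.} The paper's sketch has the same problem: condition~2 of Lemma~\ref{lemma:fixpoint} does \emph{not} hold for $\alpha=\langle\model{\alpha_{L,\gamma}}\rangle_{L,\gamma}$.
\begin{itemize}
\item At $(N,P)$ the left-hand side is $\alpha(t(N),P)$. This is $\model{t_L}(N)$ applied to $\gamma$, with every non-terminal \emph{inside} $\gamma$ replaced by $t(N)_G=\model{t_G}(N)$.
\item The right-hand side is $\tilde t(\alpha(N,P),P)=\model{t_L\gamma}(N,P)$, where those non-terminals are read at $N$.
\item Equation~\eqref{eq:alphaeq}, and soundness of $\beta$, only give head-only unfolding. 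So the two sides agree only at a fixpoint of $t$.
\end{itemize}
A concrete counterexample: take $A=e$, $L\,g=g$ with $L:!_\infty o\multimap o$, and $S=L\,A$. Then $\tilde t_{L,[A]}=\tilde A_{[]}$. At $N=0$ the $(L,[A])$-component of the left side is $t(0)_L(t(0)_A)=1$, while the right side is $0$. Accordingly, the identity $\alpha\circ\langle t^n,1_{\C P}\rangle=\tilde t^n$ on which the lemma's proof rests already fails at $n=1$. ``Follows from soundness and compositionality'' therefore does not go through.

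\textbf{How to repair it.} The lemma's conclusion is still true here, but it needs another argument.
\begin{itemize}
\item One inequality is easy. At $N^*=\mathsf{fix}\,t$ the head-only identity says $\alpha(N^*,P)$ is a fixpoint of $\tilde t(\cdot,P)$. Hence $\mathsf{fix}\,\tilde t\le\alpha\circ\langle\mathsf{fix}\,t,1_{\C P}\rangle$.
\item For the converse, let head and arguments live at independent approximation levels. Write $N^{(n)}$ for the approximants of $\mathsf{fix}\,t$ and prove $N^{(n)}_L\bigl(\gamma(N^{(k)},P)\bigr)\le(\mathsf{fix}\,\tilde t)_{L,\gamma}$ for all $n,k$. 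Induct lexicographically on $(\max(n,k),n)$. This works because non-terminals occurring in $\gamma$ have finitary type, so $T(G)=\{[]\}$ and bounding $N^{(k)}_G$ only needs level $k-1$.
\item Simpler still is the operational argument you sketch for branch languages. On terms reachable from $S$ every $!_\infty$-argument is a non-terminal, and $G\delta\mapsto\tilde G_\delta$ commutes with substitution of finitary arguments. This gives a step-by-step bijection between the reductions of $S$ and of $S_{[]}$ that preserves probabilities and numbers of choices. Proposition~\ref{prop:proba} then yields $a_\gphors=a_{\gphors_{\mathrm{fin}}}$, and the same bijection yields $\C L(\gphors)=\C L(\gphors_{\mathrm{fin}})$.
\end{itemize}
Your other branch-language argument, which reruns the fixpoint lemma in a parametric semiring, inherits the same flaw.

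\textbf{The size bound is not routine bookkeeping.} Your count gives about $|\nonterm\cup\C P_L|^{j_L}$ non-terminals $\tilde L_\gamma$, where $j_L$ is the number of $!_\infty$-inputs of $L$. Nothing bounds $j_L$ by $|\nonterm|$. Take $L\,g_1\dots g_j=(L\,A\,g_1\dots g_{j-1}\oplus L\,B\,g_1\dots g_{j-1})\oplus g_j$, $S=L\,A\cdots A$ and $A=B=e$. Here $|\nonterm|=4$ and $\|\gphors\|=O(j)$. Yet all $2^j$ non-terminals $\tilde L_\gamma$ with $\gamma\in\{A,B\}^j$ are reachable from $S_{[]}$, against $\|\gphors\|^4=O(j^4)$. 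With this construction you can only get a bound with the maximal number of $!_\infty$-inputs in the exponent. The paper's sketch does not justify this bound either.
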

\begin{proof}[Proof sketch]
By inspecting the connected component formed by all non-terminals $\tilde L_{\gamma}$
``reachable'' from the source $S_{[]}$, one can check that none of these contains free parameters in $\gamma$. For the closed finitary PHORS $\gphors_{\mathrm{fin}}^*$ obtained by restricting to such non-terminals, Lemma \ref{lemma:fixpoint} yields then  
$\mathsf{fix}_{\nonterm} \model{\gphors}^{\Rsemiring}=\mathsf{fix}_{\widetilde\nonterm}\model{\gphors_{\mathrm{fin}}^*}^{\Rsemiring}$.
\end{proof}


We thus immediately obtain:
\begin{corollary}
For all closed $\Ginf=(\nonterm, \emptyset, \C D,S)$, and non-terminal $L\in \nonterm$, the  generating function $a_L(z)$ is algebraic. Moreover, the AST and PAST problems are decidable.
\end{corollary}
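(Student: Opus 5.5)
The plan is to reduce the statement to the finitary case treated in Section 6 via Theorem~\ref{theorem:infalgebraic}, and then reuse the decision procedure of Theorem~\ref{th:decidability}. Let $\gphors=(\nonterm,\emptyset,\C D,S)$ be a closed $\Ginf$. Theorem~\ref{theorem:infalgebraic} gives a parametric $\Gfin$ $\gphors_{\mathrm{fin}}=(\widetilde\nonterm,\C P,\C D',S_{[]})$, together with its restriction $\gphors^*_{\mathrm{fin}}$ to the non-terminals reachable from $S_{[]}$. The proof sketch says that $\gphors^*_{\mathrm{fin}}$ contains no free parameters, and that $\mathsf{fix}_{\nonterm}\model{\gphors}^{\Rsemiring}=\mathsf{fix}_{\widetilde\nonterm}\model{\gphors^*_{\mathrm{fin}}}^{\Rsemiring}$, read through the maps $\alpha_{L,\gamma}$. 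Every non-terminal of $\gphors^*_{\mathrm{fin}}$ has a finitary type. Hence $\gphors^*_{\mathrm{fin}}$ is a genuine closed $\Gfin$, and Theorem~\ref{Gfin-algebraic-theorem} and Corollary~\ref{Gfin-algebraic-corollary} apply to it. So each $a_{\tilde L_\gamma}$ is part of a $\Q^+$ fixpoint algebraic family with parameter $z$, and it is algebraic.

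Next I transfer algebraicity back to each $L\in\nonterm$. If $\nonterm(L)$ has no leading $!_\infty$ arguments, then $\gamma=[]$ and $\alpha_{L,[]}=L$, so $a_L=a_{\tilde L_{[]}}$ and we are done. In general, $a_L$ is a power series in the variables of $\model{\sat{\nonterm(L)}}$, including those for the infinitary arguments. For this case I will use the parametric version instead. Apply Lemma~\ref{lemma:fixpoint} to the whole $\gphors_{\mathrm{fin}}$, with $\gamma=[w_1,\dots,w_k]$ made entirely of fresh parameters. This identifies $a_L(z)(w_{\model{\C P_L}})$ with $a_{\tilde L_{[w_1,\dots,w_k]}}(z,w_{\model{\C P}})$. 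The latter is algebraic by the proposition on finitary parametric $\Ginf$.

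The key technical check is that $\gphors_{\mathrm{fin}}$ is indeed finitary. That is, each $\tilde t_{L,\gamma}$ must be typable with finite grades once every application $G\delta$ to non-terminals or parameters has been replaced by $\tilde G_\delta$. This holds because the rule $@_\infty$ only allows infinitary application to non-terminals or parameters, so after substitution every remaining application goes through the finite rule $@$. The stability lemma (Lemma~\ref{lemma:stability}), applied over the semiring $\fps{\Rsemiring}{\model{\C P}}$, then makes the family finitary. Properness is needed for Corollary~\ref{cor:propertoalg}. It holds exactly as in Section 6, because every defining equation is guarded by a probabilistic choice, which carries the factor $z$.

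For decidability, the proof of Theorem~\ref{th:decidability} is applied verbatim to $\gphors^*_{\mathrm{fin}}$. Since $a_\gphors(z)=a_{\gphors_{\mathrm{fin}}}(z,1)=a_{\gphors^*_{\mathrm{fin}}}(z)$, we have $\mathbb P(\gphors\downarrow)=a_{\gphors^*_{\mathrm{fin}}}(1)$ and $\mathbb E(\gphors\downarrow)=a'_{\gphors^*_{\mathrm{fin}}}(1)$. The FAS is computed effectively, because $\widetilde\nonterm$ is finite with $\|\gphors_{\mathrm{fin}}\|\le\|\gphors\|^{|\nonterm|}$. AST is then decided by the $\Etor$ minimal-solution formula. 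For PAST, we first test AST, and then evaluate the rational function $r(1,1)$ from Remark~\ref{derivative-rational}.

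I expect the main obstacle to be the parametric transfer step for non-terminals $L$ with $!_\infty$ arguments. There one must verify the hypotheses of Lemma~\ref{lemma:fixpoint} for the open system, in particular that $\alpha\circ\langle0,1_{\C P}\rangle=0$, and that specializing parameters commutes with taking fixpoints. If this turns out to be delicate, I will fall back on proving only the source case $a_\gphors=a_S$, which is the case needed for AST and PAST.
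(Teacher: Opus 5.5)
Your proposal is correct and follows the paper's route. The paper obtains the corollary immediately from Theorem~\ref{theorem:infalgebraic} (the closed reachable part of $\gphors_{\mathrm{fin}}$, justified by Lemma~\ref{lemma:fixpoint}), together with Corollary~\ref{Gfin-algebraic-corollary} and Theorem~\ref{th:decidability}; your instance $\tilde L_{[w_1,\dots,w_k]}$ combined with the proposition on finitary parametric $\Ginf$ is a natural way to make ``for all $L$'' precise for non-terminals with $!_\infty$ arguments, where algebraicity refers, as in that proposition, to the restriction to the finitary variables $w_{\model{\C P}}$.

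One caveat: your properness argument assumes every rule is guarded by a choice, which matches the format $Lx_1\dots x_n=t_L\oplus_p t_R$ but not unguarded rules such as $S=LABe$ in the paper's own examples, so properness of the FAS needs a separate argument there. The paper is silent on this point as well, so it is not a divergence from its proof.
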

\begin{example}
If we apply this technique to the $\Ginf$ from Example \ref{ex:phorsLfx},
we must introduce all order-1 non-terminals $L_{[\gamma_1,\gamma_2]}$ with equations
$L_{[\gamma_1,\gamma_2]}x=\gamma_1(L_{[\gamma_1,\gamma_2]}(L_{[\gamma_1,\gamma_2]}x)\oplus_a \gamma_2x$, where $\gamma_1,\gamma_2$ are either non-terminals or parameters. With $\gamma_1,\gamma_2=f,g$, we precisely get the equation in Example \ref{ex:Lfxpara}. By restricting ourselves to the connected component of $S_{[]}$ we obtain then a \emph{closed} order-1 $\Gfin$ with non-terminals $S_{[]},A,B,L_{[A,B]}$.
\end{example}
\section{Related Work}

An extensive literature in analytic combinatorics has explored different kinds of combinatorial structures having an algebraic generating function, in particular with respect to their asymptotic behaviour (for a comprehensive treatment, see \cite{DBLP:books/daglib/0023751}).
In the setting of imperative programming, the idea of using generating functions to analyse probabilistic programs is developed in \cite{Klinkenberg2021} and \cite{Klinkenberg2024}; 
of particular interest, with respect to our work is the characterization of while loops giving rise to rational generating functions in \cite{Klinkenberg2021}.
On the side of formal language theory, there exists a huge amount of work about generating functions; particularly relevant to this work is the paper \cite{DBLP:journals/ita/AdamsFM13},  which studies the generating functions of Aho's indexed languages (corresponding to order-2 safe HORS, thus going beyond the usual framework of context-freeness) and they give an explicit procedure to compute these functions under some restrictions, but do not give a characterization of grammars with an algebraic generating function.

The connection between HORS and the relational semantics of linear logic first appears in  Melliès and Grellois' work \cite{DBLP:conf/csl/GrelloisM15}, \cite{DBLP:conf/mfcs/GrelloisM15}. Subsequently, in \cite{DBLP:journals/pacmpl/ClairambaultGM18} a linear-non linear typing system for HORS was introduced, allowing for a finer complexity analysis of model checking; its affine fragment is studied in \cite{DBLP:conf/mfcs/ClairambaultM19}. None of these works involve probabilistic behavior.

In \cite{DBLP:journals/lmcs/KobayashiLG20}, the problem of AST for PHORS is studied by directly translating a PHORS into (possibly higher-order) real functional equations: only in the case of order 1 PHORS, this produces a fixpoint algebraic system analogous the those considered in this work, thus implying a decidability result (while in the same work, undecidability since order-2 is proved). Techniques to approximate from above the probability of termination are also considered. In \cite{DBLP:conf/lics/LiMO22}, the decidability of AST/PAST for affine PHORS is proved. With respect to our approach, the translation from PHORS to real equations is defined via an intermediate automata-theoretic step, based on the equivalence between affine HORS and restricted pushdown automata proved in \cite{DBLP:conf/mfcs/ClairambaultM19}.

\section{Conclusion}

In this work we have shown that the combinatorial method of generating functions can be adapted, via the weighted relational semantics of linear logic, to the study of higher-order \emph{probabilistic} languages. 
We think that the main value of this work resides in opening the possibility of applying well-established methods from algebraic and analytic combinatorics to higher-order languages. We provided a first demonstration of this by showing how  the decidability of termination for certain classes of PHORS (proved in the literature with syntactic methods like automata theory or game semantics) can be re-established, and actually \emph{extended}, in a relatively elementary way, via the notion of algebraic power series.

At the same time, several other directions of application of combinatorial methods can be mentioned: on the one hand, could one capture a class of PHORS, extending the algebraic ones, giving rise to \emph{D-finite} power series \cite{DBLP:series/tmsc/KauersP11}, i.e.~fps defined by linear \emph{differential} equations, and could this be related to the recently explored connections between {higher-order differentiation} and fixpoints in the weighted relational model \cite{DBLP:conf/lics/GalalL24}?
On the other hand, could the vast body of work on \emph{asymptotic estimations} \cite{DBLP:books/daglib/0023751} of generating functions be applied to extract approximated  information about the probabilistic behavior of even larger classes of PHORS?
And, more broadly, could all these methods help shed light on 
the open problem of exactly placing the AST problem for PHORS within the arithmetic hierarchy?
%
%
%
%
%
%
%
%
%
%

\bibliography{refs} 
\newpage 

\appendix

\section{Proofs of Section 4}
We prove this by the connection between the relational model and non idempotent type systems.
Take a standard non idempotent intersection intersection type system, for example:
$$
\infer[app]{\biguplus \Gamma_i \uplus \Delta \vdash ts: \tau}{ \Gamma_i \vdash s: b_i \; (i \in \{1, \dots n\})\quad \Delta \vdash t:[b_1, \dots b_n] \to \tau}
\qquad
\infer[\lambda]{\Gamma\vdash \lambda x. t: \sigma \to \tau}{\Gamma, x: \sigma \vdash t:\tau}
$$
$$\infer[fix]{\biguplus \Gamma_i \uplus \Delta \vdash YM: \tau}{\Gamma_i \vdash YM: b_i \; (i \in \{1, \dots n\})\qquad \Delta \vdash M: [b_1, \dots, b_n] \to \tau}$$
Then we obtain:
\begin{lemma}
	\label{lambdaYlinear}
	Suppose that we have a $\lambda$Y term $t$ with $FV(t)=\{x_1, \dots x_n\}$ and that $x_1:\1 \dots x_n:\1 \vdash_{\lambda Y} t: \1 $. Then, for any derivation in this intersection type system with conclusion  $ x_1: \sigma_1 \dots  x_n: \sigma_n \vdash t:  \1 $, one of the following is true:
	\begin{enumerate}
		\item For every $i$, $\sigma_i = \emptyset$
		\item there exists one $i$ such that $\sigma_i=[\1]$ and for each $j \neq i$, $\sigma_j=\emptyset$
	\end{enumerate}
\end{lemma}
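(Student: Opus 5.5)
We argue by induction on the derivation in the non-idempotent intersection type system of the judgement $x_1:\sigma_1,\dots,x_n:\sigma_n\vdash t:\1$. The invariant is that at ground type the context carries at most one occurrence of a ground variable. Since the judgement is only well-behaved when $t$ is ground and all free variables are ground, we first generalize the statement. Every simply typed $\lambda Y$ term $t$ with free variables $x_i:\1$ and of type $T=T_1\to\dots\to T_m\to\1$ is assigned, in the intersection system, types of the form $\mu_1\to\dots\to\mu_m\to\1$. The strengthened claim is: in any derivation of $x_1:\sigma_1,\dots,x_n:\sigma_n\vdash t:\mu_1\to\dots\to\mu_m\to\1$, the total multiset $\sigma_1\uplus\dots\uplus\sigma_n$ of ground assumptions has cardinality at most one. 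Restricted to ground $t$, this is exactly the dichotomy of Lemma~\ref{lambdaYlinear}.

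Stated this way, however, the claim fails under abstraction. A term $\lambda y.t$ with $y$ higher-order may consume ground variables through the types $\mu_j$ bound to $y$. So the right invariant counts the ground atoms $\1$ sitting in \emph{negative} position. For a type $\mu_1\to\dots\to\mu_m\to\1$, define its ground-resource count as $\sum_j \mathrm{out}(\mu_j)$, where $\mathrm{out}$ of a multiset of types $[b_1,\dots,b_k]$ is $\sum_l (1-\mathrm{res}(b_l))$, recursively. The invariant is then \[\textstyle\sum_i|\sigma_i| + \mathrm{res}(\text{type of } t) \le 1.\] The intuition, which the relational model makes precise via Lemma~\ref{lemma:affine}, is that a closed-off ground computation produces exactly one ground ``output'' token, and each ground input can only be consumed by being returned in head position.

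If this bookkeeping becomes too heavy, we fall back on a semantic route. Through the standard correspondence between derivations and points of the relational model, the lemma is equivalent to saying that the relational interpretation of $t$ in $\Qrelkleisli{\Rsemiring}(n,1)$ has support among monomials of total degree $\le 1$. We prove this by subject expansion plus normalization. Using the quantitative invariance of the relational semantics under $\beta$- and $Y$-reduction, we may replace $t$ by its (Böhm-like) approximants, since the support of $\model{t}$ is the union of the supports of its finite approximants. A finite $\beta$-normal approximant of ground type whose free variables are all ground is either $\Omega$, giving the zero series (case 1), or a single variable $x_i$, giving the monomial $x_i$ (case 2). The constants $e$ and $\oplus$ are handled the same way: $e$ yields degree $0$, and a choice $M\oplus N$ yields the sum of the supports of its two branches, each of degree $\le 1$ by induction. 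Every derivation typing $t$ corresponds to a point in this support, which yields the lemma.

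The main obstacle is the $fix$ rule (and, in the purely syntactic route, the app rule), where contexts $\Gamma_i$ are summed over arbitrarily many premises. One must show that at most one premise contributes a ground assumption. In the semantic route this becomes the continuity argument, namely that the support of the $\sup$ in~\eqref{eq:fixpointeq} is the union of the supports of the iterates. That step is routine, and it is why we prefer the semantic route, with the finite-approximant normal-form analysis as the key step.
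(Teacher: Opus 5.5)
Your preferred semantic route is correct, but it differs from the paper's. The paper stays inside the intersection type system and argues one derivation at a time. A term typable at $\1$ has a head normal form $h$. Quantitative subject reduction turns the given derivation into one of $x_1:\sigma_1,\dots,x_n:\sigma_n\vdash h:\1$ with the \emph{same} context. Since $h$ has ground type and only ground free variables, $h$ is $x_i$ or $e$, and both have forced typings.

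You instead pass to the support of $\model{t}$ through the derivations/points correspondence, and compute that support from the finite $Y$-unfoldings of $t$. These unfoldings are simply typed, hence normalize. By invariance of the semantics, each has the interpretation of a normal form $e$, $x_i$ or $\Omega\,\vec u$ (or a $\oplus$-tree of these). Continuity of the fixpoint makes $\model{t}$ their increasing sup, whose support is the union of theirs.

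Both proofs rest on the shape of ground normal forms with ground free variables; the machinery differs.
\begin{itemize}
\item The paper needs ``typable implies head normalizing'' plus subject reduction, and is shorter.
\item You need the approximation/continuity theorem and strong normalization of the approximants. In return you treat all derivations at once and handle $\oplus$ explicitly; the paper states the lemma for $\lambda Y$ yet applies it to $\PLY$ in Proposition~\ref{prop:affine}.
\item You in effect prove Proposition~\ref{prop:affine} first and read the lemma off it, reversing the paper's order. This is legitimate only because you use Lemma~\ref{lemma:affine} as intuition, not as a step: its proof goes through the present lemma.
\end{itemize}

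Two slips in the semantic route:
\begin{itemize}
\item The zero series coming from $\Omega$ yields no derivation at all, so it is not ``case 1''. Case 1 is the constant monomial, which comes from $e$.
\item It is continuity, not invariance under reduction, that licenses replacing $t$ by its approximants.
\end{itemize}

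Your syntactic sketch is not inductive as written. You sum $|\sigma_i|$ only over ground variables, but the premise of an abstraction $\lambda y.s$ may have a higher-order $y$ in its context. Summing $\mathrm{out}(\sigma_x)$ over \emph{all} variables fixes this. The quantity $1-\mathrm{res}(\tau)-\sum_x\mathrm{out}(\sigma_x)$ is then exactly the number of $e$-axioms in the derivation:
\begin{itemize}
\item an axiom $x:[b]\vdash x:b$ contributes $0$;
\item abstraction leaves the quantity unchanged;
\item in the app and fix rules the terms $\mathrm{res}(b_l)$ cancel when the premises are summed.
\end{itemize}
So the app/fix ``obstacle'' you anticipate disappears, and this gives a third, normalization-free proof.
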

\begin{proof} 
	If we have a typing of $ x_1: \sigma_1 \dots  x_n: \sigma_n \vdash t:  \1 $, $t$ has an head normal form, then take any head normal form $h$ of $t$ (which must satisfy  $ x_1: \sigma_1 \dots  x_n: \sigma_n \vdash h:  \1 $, hence also $ x_1:\1 \dots x_n:\1\vdash h:\1$) . Then either $L = x$ or $L=e$. In the first case, we can only derive $x_i:[1] \vdash x_i:1$, in the second we just have $x_i:\emptyset \vdash e:1$\\
\end{proof}
From this we can deduce:
\begin{proposition}\label{prop:affine}
	For all first-order terms $x_1:\1,\dots, x_n:\1\vdash_{\PLY}t:\1$, 
	$\model{t}^{\Rinf}(x_1,\dots, x_n)\in \fps{\Rinf}{x_1,\dots,x_n}$ is affine: there exists scalars $w_0,w_1,\dots, w_n\in\mathbb R_{\geq 0}$ such that
	\[
	\model{t}^{\Rsemiring}(x_1,\dots, x_n)
	=w_0+w_1x_1+\dots+w_nx_n,
	\]
	where $w_0=\mathbb P[t\to^* e]$ and
	$w_{i+1}=\mathbb P[t\to^* x_{i+1}]$.
\end{proposition}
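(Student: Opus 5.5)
We deduce the statement from Lemma~\ref{lambdaYlinear} via the standard correspondence between the weighted relational semantics and non-idempotent intersection types. This correspondence is taken from \cite{DBLP:conf/lics/LairdMMP13, DBLP:journals/jacm/EtessamiY09} and the quantitative relational model literature. For a term $x_1:\1,\dots,x_n:\1\vdash t:\1$, the coefficient of the monomial $x_1^{m_1}\cdots x_n^{m_n}$ in $\model{t}^{\Rinf}$ is a weighted sum over intersection type derivations of $x_1:\sigma_1,\dots,x_n:\sigma_n\vdash t:\1$ with $\sigma_i=[\1,\dots,\1]$ ($m_i$ copies). The weights come from the probabilistic choices encountered: with $\model{M\oplus_p N}=p\model M+(1-p)\model N$ (taking $z=1$), each derivation through a $\oplus_p$ node selects one branch and multiplies by $p$ or $1-p$. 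The system must therefore be extended with the obvious rule for $\oplus_p$ that types one of its two arguments. Lemma~\ref{lambdaYlinear} concerns $\LY$, so we first extend it to this rule. In its proof we replace ``head normal form'' by the normal form reached along one probabilistic branch. The argument is unchanged: a ground-typed normal form in this context is either $e$ or some $x_i$.

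Granting this, the support of $\model{t}^{\Rinf}$ is contained in the set of monomials $\{1,x_1,\dots,x_n\}$. Every derivation has either all $\sigma_i$ empty or exactly one $\sigma_i=[\1]$ with the others empty. Hence $\model{t}^{\Rinf}=w_0+\sum_i w_ix_i$. Finiteness of each $w_i$ (so that $w_i\in\R_{\geq0}$ rather than $\infty$) will follow once we identify it with a probability.

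To identify the coefficients, we use adequacy of the weighted relational model for probabilistic call-by-name reduction. This is the same fact used in Proposition~\ref{prop:proba}: each reduction path $t\redbigp{\sigma,p}n$ to a normal form contributes $p$ times $\model n$. By the analysis above $n\in\{e,x_1,\dots,x_n\}$, with $\model e=1$ and $\model{x_i}=x_i$. Summing over paths, $w_0$ collects exactly the paths ending in $e$, and $w_i$ those ending in $x_i$. So $w_0=\mathbb P[t\to^*e]$ and $w_i=\mathbb P[t\to^*x_i]$, and these sum to at most $1$, which gives finiteness.

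The main obstacle is making the semantic-to-operational correspondence precise in the presence of $Y$. Both the semantics, via the fixpoint sup, and the probabilities, via sums over infinitely many paths, are limits. The plan is to handle this by approximation: replace $Y$ by its finite unfoldings $Y_k$, under which the terms are strongly normalizing. On these the statement follows by the finite argument above together with subject reduction and expansion of intersection typing. Then take suprema on both sides, using Scott continuity of the model and monotone convergence of the path-probability sums. Affineness is preserved in the limit because the support bound holds uniformly in $k$.
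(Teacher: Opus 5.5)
Your proposal takes essentially the paper's route: the correspondence between monomials of $\model{t}$ and non-idempotent intersection type derivations, combined with Lemma~\ref{lambdaYlinear}, bounds the support by $\{1,x_1,\dots,x_n\}$, and the coefficients are identified through the adequacy fact behind Proposition~\ref{prop:proba}. You are more careful than the paper's one-line proof: you extend the lemma from $\LY$ to the $\oplus_p$ rule, justify finiteness of the $w_i$, and sketch how to handle $Y$ by finite approximants. One correction: \cite{DBLP:journals/jacm/EtessamiY09} concerns recursive Markov chains, not the relational/intersection-type correspondence, for which the paper relies on \cite{DBLP:journals/jacm/EhrhardPT18, DBLP:journals/corr/abs-2501-15637}.
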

\begin{proof}
	By the soundness of the weighted relational model with respect to intersection type (\cite{DBLP:journals/jacm/EhrhardPT18}, \cite{DBLP:journals/corr/abs-2501-15637}), the monomials $x_0^{i_0}, \dots x_n^{i_n}$ that appear with positive coefficient in $\model{t}^{\Rinf}(x_1,\dots, x_n)\in \fps{\Rinf}{x_1,\dots,x_n}$ correspond to derivations $x_1: [\1^{i_1}] \dots x_n: [\1^{i_n}] \vdash t: \1$. 
\end{proof}
\section{Proofs of Section 6}
\subsection{Proof of Theorem \ref{theorem:linearisation}}
 We now define in a precise way the notion of linearization of a $\Gfin$. We remind that, following the terminology of \cite{DBLP:conf/lics/LiMO22}, by 'linear' here we mean what in a strict sense would be 'affine'.  First we define the linearisation of a type:
\begin{align*}
	&\Aff(\1^n) =\1 \quad n \geq 1\\
	&\Aff({!_0 \phi \multimap \psi}) = \Aff(\phi) \multimap \Aff(\psi) \\  
	&\Aff({!_k \phi \multimap \psi}) = \underbrace{\Aff(\phi) \multimap \dots \multimap \Aff(\phi)}_{k \textit{ times}} \multimap \Aff(\psi)
\end{align*}
We need now to define the linearisation of a the type derivation. In particular, to each derivation $\pi = \nonterm \mid \Delta \vdash \phi$ in $\Gfin$ we will associate a linear derivation $\Aff(\pi)$ of the form $\nonterm' \mid \Aff(\Delta) \vdash \phi$, where  $\nonterm'$ is obtained by $\nonterm $ by replacing every type binding $L:\varphi $ with $\AFF L: \Aff(\varphi)$ and $\Aff(\Delta)$ is obtained by $\Delta$ by replacing every type binding $x:_k \psi, \: k \geq 2$ with $k$ distinct bindings $x^{(1)}:\Aff(\psi) \dots x^{(k)}:\Aff(\psi)$
\begin{itemize}
	\item For $\pi = \infer[ax1]{\nonterm\mid \Delta\vdash x:\varphi }{x:_p\varphi\in \Delta,\quad 1 \leq p}$, $\Aff(\pi) \bydef \pi$ 
	\item For  $\pi = \infer[ax2]{\nonterm\mid\Delta \vdash L:\varphi }{ L:\varphi  \in \nonterm}$, $\Aff(\pi) \bydef \infer[ax2]{\nonterm' \mid\Delta \vdash \AFF L:\Aff(\varphi) }{ \AFF L:\Aff(\varphi)  \in \nonterm'} $ 
	\item If we have that $\pi$ ends with an application $\nonterm\mid \Delta + k\Delta' \vdash tu : \psi$ with premises $\nonterm\mid\Delta \vdash t :!_k\varphi \multimap \psi$ and $\nonterm\mid \Delta' \vdash u :\varphi$, suppose that we get inductively linearisations $\nonterm'\mid \Aff(\Delta) \vdash \Aff(t): \Aff(\varphi) \multimap \dots \Aff(\varphi) \multimap \Aff(\psi))$ (where $\Aff(\varphi)$ is repeated $k$ times)  and  $\nonterm' \mid \Aff(\Delta') \vdash \Aff(u) :\Aff(\varphi)$. Now we create $k$ different copies $u_1, \dots u_k$ of $u$ by replacing its free variables $x_1, \dots x_n$ with variables $x^{(i)}_1, \dots x^{(i)}_n, 1 \leq i \leq k$ and we create in the same way $k$ different copies of the derivation $\nonterm' \mid \Aff(\Delta') \vdash \Aff(u) :\Aff(\varphi)$. Now by $k$ linear applications we can type $\Aff(t) \Aff(u_1) \dots \Aff(u_k)$ with type $\Aff(\psi)$; this is the linearisation of $tu$
	\item If we have $\pi$ ends with an abstraction $\nonterm\mid \Delta \vdash \lambda x. t : !_k \varphi \multimap \psi$ with premises $\nonterm \mid \Delta, x:_k \varphi \vdash t : \psi$, then take its linearisation $\nonterm \mid \Aff(\Delta),x^{(1)}:\Aff(\varphi) \dots x^{(k)}:\Aff(\varphi) \vdash \Aff(t) : \Aff(\psi)$ and; by $k$ abstraction we then obtain a linear derivation of $\lambda x^{(1)} \dots x^{(k)}.\Aff(t): \Aff(\phi) \multimap \dots \Aff(\phi) \multimap \Aff(\psi))$,  (where $\Aff(\phi)$ is repeated $k$ times).
	\item In the case of $\pi_i$ and $\langle \rangle$, we can just apply them to their linearised premises and obtain a linearised conclusion.
\end{itemize} 
Now given a $ \gphors=(\nonterm, \C R, S)$, we define its linearisation $\AFF\gphors=(\nonterm, \C R',\AFF S)$ by $\C R'(L) \bydef \Aff(\C R(L))$ for each non terminal $L$. In particular, $\AFF{S}$ will have equation $\AFF S= \Aff(\C R(S))$ Call $\AFF{\redbigp{ }}$ the corresponding rewriting\\
To see that $\AFF \gphors$ generates the same tree as $\gphors$, we need the following lemmas:
\begin{lemma}
	Let $\nonterm \mid \Delta, x:_k  \vdash  t[x]$ be a term typable in $\Gfin$ and let $\nonterm' \mid \Aff(\Delta), x^{(1)}: \Aff(\varphi), \dots x^{(k)}: \Aff(\varphi) \vdash  \Aff t[x^{(1)}, \dots x^{(k)}]$ be its linearisation. Let $u$ be a term of type $\varphi$ and let $\Aff(u)$ be its linearisation; then, up to a renaming of the free variables:
	$$\Aff(t[x/u]) = \Aff(t)[x^{(1)}/\Aff(u)_1, \dots x^{(k)}/\Aff(u)_k] $$ 
	where $\Aff(u)_1 \dots \Aff(u)_k$ are obtained from $\Aff(u)$ by replacing its free variables with fresh ones.
\end{lemma}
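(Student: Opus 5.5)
We argue by induction on the $\Gfin$ derivation $\pi$ of $\nonterm\mid\Delta,x:_k\varphi\vdash t:\psi$, following exactly the clauses that define $\Aff(\pi)$. For each subderivation we track how the $k$ linear copies $x^{(1)},\dots,x^{(k)}$ of $x$ are split among the premises. Throughout, we fix the renaming convention: $\Aff(u)_i$ is $\Aff(u)$ with every free variable $y^{(j)}$ renamed to a fresh $y^{(j)}_i$. The equality is then up to this renaming, which is harmless because the $k$ copies of $u$ are used linearly and in disjoint positions.

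The base cases are immediate. If $t=x$ with $p\ge 1$ copies available, then $\Aff(t)=x^{(1)}$ (after choosing which copy the axiom consumes), and the substitution gives $\Aff(u)_1$. Unused copies, which arise because the system is affine, simply substitute vacuously. If $t=y\neq x$ or $t=L$, both sides coincide. The cases $\langle\,\rangle$, $\pi_i$ and $\oplus$ share the context additively across premises, or identically in the $\oplus$ case. For these we apply the induction hypothesis to each premise and recombine, noting that in the $\oplus$ case both branches receive the same copies. Abstraction $\lambda y.t'$ follows directly from the induction hypothesis, since $y\neq x$ and $\Aff$ only prefixes $k'$ abstractions on the copies of $y$, which commute with substitution on the copies of $x$.

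The main case is application $t=t_1t_2$ with $\Delta+h\Delta'$ and $x:_{k_1}\varphi\in\Delta$, $x:_{k_2}\varphi\in\Delta'$, so that $k=k_1+hk_2$. Here $\Aff(t)=\Aff(t_1)\,\Aff(t_2)_1\cdots\Aff(t_2)_h$. The copies of $x$ are partitioned into $k_1$ copies going to $\Aff(t_1)$ and $h$ blocks of $k_2$ copies, the $j$-th block going to the renamed $\Aff(t_2)_j$. We apply the induction hypothesis to $t_1$ with its $k_1$ copies. We also apply it to $t_2$ and then rename, which is the key point: renaming the free variables of $\Aff(t_2)$ to form $\Aff(t_2)_j$ commutes with substitution. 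The effect is that $\Aff(t_2[x/u])_j$ equals $\Aff(t_2)_j$ with its $k_2$ copies of $x$ replaced by $k_2$ further-renamed copies of $\Aff(u)$. Finally, on the left-hand side $\Aff(t[x/u])=\Aff(t_1[x/u])\,\Aff(t_2[x/u])_1\cdots\Aff(t_2[x/u])_h$, and its context is $(\Delta+k_1\Delta_u)+h(\Delta'+k_2\Delta_u)$. This exhibits exactly $k$ renamed copies of $\Aff(u)$, matching the right-hand side.

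I expect the main obstacle to be the bookkeeping of variable names. We must show that the $k$ copies $\Aff(u)_1,\dots,\Aff(u)_k$ produced on the left (nested renamings of the form $y^{(j)}_{i,l}$) are in bijection with those on the right, consistently with how the linear context splits. I would state the renaming invariant explicitly as an auxiliary claim, $\Aff(s)[\rho]=\Aff(s[\rho])$ for injective renamings $\rho$, proved first by the same induction, and then use it in the application case. One further subtlety is the grading: a graded variable with $p$ available copies and $k$ used need not use all of them. This is handled because extra copies of $u$ are substituted for absent variables, and affine weakening absorbs them.
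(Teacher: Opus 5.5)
Your proposal is correct and takes the same route as the paper, whose proof is simply ``by induction on the definition of linearisation''. Your case analysis is a faithful expansion of that one line, especially the application case with $k=k_1+hk_2$ and the block-wise distribution of the copies of $x$ between $\Aff(t_1)$ and the $h$ renamed copies of $\Aff(t_2)$. It also supplies bookkeeping the paper leaves unstated: an explicit renaming-invariance lemma, and, implicitly, computing $\Aff(t[x/u])$ on the derivation obtained by grafting that of $u$ at the $x$-axioms. One small correction: $\langle\,\rangle$, like $\oplus$, keeps the same $\Delta$ in every premise, so each component receives all $k$ copies rather than a share of them.
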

\begin{proof}
	By induction on the definition of linearisation.
\end{proof}
Then we have
\begin{lemma} For each non terminal $L$ and $d \in \{l,r\}$
	$L t_1 \dots t_n  \redbigp{\mathsf d,p} t_L[x_1/t_1, \dots x_n/t_n]$ iff $\Aff(L t_1 \dots t_n)  \redbigp{\mathsf d,p} \Aff(t_d[x_1/t_1, \dots x_n/t_n])$
\end{lemma}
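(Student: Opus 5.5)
We unfold both sides and compare them through the substitution lemma just proved. By definition of the rewriting rules, $L t_1 \dots t_n \redbigp{\mathsf d,p} t_d[x_1/t_1,\dots,x_n/t_n]$ holds exactly when $\C D(L)$ gives $L x_1\dots x_n = t_{\mathsf l}\oplus_p t_{\mathsf r}$ and $d\in\{\mathsf l,\mathsf r\}$ selects the branch, with probability $p$ or $1-p$. On the linearised side, $\C R'(L)=\Aff(\C R(L))$. Since linearisation commutes with $\oplus$ (the rule $\oplus$ shares its context, and its linearisation applies $\oplus$ to the linearised premises, as for $\langle\rangle$ and $\pi_i$), the equation of $\AFF L$ reads
\[
\AFF L\, x_1^{(1)}\dots x_1^{(k_1)}\dots x_n^{(1)}\dots x_n^{(k_n)} = \Aff(t_{\mathsf l})\oplus_p \Aff(t_{\mathsf r}),
\]
where $\nonterm(L)=\,!_{k_1}\varphi_1\multimap\dots\multimap !_{k_n}\varphi_n\multimap o$. (I would treat $k_i=0$ as one dummy copy, following the $\Aff$ clause for $!_0$.) The choice labels and the probabilities are therefore the same on both sides.

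Next we compute $\Aff(L t_1\dots t_n)$. The application case of the linearisation, iterated $n$ times, gives
\[
\Aff(Lt_1\dots t_n)=\AFF L\,\Aff(t_1)_1\dots\Aff(t_1)_{k_1}\dots\Aff(t_n)_1\dots\Aff(t_n)_{k_n},
\]
where $\Aff(t_i)_j$ are copies of $\Aff(t_i)$ with freshly renamed free variables. This term is a redex for the rule of $\AFF L$. One step $\redbigp{\mathsf d,p}$ therefore yields $\Aff(t_d)[x_i^{(j)}/\Aff(t_i)_j]_{i,j}$, and this is the only step available. Applying the substitution lemma $n$ times (once per variable $x_i$, each with grade $k_i$) identifies this term, up to renaming of free variables, with $\Aff(t_d[x_1/t_1,\dots,x_n/t_n])$. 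This proves the forward direction. For the converse, note that $\Aff(Lt_1\dots t_n)$ has $\AFF L$ as head symbol, so its only possible reductions are the two rule instances just described. Each of them is the image of the corresponding step on the left, with the same label and probability.

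The main obstacle is making ``up to renaming'' rigorous. Iterating the substitution lemma sequentially requires the fresh copies $\Aff(t_i)_j$ to have pairwise disjoint free variables, and disjoint from those introduced for other arguments. This disjointness is what makes the simultaneous substitution agree with the sequential one. I would fix a global convention for naming copies, for instance indexing by the path in the derivation tree. I would then state the lemma modulo $\alpha$-renaming of free variables. This is harmless for the intended application, because the start term is closed and renaming does not affect the value tree.

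A secondary point to check is subsumption $\varphi\sqsubseteq\nonterm(L)$ in $\C D(L)$: the derived type may use smaller grades than the declared ones. Here I would pad the linearised equation with unused extra abstractions, which is allowed in the affine setting. The number of copies must then match the declared grades $k_i$ from $\nonterm(L)$, which are the ones used when linearising applications of $L$.
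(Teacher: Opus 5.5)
Your proposal is correct and is essentially the paper's proof: you unfold $\Aff(Lt_1\dots t_n)$ into $\AFF{L}$ applied to $k_i$ copies of each $\Aff(t_i)$, fire the rule of $\AFF{L}$, and identify the reduct with $\Aff(t_d[x_1/t_1,\dots,x_n/t_n])$ via the preceding substitution lemma. The paper handles the renaming issue more simply, by observing that all terms involved are closed so the copies $\Aff(t_i)_j$ need no renaming, and it does not discuss the converse direction, the $!_0$ case, or the mismatch introduced by $\varphi\sqsubseteq\nonterm(L)$, all of which you treat as extra care.
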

\begin{proof}
	We have that $\Aff(Lt_1 \dots t_n)= \AFF L \Aff(t_1)^{n_1} \dots \Aff(t_k)^{n_k} \AFF{\redbigp{\mathsf d,p}} \Aff(t_d)[x_1^{n_1}/(t_1)^{n_1}, \dots x_1^{n_1}/(t_k)^{n_k}]$, but this by the previous lemma is equal (all terms being closed, no renaming of the free variables is needed) to $\Aff(t_d[x_1/t_1, \dots x_n/t_n])$
\end{proof}  
Finally, notice that $e$ is the only normal form both for $\gphors$ and $\AFF \gphors$ and that the linearisation of $e$ is $e$. From this we see that there exists a bisimulation between the rewriting system of $\gphors$ and the one of $\AFF \gphors$, hence they generate the same tree.
\subsection{Proof of Proposition \ref{prop:finitefamily}}
To prove proposition 6.3, we prove first the following correctness lemma:
\begin{lemma}[Correctness of $\Gfin$ for ground terms]
	\label{lemma:correctness}
	Given a term $\nonterm\mid\Delta\vdash t: \1$, let $s \bydef \model{t: \1}_{}^{\Rsemiring} $ be its $\Qrelkleisli \Rsemiring$interpretation, which consists of a single power series. Then, $s_\sigma\big\vert_{\model{\nonterm}+ \model{\Delta}}$ is a polynomial. Moreover, for each $f:_k \psi \in \Delta$, the total degree of $s$ in the variables $x_{f:_k \psi}$ is at most $k$
\end{lemma}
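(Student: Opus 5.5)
We proceed by induction on the derivation $\pi$ of $\nonterm\mid\Delta\vdash t:\1$. Since ground conclusions are reached through higher-type intermediate judgements, we prove a stronger statement for arbitrary types $\varphi$ (this is essentially the content of Lemma~\ref{lemma:stability}). For every derivation $\nonterm\mid\Delta\vdash t:\varphi$ and every component $\sigma\in\model{\sat\varphi}$ of $s=\model{t}^{\Rsemiring}$, we claim:
\begin{varenumerate}
\item if $\sigma\in\model{\varphi}$, then $s_\sigma\big\vert_{\model{\nonterm}+\model{\Delta}}$ is a polynomial whose total degree in the variables $x_{f:_k\psi}$ is at most $k$, for each $f:_k\psi\in\Delta$;
\item if $\sigma\notin\model{\varphi}$, then $s_\sigma\big\vert_{\model{\nonterm}+\model{\Delta}}=0$.
\end{varenumerate}
Taking $\varphi=\1$, where $\model{\1}=1$, gives exactly the statement.

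The base cases are immediate. For ax1, $s_\sigma=x_{x,\sigma}$ is linear, hence of degree $1\leq p$. It vanishes after restriction when $\sigma\notin\model\varphi$, because its variable then lies outside $\model{\Delta}$. For ax2 the argument is the same, with no degree constraint on nonterminal variables. The rules $\oplus$, $\langle\rangle$ and $\pi_i$ only take sums and scalar multiples ($pz$, $(1-p)z$) of the components of the premises, all in the same context $\Delta$, so both the polynomiality bound and the vanishing are preserved. For $\lambda$, the interpretation is currying: $(\model{\lambda x.t})_{(\mu,\sigma)}$ is the coefficient of $x_x^{\mu}$ in $\model{t}_\sigma$. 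If $\mu\in{!_k}\model\varphi$ and $\sigma\in\model\psi$, this coefficient is a polynomial with the required degrees, by the induction hypothesis on $t$. If some element of $\mu$ lies outside $\model\varphi$, the monomial is killed by the restriction. If $\mu$ has a multiplicity greater than $k$, the coefficient is zero, since the degree in the variables $x_{x:_k\varphi}$ is at most $k$. This is precisely where the bound on degree is needed to obtain stability at the function type.

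The main obstacle is the application rule. Here
\[
\model{tu}_\sigma=\sum_{\mu=[\rho_1,\dots,\rho_m]}\model{t}_{(\mu,\sigma)}\prod_{j}\model{u}_{\rho_j}.
\]
After restricting to $\model\nonterm+\model{\Delta+k\Delta'}$, we use the induction hypothesis on $u$: the factors with $\rho_j\notin\model\varphi$ vanish. The induction hypothesis on $t$, together with its degree bound, then gives that the terms with $\mu\notin{!_k}\model\varphi$ or $\sigma\notin\model\psi$ vanish. Since $\model\varphi$ is finite, only finitely many $\mu$ with multiplicities at most $k$ survive, so the sum is finite and $\model{tu}_\sigma$ is a polynomial. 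One subtlety is that the restriction of a composite must be computed from the restrictions of its factors. This holds because restriction (setting the remaining variables to $0$) commutes with composition, and the variables of $t$ and $u$ range over the same context.

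For the degree, fix $f$. Its degree in $\model{t}_{(\mu,\sigma)}$ is at most $h$, where $f:_h\in\Delta$; its degree in each factor $\model{u}_{\rho_j}$ is at most $h'$, where $f:_{h'}\in\Delta'$. The surviving $\mu$ have total size $m\leq k$ only if we bound the total degree in $x_{x:_k\varphi}$ rather than each multiplicity separately. The claim should therefore be stated with total degree, as in the lemma, so that $m\leq k$. The degree of the product is then at most $h+kh'$, which matches the grade of $f$ in $\Delta+k\Delta'$. We expect to have to reconcile this carefully: the cleanest fix is to define $\model{!_k\varphi\multimap\psi}$ using multisets of cardinality at most $k$, or to observe that the relevant monomials have total degree at most $k$ anyway.
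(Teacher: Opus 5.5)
The gap is in your $\lambda$ case, and it cannot be fixed within your invariant, because claim 2 is false at higher types. Write $\model{t}_\sigma=\sum_{\mu} r_{\mu,\sigma}\,x_x^{\mu}$. Then $\model{\lambda x.t}_{(\mu,\sigma)}=r_{\mu,\sigma}$ is a power series in the variables of $\nonterm,\Delta$ only. Restricting it to $\model{\nonterm}+\model{\Delta}$ therefore does not touch $\mu$: nothing is ``killed'' when $\supp\mu\not\subseteq\model{\varphi}$. Your induction hypothesis only describes $\model{t}_\sigma$ after the $x$-variables outside $\model{\varphi}$ are set to $0$, so it says nothing about such $r_{\mu,\sigma}$.

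Concretely, take $\emptyset\mid\emptyset\vdash\lambda x.x:\ !_1\tau\multimap\tau$ and $\rho=([\star,\star],\star)\notin\model{\tau}$. The component at $([\rho],\rho)\notin\model{!_1\tau\multimap\tau}$ equals $1$, contradicting claim 2. The error also reaches your application case whenever the argument $u$ or the head $t$ is an abstraction. There you use claim 2 to discard the factors $\model{u}_\rho$ with $\rho\notin\model{\varphi}$ and the coefficients $\model{t}_{(\mu,\sigma)}$ with $\mu\notin !_k\model{\varphi}$.

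This is exactly why the paper says the direct induction hypothesis is insufficient. It works instead with candidate sets $\Com(\varphi)$. At arrow type, $\Com(!_k\psi\multimap\varphi)$ contains every $(\mu,f)$ with $\supp\mu\not\subseteq\Com(\psi)$. So the ``bad'' points at arrow type are only those whose $\mu$ is supported on good points, which is precisely what the hypothesis on the body of an abstraction can see. Vanishing is then proved for restrictions to $\model{\nonterm}+\Com(\Delta)$, and the ground statement is recovered from $\model{\Delta}\subseteq\Com(\Delta)$.

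Three smaller points.

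First, your direct argument is sound for $\lambda$-free applicative terms. Claim 2 holds for variables and non-terminals and is preserved by application, $\oplus$, tuples and projections. This covers the bodies $t_L$ of a $\Gfin$, but the lemma as stated, and your proof, include the $\lambda$ rule.

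Second, on the degree bound only your first fix works. With $!_k$ read as maximal multiplicity the degree claim is actually false. For $g:_1(!_1\tau\multimap o),\,f:_1\tau\vdash g\,f:o$, the restricted interpretation contains $x_{g,([a,b],\star)}\,x_{f,a}\,x_{f,b}$, where $a=([\,],\star)$ and $b=([\star],\star)$. This has degree $2$ in $f$, so you must bound the cardinality $|\mu|\le k$, as the paper's $\Com$ does.

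Third, if you switch to $\Com$, note that $\Com(\psi)$ is in general infinite. For example, $\Com(!_1\tau\multimap\tau)$ contains every $(\nu,\rho)$ with $\supp\nu\not\subseteq\model{\tau}$. So in the application case the finiteness of the surviving sum no longer follows from $|\mu|\le k$ alone. It has to be argued with respect to the finite sets $\model{\cdot}$, a point the paper's own write-up passes over quickly.
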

We recall that given a polynomial $ P = \sum_{\kappa \in !S} a_\kappa x_S^\kappa$, and a subset $S' \subset S$, the total degree of $P$ in the variables $x_{S'}$ is $\max_{\kappa \mid a_\kappa \neq 0} (\sum_{i \in S'} \mu(i))$.
We don't prove directly the last lemma, as the induction hypothesis would be not sufficient, but we pass through an intermediate step that reminds of reducibility candidates.
\begin{definition}
	For each type $\varphi$, we define by induction a set ${\Com}(\varphi)$ such that $\model{\varphi} \subseteq \Com(\phi) \subseteq \model{\sat{\varphi}}$:
	\begin{enumerate}
		\item $\Com(\1^n)= \model{\1^n}$
		\item $\Com( !_k \psi \multimap \varphi \tri S \to T) \bydef \{(\mu, f) \in !\model{S} \times \model{T}\mid \supp \mu \subseteq \Com(\psi) \implies (|\mu| \leq k \land \Com(f))\}$
	\end{enumerate}
When $\Delta= f:_{k_1} \psi_1, \dots f:_{k_n} \psi_n$ is a context, $\Com(\Delta) \bydef \{(p_1, \dots p_n)\mid \forall i \: p_i \in \Com(\psi_i)\}$.
\end{definition}
Now we can prove by induction the following lemma:
\begin{lemma}
	Given a term $\nonterm\mid\Delta\vdash t: \varphi$, let $s_{\model{\sat{\varphi}}} \bydef \model{t: \sat{\varphi}}^{\Rsemiring} $ be its $\Qrelkleisli \Rsemiring$interpretation. We have that:
	\begin{enumerate}
		\item For each $\sigma \in \model{\sat{\varphi}} \setminus \Com(\model{\varphi})$, 
		\begin{equation}
			s_\sigma\big\vert_{\model{\nonterm}+ \Com(\model{\Delta})}=0
		\end{equation}
		\item For each $\sigma \in \Com(\model{\varphi})$, $s_\sigma\big\vert_{\model{\nonterm}+ \Com(\model{\Delta})} \in \fpp{\Rsemiring}{\model{\nonterm}+ \Com(\model{\Delta})}$. Moreover, for each $f:_k \psi \in \Com(\Delta)$, the total degree of $s_\sigma$ in the variables $x_{f:_k \psi}$ is at most $k$
	\end{enumerate}
\end{lemma}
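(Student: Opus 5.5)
We argue by induction on the derivation of $\nonterm\mid\Delta\vdash t:\varphi$, proving claims 1 and 2 together. The sets $\Com(\cdot)$ serve as a logical relation. An element $(\mu,f)$ lies outside $\Com(!_k\psi\multimap\varphi)$ exactly when $\supp\mu\subseteq\Com(\psi)$ and either $|\mu|>k$ or $f\notin\Com(\varphi)$. Claim 1 therefore states that such a point can only be reached by feeding the function more than $k$ ``good'' arguments, or by producing a bad output. Throughout, we use that restriction $(\cdot)\vert_{\Sigma'}$ is a semiring morphism, that it commutes with composition of fps, and that it kills every monomial containing a variable outside $\Sigma'$. Note also that for the ground type $\Com(\1^n)=\model{\1^n}=\model{\sat{\1^n}}$, so claim 1 is vacuous there. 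Lemma \ref{lemma:correctness} then follows by specialising to $\varphi=\1$ and restricting further from $\Com(\model\Delta)$ to the finite set $\model\Delta$.

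\textbf{Base cases.} For $ax1$ with $x:_p\varphi\in\Delta$ we have $s_\sigma=x_{x,\sigma}$. If $\sigma\notin\Com(\varphi)$, this variable is not in $\Com(\Delta)$ and is killed. Otherwise $s_\sigma$ is a variable of degree $1\leq p$, and every other variable $x_{g,\tau}$ with $g\neq x$ has degree $0$. For $ax2$ we have $s_\sigma=x_{L,\sigma}$, and the claims are immediate once we read $\model\nonterm$ as containing exactly the points of $\Com(\nonterm(L))$. For $\langle\rangle$ and $\pi_i$, the fps are tuples or components, so both claims transfer directly. For $\oplus$, the interpretation is $pz\,s+(1-p)z\,s'$ with both premises in the same context, so polynomiality and the degree bound $k$ are preserved.

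\textbf{Abstraction.} Here $s^{\lambda}_{(\mu,\sigma)}$ is the coefficient of $x^\mu$ in $s'_\sigma$, where $s'$ interprets the premise in context $\Delta,x:_k\varphi$. Suppose $(\mu,\sigma)\notin\Com$, so $\supp\mu\subseteq\Com(\varphi)$. If $\sigma\notin\Com(\psi)$, the induction hypothesis kills $s'_\sigma$ restricted to a set containing $x_{x,\mu}$, so the coefficient vanishes. If instead $|\mu|>k$, the degree bound in the $x$-variables forces the coefficient to vanish. For $(\mu,\sigma)\in\Com$, the coefficient of a polynomial is again a polynomial, and the degree bounds on the remaining variables are inherited.

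\textbf{Application.} This is the main case. By the composition formula,
\[
s_\sigma=\sum_{\mu=[a_1,\dots,a_n]} s^t_{(\mu,\sigma)}\prod_{i=1}^n s^u_{a_i}.
\]
If some $a_i\notin\Com(\varphi)$, the factor $s^u_{a_i}$ vanishes after restriction. Otherwise $\supp\mu\subseteq\Com(\varphi)$, and then either $\sigma\notin\Com(\psi)$, or $|\mu|>k$ makes $(\mu,\sigma)\notin\Com$; in both situations $s^t_{(\mu,\sigma)}$ vanishes. This yields claim 1, and shows that for $\sigma\in\Com(\psi)$ only multisets with $|\mu|\leq k$ contribute. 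The degree bound in $x_{f}$ is then at most $k_t+k\cdot k_u$, which is exactly the grade of $f$ in $\Delta+k\Delta'$.

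\textbf{Expected obstacle.} The genuinely delicate point is polynomiality in the application case. The sum ranges over all $\mu$ with $|\mu|\leq k$ and $\supp\mu\subseteq\Com(\varphi)$, and $\Com(\varphi)$ may be infinite, so bounded degree alone does not yield finite support. I would try first to strengthen the induction hypothesis to say that each restricted $s_\sigma$, with $\sigma\in\Com$, involves only finitely many variables. Combined with the degree bound, this gives finite support, and it makes only finitely many $a\in\Com(\varphi)$ have nonzero $s^u_a$ on a given finite variable set. If that fails, the fallback is to restrict to the finite sets $\model\varphi$ themselves: I would show by the same case analysis that $\Com(\varphi)\setminus\model\varphi$ contributes nothing after restriction to $\model\nonterm+\model\Delta$. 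This is what is ultimately needed for Lemma \ref{lemma:stability}.
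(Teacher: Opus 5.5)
\paragraph{Same route as the paper, but claim~2 is left open.} Your induction follows the paper's proof step for step. You have the same base cases. You have the same split in the application case: either a factor $\model{u}_a$ with $a\notin\Com$ dies, or $(\mu,\sigma)\notin\Com$ kills the coefficient of $t$. You have the same split ($\sigma\notin\Com$ versus $|\mu|>k$) for abstraction, and the same degree count $n+km$. The step you flag as the obstacle is exactly where the paper writes ``hence the sum is a finite sum'' after bounding only $|\mu|$. You leave it open, so the proposal does not prove claim~2.

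\paragraph{Claim~2 is false as stated.} The gap cannot be closed, already for a $\lambda$-free body whose context contains an order-3 variable. Let $\tau={!_1}\1\multimap\1$, $\chi={!_1}\tau\multimap\1$ and $\rho={!_1}\chi\multimap\1$.
\begin{itemize}
\item $\Com(\tau)=\{0,1\}$ is finite.
\item $\Com(\chi)\supseteq\{([n],\star)\mid n\geq 2\}$ is infinite, because the premise $\supp\mu\subseteq\Com(\tau)$ fails for these points.
\end{itemize}
Consider the judgement $\nonterm\mid g:_1\rho,\,h:_1\chi\vdash g\,h:\1$, for instance the body of an order-4 non-terminal $L\,g\,h=g\,h$. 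It is interpreted by $\sum_\mu x_{g,(\mu,\star)}\prod_{a\in\mu}x_{h,a}$. Restricting to $\Com$ of the context keeps exactly $\mu=[\,]$ and $\mu=[a]$ with $a\in\Com(\chi)$. What remains is $x_{g,([\,],\star)}+\sum_{a\in\Com(\chi)}x_{g,([a],\star)}\,x_{h,a}$. This has infinite support, even though every degree bound holds.

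\paragraph{Your two repairs.} The same example refutes your first repair: infinitely many variables survive.

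Your fallback, polynomiality after restricting to the finite sets $\model{\nonterm}+\model{\Delta}$, is the right target, since it is what Lemma~\ref{lemma:correctness} actually uses. But it is a different statement, and it does not follow from the same case analysis. Abstraction gives no control over coefficients at $(\mu,\sigma)$ with $\supp\mu\not\subseteq\model{\psi}$. For example, the closed term $\lambda x.\,x\,e:\chi$ has coefficient $1$ at every $([n],\star)$, including those in $\Com(\chi)\setminus\model{\chi}$.

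What does work is a direct induction on $\lambda$-free applicative terms, which is all a $\Gfin$ body is, using the finite sets and no $\Com$ at all. Heads are variables or non-terminals, whose restricted interpretations vanish outside $\model{\cdot}$. Hence in $t\,u$ only the finitely many $\mu$ with $\supp\mu\subseteq\model{\psi}$ and $|\mu|\leq k$ survive.

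\paragraph{Fix the definition of $\model{\cdot}$.} Three of your steps need $\model{\varphi}\subseteq\Com(\varphi)$:
\begin{itemize}
\item the direct induction above;
\item your $ax2$ case, where re-reading $\model\nonterm$ as $\Com$ would make the set of non-terminal variables infinite;
\item your final ``restrict further from $\Com(\model\Delta)$ to $\model\Delta$''.
\end{itemize}
With the paper's multiplicity-based $!_k$ this inclusion fails: $([0,1],\star)\in\model{{!_1}\tau\multimap\1}\setminus\Com({!_1}\tau\multimap\1)$. Even the degree bound then fails after restricting to $\model\Delta$. For $g:_1{!_1}\tau\multimap\1$ and $f:_1\tau$, the term $g\,f$ contains $x_{g,([0,1],\star)}x_{f,0}x_{f,1}$. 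So $\model{{!_k}\psi\multimap\varphi}$ must be defined by $|\mu|\leq k$ instead.
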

	\begin{proof}
		By induction on the type derivation $\nonterm\mid\Delta\vdash t:\varphi$. We immediately see this is true for the axioms. We discuss here the crucial cases: application and abstraction.\\ 
		Application: suppose we know that the thesis is true for $\nonterm\mid\Delta\vdash t:!_k \psi \multimap \varphi$ and $\nonterm\mid\Delta'\vdash u: \psi$. We want to prove that it is true for $\nonterm\mid\Delta+ k\Delta'\vdash (tu): \sat{\varphi}$ Let $s_{\model{\sat{\varphi}}} \bydef \model{tu: \sat{\varphi}}^\Rsemiring$ We know that by definition:
		$$s_\sigma\big\vert_{\model{\nonterm}+ \Com(\model{\Delta}+ \model{\Delta'})} = \left( \sum_{\mu \in !{\model{T}}} \model{t}^\Rsemiring_{\mu, \sigma} (\model{u}^\Rsemiring)^\mu \right) \Big\vert_{\model{\nonterm}+ \Com(\model{\Delta}+ \model{\Delta'})} $$

	When we restrict the RHS to variables from $\model{\nonterm}+ \model{\Delta}+ \model{\Delta'}$, we obtain:
	\begin{equation}
		\label{eq:sumrestrict}
	\sum_{\mu \in !{\model{T}}} \model{t}^\Rsemiring_{\mu, \sigma} \Big\vert_{\model{\nonterm}+ \Com(\model{\Delta})} (\model{u}^\Rsemiring)^\mu \Big\vert_{\model{\nonterm}+ \Com(\model{\Delta'})}
	\end{equation}
	If $\sigma \in \model{T} \setminus \Com(\model{\sigma})$, then we have two cases:
	\begin{enumerate}
		\item If $\supp \mu \not \subseteq \Com(\psi)$, then in the product $(\model{u}^\Rsemiring)^\mu$ there is a factor $(\model{u}^\Rsemiring)_m$ with $m \not \in  \Com(\psi)$; its restriction is then $0$ by inductive hypothesis and so the whole product is $0$
		\item If $\supp \mu \subseteq \Com(\psi)$, then $(\mu, \sigma) \not \in \Com(!_k \psi \multimap \varphi)$. Hence, by inductive hypothesis $\model{t}^\Rsemiring_{\mu, \sigma}=0$ and the product is $0$
	\end{enumerate}
	So we deduce that the restriction of $s_\sigma$ vanishes.\\
	Now we prove part (2): let $\sigma \in \model{!_k\psi \multimap \varphi}$. By part (1) of the inductive hypothesis, for all $\mu$ such that  $|\mu|>k$ and $(\model{u}^\Rsemiring)^\mu \Big\vert_{\model{\nonterm}+ \Com(\model{\Delta})} \neq 0$  we have that $\model{t}^\Rsemiring_{\mu, \sigma} \Big\vert_{\model{\nonterm}+ \Com(\model{\Delta})}=0$, hence \ref{eq:sumrestrict} is a finite sum, each term being a polynomial by induction hypothesis (2); we conclude that the sum is a polynomial. Now, let $f:_l \xi \in \Delta + k\Delta'$; assume that $f:_n \xi {\Delta}$ and $f_m: \xi {\Delta'}$ ($l= n+km$) and let $\vec y$ be the set of variables $\Com(\model{{f:_l \xi}})$. We have that the total degree in $\vec y$ of each $\model{t}^\Rsemiring_{\mu, \sigma} \Big\vert_{\model{\nonterm}+ \Com(\model{\Delta})}$ that gives a contribution to the sum is (by inductive hypothesis) at most $n$ and the total degree in $\vec y$ of each   $(\model{u}^\Rsemiring)_{j}\Big\vert_{\model{\nonterm}+ \Com(\model{\Delta})}, \: j \in \model{\sat{\psi}}$  that gives a contribution to the sum is at most $m$. Since for all $\mu$ such $|\mu|>k$ we have  that $\model{t}^\Rsemiring_{\mu, \sigma} \Big\vert_{\model{\nonterm}+ \model{\Delta}}=0$, the maximal total degree of a non zero product $\model{t}^\Rsemiring_{\mu, \sigma} \Big\vert_{\model{\nonterm}+ \Com(\model{\Delta})} (\model{u}^\Rsemiring)^\mu \Big\vert_{\model{\nonterm}+ \Com(\model{\Delta'})}$ is $n+km$.\\
	Abstraction: suppose we know that the thesis is true for $\nonterm\mid\Delta, x_k: \psi \vdash t: \varphi$. We want to prove that it is true for $\nonterm\mid\Delta \vdash \lambda x. t: !_k \psi \multimap \varphi$. If we write $\model{t: \sat{\varphi}}^\Rsemiring_t$, $t \in \model{\sat{\varphi}}$ as a fps $s_t=\sum_{\kappa \in !{\sat{\varphi}}} r_{\kappa,t} (x_{{x:\sat{\varphi}}})^\kappa$ (with $r_\kappa \in \fps{\Rsemiring}{\nonterm+\Delta} $), we know by definition that 
	$$\model{ \lambda x. t:  \sat{\psi} \to \sat{\varphi} }^\Rsemiring_{\mu, t} \Big\vert_{\model{\nonterm}+ \Com(\model{\Delta})} = r_{\mu, t} \Big\vert_{\model{\nonterm}+ \Com(\model{\Delta})} $$
	Take $(\mu, t) \not \in \Com(!_k\model{\psi} \multimap \model{\varphi})$. This means that $\supp \mu \subseteq \Com(\psi)$, but either $t \not \in \Com(\phi)$ or $|\mu| > k$. In the first case, we know $s_t=0$. Since $\supp \mu \subseteq \Com(\phi)$, then $x_{x:S}\vert_{\model{\nonterm}+ \Com(\model{\Delta})^\mu \neq 0}$ (as none of its variables with positive degree is equated to $0$ by this restriction), and so we must have also $r_{\mu, t} \Big\vert_{\model{\nonterm}+ \Com(\model{\Delta})}=0$.  
	Now assume that $t \in \Com(\phi)$ and $\|\mu| > k$. By inductive hypothesis, we know that $s\vert_{\model{\nonterm}+ \Com(\model{\Delta})}$ is a polynomial of total degree at most $k$ in the variables $x_{x:S}$: hence $r_{\mu, t} \Big\vert_{\model{\nonterm}+ \model{\Delta}}=0$ for $\supp \mu \subseteq \Com(\psi)$ and $|\mu| > k$.
	 $r_{\mu, t} \Big\vert_{\model{\nonterm}+ \Com(\model{\Delta})}=0$. Point (2) is in this case obvious. 
\end{proof} 
From the previous lemma, lemma \ref{lemma:correctness} is obvious by taking $\varphi=\1$  and observing that if $s\Big\vert_{\model{\nonterm}+ \Com(\model{\Delta})}=0$, then  $s\Big\vert_{\model{\nonterm}+ \model{\Delta}}=0$ (as $\Delta \subseteq \Com(\Delta)$)
From this we obtain the following lemma about terms $\lambda x_1 \dots x_n.t: !_{k_1} \psi_1 \multimap \dots !_{k_n} \psi_n \multimap \1 $:
\begin{lemma}
	Given a term $$\nonterm\mid\emptyset \vdash : \lambda x_1 \dots x_n.t: \varphi$$ with $\varphi = !_{k_1} \psi_1 \multimap \dots !_{k_n} \psi_n \multimap \1$, let $s_{\model{\sat{\varphi}}} \bydef \model{t: \sat{\varphi}}^{\Rsemiring} $ be its $\Qrelkleisli \Rsemiring$interpretation. Then:
	\begin{enumerate}
		\item If $\sigma \not \in \model{\sat{\varphi}} \setminus \model{\varphi}$, then $s_\sigma\Big\vert_{\model{\nonterm}}=0$
		\item If  $\sigma \in \model{\sat{\varphi}}$, then $s \Big\vert_{\model{\nonterm}}$ is a polynomial. 
	\end{enumerate}
\end{lemma}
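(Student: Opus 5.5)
We obtain the statement by specializing the previous (complementarity) lemma to the closed term $\lambda x_1\dots x_n.t$. Since the context is empty, the restriction $\big\vert_{\model{\nonterm}+\Com(\model{\Delta})}$ collapses to $\big\vert_{\model{\nonterm}}$. So the only real work is to relate the sets $\Com(\varphi)$ and $\model{\varphi}$ for the specific shape $\varphi=\ !_{k_1}\psi_1\multimap\dots\multimap !_{k_n}\psi_n\multimap\1$. (We read item 1 as intended, i.e.\ for $\sigma\in\model{\sat\varphi}\setminus\model{\varphi}$, and item 2 for $\sigma\in\model{\varphi}$.)

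\emph{Step 1.} We apply the previous lemma to the derivation $\nonterm\mid\emptyset\vdash\lambda x_1\dots x_n.t:\varphi$, obtained from the derivation of $t$ by $n$ uses of rule $\lambda$. This gives two facts:
\begin{itemize}
\item for $\sigma\notin\Com(\varphi)$ we have $s_\sigma\vert_{\model\nonterm}=0$;
\item for $\sigma\in\Com(\varphi)$ the restriction $s_\sigma\vert_{\model\nonterm}$ is a polynomial.
\end{itemize}

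\emph{Step 2.} We unfold $\Com(\varphi)$. An element is a tuple $(\mu_1,\dots,\mu_n,\star)$ such that, for each $i$, if $\supp\mu_i\subseteq\Com(\psi_i)$ then $|\mu_i|\le k_i$ and the rest of the tuple lies in $\Com$ of the remaining type. Hence $\sigma\in\Com(\varphi)\setminus\model{\varphi}$ happens exactly when some $\mu_i$ has support outside $\Com(\psi_i)$, or has support in $\Com(\psi_i)$ but not in $\model{\psi_i}$. (Note that $|\mu_i|\le k_i$ already implies $\mu_i\in{!_{k_i}}$.) For such $\sigma$ we must still show $s_\sigma\vert_{\model\nonterm}=0$. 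We do this by going back to the body $t$ in the context $\Delta=x_1:_{k_1}\psi_1,\dots,x_n:_{k_n}\psi_n$. By the abstraction case of the previous proof, $s_{(\mu_1,\dots,\mu_n,\star)}$ is the coefficient of the monomial $\prod_i x_{x_i}^{\mu_i}$ in $\model{t}$.

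\begin{itemize}
\item If some $\mu_i$ uses a variable outside $\Com(\psi_i)$, we argue that this coefficient, restricted to $\model\nonterm$, vanishes. The restriction of $\model t$ to $\model\nonterm+\Com(\model\Delta)$ kills every monomial containing such a variable. Since the coefficient of a monomial does not depend on the restriction of the other variables, this gives vanishing.
\item If $\supp\mu_i\subseteq\Com(\psi_i)$ but not $\subseteq\model{\psi_i}$, we use the degree bound from part 2 of the lemma instead. I expect this subcase to be the main obstacle, since $\Com$ is strictly larger than $\model{\cdot}$ and the lemma only controls degrees. My first attempt would be to strengthen the induction so that the ``finitary'' coordinates are exactly $\model{\psi}$, i.e.\ to show that $\model{\psi}$-restricted interpretations vanish outside $\model{\psi}$ for closed arguments. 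Failing that, I would note that for the purposes of Proposition~\ref{prop:finitefamily} only the restriction to $\model{\nonterm}$ matters. The non-terminal variables range over $\model{\nonterm(L)}\subseteq\Com(\nonterm(L))$, so coordinates in $\Com\setminus\model{}$ never get multiplied into a nonzero term.
\end{itemize}

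\emph{Step 3.} Polynomiality on $\model\varphi$ follows directly from Step 1, because $\model{\varphi}\subseteq\Com(\varphi)$. Combining the steps gives both items. Stability and polynomiality of $(\model\nonterm,\model\nonterm)$ for $\model{t_\gphors}$, i.e.\ Proposition~\ref{prop:finitefamily}, then follows by applying this to each $\C D(L)$ together with $\varphi\sqsubseteq\nonterm(L)$. The inclusion $\model{\varphi}\subseteq\model{\nonterm(L)}$ is monotone in $\sqsubseteq$, which can be checked by a trivial induction.
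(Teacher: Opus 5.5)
You have located the difficulty correctly, but Step 2 does not work. In the situation of your first bullet it cannot be made to work, because item 1 is false there.

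\textbf{Where Step 2 breaks.} The inference in your first bullet is invalid. That $\model{t}\vert_{\model{\nonterm}+\Com(\model{\Delta})}$ kills every monomial containing some $x_{x_i,m}$ with $m\notin\Com(\psi_i)$ is a statement about the restricted series. It says nothing about the coefficient of such a monomial in $\model{t}$ itself, and that coefficient is what $s_\sigma$ is. Restriction and coefficient extraction commute only for monomials supported in the retained variables.

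\textbf{Counterexample.} Let $E:o\in\nonterm$, $\tau=\,!_1o\multimap o$, and consider the (even affine) term $\nonterm\mid\emptyset\vdash\lambda f.\,fE:\,!_1\tau\multimap o$. Since $\model{fE}=\sum_n x_{f,n}x_E^n$, we get $s_{([n],\star)}=x_E^n$, which restriction to $\model{\nonterm}$ leaves unchanged. Yet $([n],\star)\notin\model{!_1\tau\multimap o}$ for $n\geq 2$, because $\model{\tau}\cong\{0,1\}$. Here $([n],\star)\in\Com(\varphi)$ holds vacuously, and the $\Com$ lemma rightly claims only polynomiality at that point. The grade in $!_1\tau$ bounds how often the body uses $f$, not how often $f$ uses its own input, and the latter is what $n$ records. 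Your second bullet is not yet an argument: the strengthened induction is not carried out, and the fallback changes the target.

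\textbf{Comparison with the paper.} The paper's one-line proof applies the ground-type correctness lemma to the body, giving degree at most $k_i$ in the $x_{x_i}$ after restricting to $\model{\nonterm}+\model{\Delta}$. This proves item 1 only for $\sigma=(\mu_1,\dots,\mu_n,\star)$ with every $\supp\mu_i\subseteq\model{\psi_i}$, where the commutation above does hold. It also proves item 2, by reading off coefficients of that polynomial. Your Step 1 covers these $\sigma$ as well, since they lie outside $\Com(\varphi)$. Neither argument reaches the remaining $\sigma$, and by the example above no argument can.

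\textbf{Consequence for Step 3.} Take a non-terminal $Lf=fE\oplus_{\frac12}fE$. For $n\geq 2$, the coordinate $(L,([n],\star))\notin\model{\nonterm}$ has restriction $zx_E^n\neq 0$, so $(\model{\nonterm},\model{\nonterm})$ is not stable. If $E$ terminates with positive probability, the least fixpoint is nonzero at all these coordinates, so no finite stable set exists at all. Your fallback idea, that such coordinates only ever meet argument coordinates that vanish, may point toward a repair. But it is a claim about how the equations at $\model{\nonterm}$-coordinates interact with the fixpoint, not a proof of item 1, which does not hold as stated.
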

\begin{proof}
	By lemma \ref{lemma:correctness}, the interpretation of $\nonterm\mid x_1:_{k_1} \dots x_n:_{k_n} \vdash : t: \1$ is a polynomial of total degree at most $k_i$ in the variables $x_{x_i:_{k_i}: \psi_i}$
\end{proof}
\section{Proofs of Section 7}
\subsection{Proof of Lemma \ref{lemma:composition}}
Here, we prove that what has been defined in Definition 7.2 as the composition of two $\Ginf$ really deserves this name.
Let  $\Ginf$ $\gphors=(\nonterm_1,\C P ,\C R_1,S_1)$ be a $\Ginf$. Recall that its $\Qrelkleisli{\Rsemiring}$ interpretation $\model{\gphors}$ is then an arrow in $\Qrelkleisli{\Rsemiring}(\model{P}, \model{N})$, which is a family $s_{\model{\nonterm}} \in \fps{\Rsemiring}{\model{\C P}}$ of fps.
\begin{lemma}
	Let  $\Ginf$ $\gphors_1=(\nonterm_1,\C P\cup\{w:_\infty\varphi),\C R_1,S_1)$ and
	$\gphors_2=(\nonterm_2,\C P,\C R_2,S_2)$ be two $\Ginf$ with $\nonterm_1 \cap \nonterm_2= \emptyset$ and $\nonterm_2(L) = \phi$. Let $\gphors_1(\gphors_2)$ be their composition along $L$. Then we have for each $L_1 \in \nonterm_1 $,
	$$\model{\gphors_1(\gphors_2)}_{L_1}^\Rsemiring(y_{\model{\C P}}) = \model{\gphors_1}^\Rsemiring_{\model{L_1}}(\model{\gphors_2}^\Rsemiring_{\model{L}}(y_{\model{\C P}}),y_{\model{\C P}} )$$
	and for each $L_2 \in \nonterm_2$,
	$$ \model{\gphors_1(\gphors_2)}_{L_2}^\Rsemiring(y_{\model{\C P}}) = \model{\gphors_2}_{L_2}^\Rsemiring(y_{\model{\C P}}) $$
\end{lemma}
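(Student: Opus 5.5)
The plan is to read both sides as least fixpoints in $\Qrelkleisli{\Rsemiring}$ and apply the Bekič (pairing) property of Conway fixpoints. We work with the parametric semiring $\Qsemiring=\fps{\Rsemiring}{\model{\C P}}$, so that $\C P$ contributes only constants. The interpretation of $\gphors_1$ is a morphism $f\in\Qrelkleisli{\Qsemiring}(\model{\nonterm_1}+\model{\varphi},\model{\nonterm_1})$, obtained from $t_{\gphors_1}$ with the parameter $w$ kept free. The interpretation of $\gphors_2$ is a morphism $g\in\Qrelkleisli{\Qsemiring}(\model{\nonterm_2},\model{\nonterm_2})$. Write $\model{\gphors_i}$ for the corresponding least fixpoints, as in the definition of the PGF, seen as parametric in $\C P$.

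Since $\C D_1[L/w]\cup\C D_2$ is the union of the two systems, with $w$ replaced by the non-terminal $L$, the composite term is interpreted by a single morphism
\[
h\in\Qrelkleisli{\Qsemiring}(\model{\nonterm_1}+\model{\nonterm_2},\model{\nonterm_1}+\model{\nonterm_2}),\qquad h=\langle f\circ\langle\pi_1,\pi_L\circ\pi_2\rangle,\; g\circ\pi_2\rangle .
\]
To justify this I will use the substitution lemma of the weighted relational model (soundness of $\beta$ and the compositional definition of $\model{\cdot}$): the interpretation of $t[L/w]$ is obtained from that of $t$ by composing with the projection selecting the component $\model{L}$.

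The heart of the proof is Bekič's lemma for $\mathsf{fix}$. For a block system $h=\langle h_1,h_2\rangle$ where $h_2$ does not depend on the first block, it gives $\mathsf{fix}\,h=\langle \mathsf{fix}(h_1(-,\mathsf{fix}\,h_2)),\mathsf{fix}\,h_2\rangle$. I would prove this directly from the iterate description \eqref{eq:fixpointeq}, since the paper never states it.

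Let $r^{(n)}=(r^{(n)}_1,r^{(n)}_2)$ be the iterates of $h$. Because $g$ only reads the second block, $r_2^{(n)}$ coincides with the $n$-th iterate of $g$, so $\sup_n r_2^{(n)}=\model{\gphors_2}$; this gives the second equation. For the first block, I will bound $r_1^{(n)}$ between two sequences. From below, it dominates the $n$-th iterate of $f(-,\pi_L(r_2^{(n)}))$. From above, it is dominated by the iterates of $f(-,\pi_L\model{\gphors_2})$, using monotonicity of composition in each argument. Taking suprema and using continuity (suprema commute with composition of power series), both bounds give the same value. The lower bound needs a diagonal argument: for fixed $m\le n$, one has $r_1^{(n)}\ge$ the $m$-th iterate of $f(-,\pi_L r_2^{(n-m)})$; letting $n\to\infty$ and then $m\to\infty$ concludes.

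Hence $\model{\gphors_1\circ_L\gphors_2}$ restricted to $\nonterm_1$ equals $\mathsf{fix}\,f(-,\model{\gphors_2}_{\model L})$. By the parametric reading, this is exactly $\model{\gphors_1}$ with its $w$-variables instantiated at $\model{\gphors_2}_{\model L}(y_{\model{\C P}})$. This last step is the claim that fixpoints commute with substitution of parameters, i.e.\ the naturality of $\mathsf{fix}$ in $Y$, which again follows from continuity and induction on the iterates.

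I expect the main obstacle to be this exchange of limits in the lower bound, which ultimately rests on the continuity of composition in $\fps{\Rinf}{\Sigma}$ when infinitely many variables are substituted. I would treat it coefficientwise, since each coefficient is a monotone supremum of finite sums.
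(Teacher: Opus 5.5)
Your proof is correct, and it follows the paper's decomposition. The $\nonterm_2$-block of the composite system does not read the $\nonterm_1$-block, so its iterates coincide with those of $\gphors_2$. The $\nonterm_1$-block is then the least fixpoint of $\gphors_1$'s system with $\model{\gphors_2}_{\model L}$ in place of $w$.

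You differ from the paper in how you establish that second step.

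\textbf{The paper's route.} It argues by minimality rather than by iterates. Let $(t_1,r_2)$ be the least solution of the combined system. Then $t_1$ solves $z_1=p(z_1,r_2,y_{\C P})$, so the least solution $t^*$ of that system satisfies $t^*\le t_1$. Conversely, $(t^*,r_2)$ solves the combined system, so $t_1\le t^*$. This uses only the fact that $\mathsf{fix}$ lies below every solution. It therefore avoids the exchange of suprema that you identify as your main obstacle.

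\textbf{Your route.} Your sandwich argument on iterates is sound, just more laborious. The lower bound follows by induction on $m$, since the $r_2^{(j)}$ increase. The limit in $n$ then passes inside each finite iterate by joint Scott-continuity of $f$.

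\textbf{What your plan adds.} Your plan is more explicit at two points the paper passes over.
\begin{enumerate}
\item You spell out that the interpretation of $\C D_1[L/w]$ is precomposition with the projection onto $\model L$.
\item You state and justify the naturality of $\mathsf{fix}$ in its parameters. The paper uses this silently when it identifies the least solution of $z_{\model{\nonterm_1}}=p_{\model{\nonterm_1}}(z_{\model{\nonterm_1}},r_{\model{\nonterm_2}},y_{\C P})$ with $\model{\gphors_1}$ evaluated at $(\model{\gphors_2}_{\model L}(y_{\C P}),y_{\C P})$.
\end{enumerate}
Naturality genuinely needs the iterate-plus-continuity argument you sketch; minimality alone does not give it. The shortest fully rigorous proof combines the paper's minimality step for the block decomposition with your naturality argument for the parameter substitution.
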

\begin{proof}
	Let let respectively $s_{\model{\nonterm_1}}(y_\phi, y_{{\C P}})$, $r_{\model{\nonterm_2}}(y_{{\C P}})$ be the interpretations of $\gphors_1$ and $\gphors_2$. Then $s_{\model{\nonterm_1}}(y_\phi, y_{\nonterm{\C P}})$ is the minimal solution of the system $(S_1)$ of fixpoint equation:
	\begin{equation}
		\label{eq:fixg1}
		z_{\model{\nonterm_1}}= p_{\model{\nonterm_1}}(z_{\model{\nonterm_1}}, z_w,y_{\C P} )
	\end{equation}
	and  $r_{\model{\nonterm_2}}(y_{{\C P}})$  is the minimal solution of the system of fixpoint equations 
	\begin{equation}
		\label{eq:fixg2}
		z_{\model{\nonterm_2}}= q_{\model{\nonterm_2}}(z_{\model{\nonterm_1}}, y_{\C P} )
	\end{equation} 
	Then, $\model{\gphors_1(\gphors_2)}$ is the fixpoint solution of the system:
	\begin{equation}
		\label{eq:fixgcomp}
		\begin{cases}
			z_{\model{\nonterm_1}}= p_{\model{\nonterm_1}}(z_{\model{\nonterm_1}}, z_{\model{L: \phi}},y_{\C P} )\\
			z_{\model{\nonterm_2}}= q_{\model{\nonterm_2}}(z_{\model{\nonterm_1}}, y_{\C P} )
		\end{cases}
	\end{equation}
	Clearly, since $q_{\model{\nonterm_2}}$ do not involve any variable from other equations, the fixpoint iteration for \ref{eq:fixgcomp}, restricted to the variables $z_{\model{\nonterm_2}}$ will be equal to the fixpoint iterations for \ref{eq:fixg2}. Hence the minimal solution of \ref{eq:fixgcomp}, restricted to the variables $z_{\model{\nonterm_2}}$ will be  $r_{\model{\nonterm_2}}$. Let then $(t_{\model{\nonterm_1}},r_{\model{\nonterm_2}})$ be the full minimal solution of \ref{eq:fixgcomp}. We see that $t_{\model{\nonterm_1}}$ is then the a solution of $z_{\model{\nonterm_1}}= p_{\model{\nonterm_1}}(z_{\model{\nonterm_1}}, r_{\model{\nonterm_2}} ,y_{\C P} )$; it must be the minimal one, as if there were a solution $t'_{\model{\nonterm_1}} \leq t_{\model{\nonterm_1}}$, this would give a smaller solution of \ref{eq:fixgcomp}. This proves the lemma.
\end{proof}
\subsection{Proof of Theorem \ref{theorem:infalgebraic} }
We consider a parametric bounded PHORS $\gphors=(\nonterm, \C P,\C R, S)$ and we build from it a $\Gfin$ with parameters $R(\gphors)$ . In the case that $\gphors$ is closed, from $R(\Gfin)$ we can extract a subset of nonterminals that form a (parameter free) $\Gfin$. Up to isomorphism (from a fps-theoretic viewpoint, up to commutation of the variables),we can always work in the case then non terminals have type $ !_{\infty} \psi_1 \multimap \dots\multimap  !_{\infty} \psi_j \multimap \varphi$. 
\begin{definition}
	The set $T(L)$ of substituable terms for a non terminal $L \in \nonterm, \: \nonterm(L)= !_{\infty} \psi_1 \multimap \dots\multimap  !_{\infty} \psi_j \multimap \varphi$ is defined as follows:
	$$T(L) \bydef \{[t_1, \dots t_n] \in \nonterm^* \: \mid \: \forall i \, (t_i \in \nonterm \cup \mathcal{P}\land (\nonterm \cup \mathcal{P}) t_i = \psi_i)  \}  $$
\end{definition}
Now we define the enlarged set $\nonterm'$ of non-terminals of finite graded type, that will be used in $R(\gphors)$.
\begin{definition}[Reduced nonterminals]
	Let $\nonterm'$ be defined as:
	$$\bigcup_{L \in \nonterm}\{\tilde{L}_\gamma : \varphi \mid \nonterm(L)=!_{\infty} \psi_1 \multimap \dots\multimap  !_{\infty} \psi_j \multimap \varphi, \;  \gamma \in T(L)\} $$
\end{definition}

Now we define a translation from $\Ginf$ applicative terms of finite type (i.e terms $t$ that can be typed as $\nonterm\mid\C P\mid\Delta\vdash t: \varphi $) to $\Gfin$ terms $R(t)$ that can be typed as $\nonterm' \mid \Delta \vdash R(t): \phi$. 
First we prove the following:
\begin{lemma}
	If $\nonterm\mid\C P\mid\Delta \vdash t: \tau$ is an applicative term of type $!_{\infty} \psi_{1} \multimap \tau$, then it is of the form $L t_{1} \ldots t_{k}$, where $L \in \mathcal{N}$ and $t_{i} \in \mathcal{N} \cup \mathcal{P}$.
\end{lemma}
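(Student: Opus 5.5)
The plan is to argue by induction on the typing derivation of $\nonterm\mid\C P\mid\Delta\vdash t:\,!_\infty\psi_1\multimap\tau$, where $t$ is applicative, i.e.\ built only from variables, non-terminals, parameters and applications (no $\lambda$, no $\oplus$, no tuples or projections). More precisely, I would prove a slightly stronger statement: for every applicative $t$ whose type is infinitary with an outermost $!_\infty$ (i.e.\ of the form $!_\infty\psi\multimap\Psi$), $t$ is a head non-terminal $L$ applied to arguments $t_1,\dots,t_k$, each of which lies in $\nonterm\cup\C P$. The key observation, to be checked by inspecting Fig.~\ref{fig:typerulesinf}, is that only the rules ax$_\nonterm$ and @$_\infty$ (and @) can produce an infinitary type in an applicative term. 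The axioms ax$_\Delta$ and ax$_{\C P}$ assign finitary types $\varphi$ only. The rules $\langle\rangle$, $\pi_i$ and $\oplus$ produce ground types and are excluded by applicativity, and the $\lambda$-rules are excluded likewise.

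The induction then splits by the last rule. If it is ax$_\nonterm$, then $t=L$ with $k=0$ and we are done. It cannot be ax$_\Delta$ or ax$_{\C P}$, since those give finitary types, and $!_\infty\psi_1\multimap\tau$ is not finitary. If it is an application $t=t'u$, there are two subcases.
\begin{varitemize}
\item If the rule is @$_\infty$, then $t':\,!_\infty\varphi\multimap(!_\infty\psi_1\multimap\tau)$ and $u\in\nonterm'\cup\C P$ by the side condition of the rule. The induction hypothesis applied to $t'$ gives $t'=Lt_1\dots t_{k-1}$ with $t_i\in\nonterm\cup\C P$, so $t=Lt_1\dots t_{k-1}u$ has the required form.
\item If the rule is the finitary @, then $t':\,!_k\varphi\multimap(!_\infty\psi_1\multimap\tau)$ with $k<\infty$. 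Here I need a second invariant: every applicative term whose type contains some $!_\infty$ to the right of the arrow is itself headed by a non-terminal with only $\nonterm\cup\C P$ arguments.
\end{varitemize}

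The main obstacle is this second subcase. Once a finitely graded argument has been consumed by the finitary @ rule, the remaining infinitary suffix must still come from a non-terminal head, yet the argument $u$ here is an arbitrary finitary term, not necessarily in $\nonterm\cup\C P$. So the literal claim "$t_i\in\nonterm\cup\C P$" would fail if a non-terminal could have type $!_k\varphi\multimap !_\infty\psi\multimap\dots$ with $k$ finite. I would first try to rule this out by the grammar of infinitary types: $\Phi:=\varphi\mid !_k\varphi\multimap\Phi$ with $k\in\N\cup\{\infty\}$, so mixed orders are formally allowed.

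If that fails, I would use the normalization adopted just before this lemma, namely that "up to isomorphism" non-terminal types are taken in the form $!_\infty\psi_1\multimap\dots\multimap !_\infty\psi_j\multimap\varphi$ with all infinite grades first. Under that convention, any prefix of arguments consumed before an $!_\infty$ is still in the $!_\infty$ block, so each such application was an @$_\infty$ with argument in $\nonterm\cup\C P$. The finitary @ case then never arises while the residual type still begins with $!_\infty$. I would state the strengthened induction hypothesis in exactly this form, for all applicative terms with a residual $!_\infty$-prefixed type, and close the induction.
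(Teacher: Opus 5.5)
Your proof is correct and follows the paper's argument. An infinitary type in an applicative term can only originate from an $\mathrm{ax}_{\nonterm}$ axiom, after which only $@_\infty$ steps (with arguments in $\nonterm\cup\C P$) can consume the $!_\infty$ prefix; your induction on the derivation makes this explicit, and you are right that the finitary-$@$ case is excluded only by the convention, stated just before the lemma, that non-terminal types have the form $!_\infty\psi_1\multimap\dots\multimap!_\infty\psi_j\multimap\varphi$, which the paper's one-line proof also relies on implicitly.
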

\begin{proof}
	The only way an infinitary type can appear in $\Ginf$ applicative term is because of the axiom of a non-terminal with infinitary type. The only rule we can apply after this kind of axiom is a sequence of $@_\infty$.
\end{proof}
Then we get: 
\begin{lemma}
	For every $\Ginf$ applicative term $\nonterm\mid\C P\mid\Delta\ \vdash t: \phi$ of finite type, there exists a $\Gfin$ applicative terms $\mathcal{N} \cup \mathcal{F}\mid \mathcal{P}\mid\Delta \vdash t': \phi$ such that $t=t^{\prime}\left[f_{1} / r_{1} \ldots f_{l} / r_{l}\right]$ and each $r_{i}$ is a $\Ginf$ applicative term of the form $ L t_{1} \ldots t_{k}$, where $L \in \nonterm$ and $t_{i} \in \nonterm \cup \mathcal{P}$.
\end{lemma}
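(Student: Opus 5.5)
We prove the statement by induction on the derivation of $\nonterm\mid\C P\mid\Delta\vdash t:\phi$, with $\phi$ finitary. We strengthen the claim to all finitary subderivations of an applicative term, and we take $\mathcal F$ to be a set of fresh non-terminal-like symbols $f_i$, each with the finitary type of the subterm it abstracts. The idea is to cut the derivation at every maximal subterm whose derivation ends in a chain of $@_\infty$ rules producing a finitary type. By the previous lemma, such a subterm has the form $Lt_1\dots t_k$ with $L\in\nonterm$ and $t_i\in\nonterm\cup\C P$. We replace it by a fresh $f_i$, typed by an $ax$ rule from $\mathcal F$ as a finitary binding. After this replacement, the remaining derivation uses only the finitary rules of Fig.~\ref{fig:typerules}, which are the rules of $\Gfin$.

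The induction runs over the last rule. For $ax_\Delta$ and $ax_{\C P}$ we take $t'=t$ with $l=0$. For $ax_\nonterm$ with $\nonterm(L)$ finitary we also keep $t'=t$, since $L$ is then a $\Gfin$ non-terminal. (If $\nonterm(L)$ is infinitary, the type of $t$ is not finitary, so this case cannot be a root.) For $\langle\rangle$, $\pi_i$ and $\oplus$ we apply the induction hypothesis to each premise, using fresh symbols for distinct premises, and reassemble; the substitutions commute with these constructors. For $\lambda$, abstraction in an applicative term occurs only at the top of $t_L$, and the hypothesis passes under it unchanged. The key case is application. If the root is a finitary $@$ with $t=uv$ and $u:\ !_k\psi\multimap\Psi$, there are two subcases. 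When $\Psi$ is finitary, we apply the induction hypothesis to $u$ and to $v$ and combine with $@$; the context $\Delta+k\Delta'$ is preserved because the $f_i$ carry no graded variables. When $\Psi$ is infinitary, it cannot become finitary after a finitary application, given how the type grammar builds $\Phi$, so this subcase does not arise at a finitary root. If the root is $@_\infty$, its conclusion has finitary type and the spine of $t$ is entirely infinitary. By the previous lemma, $t=Lt_1\dots t_k$ with the required shape, so we set $t'=f_1$, $r_1=t$ and $l=1$.

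The main obstacle is making this rigorous. We must show that once a derivation goes through $@_\infty$, every head position above it is infinitary, so that any finitary-typed subterm is either handled entirely by finitary rules or is a maximal $@_\infty$ chain. This is the type-grammar argument: $!_\infty$ can appear only in $\Phi$ and never inside $\varphi$, so infinitary types are introduced only by $ax_\nonterm$ (or by $\lambda_\infty$, which does not occur inside the body of an applicative term). We will also check that $@_\infty$ has an empty effect on $\Delta$, which it does since the argument is a non-terminal or a parameter. This ensures that replacing the chain by $f_i$ keeps the graded context $\Delta$ correct.
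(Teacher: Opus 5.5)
Your proof is correct and follows the paper's argument. Both proceed by induction on the derivation, leave axioms of finitary type unchanged, replace each complete $@_\infty$ chain $Lt_1\dots t_k$ (whose shape comes from the preceding lemma) by a fresh $f_i$, and recombine with disjoint fresh symbols at finitary applications; unlike the paper's sketch, you also give $f_i$ the finitary result type and take $r_i$ to be the whole chain rather than just $L$. One caveat: your own ``type-grammar argument'' for the chain shape does not suffice when finite and infinite grades are interleaved (e.g.\ $L:\ !_1o\multimap !_\infty\tau\multimap o$ yields $LxA$ with a graded argument inside the chain), and it is the paper's convention that non-terminals have type $!_\infty\psi_1\multimap\dots\multimap !_\infty\psi_j\multimap\varphi$ that makes the preceding lemma, and hence your $@_\infty$ case, go through.
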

\begin{proof} By induction, if $\mathcal{N}\mid \mathcal{P}\mid \Delta \vdash t: \tau$ is an axiom of finite type, $t^{\prime}=t$. If $\mathcal{N} \mid \mathcal{P} \mid \Delta \vdash L t_1 \dots t_k, \mathcal{N}(L)= !_{\infty} \psi_1 \multimap \ldots !_{\infty} \psi_k \multimap \varphi$, then $t=f_{1}, \mathcal{F}=\left\{f_{1}:  !_{\infty} \psi_1 \multimap \ldots !_{\infty} \psi_k \multimap \varphi\right\}$ and $r_{1}=L$.\\
If $t=t_{1} t_{2}$ and the last step was a finitary application, take the terms $t_{1}^{\prime}$ and $t_{2}^{\prime}$ whose existence is recursively ensured; we can assume that the free variables $f_{1}, \ldots f_{n}$ and $f_{1}^{\prime}, \ldots f_{n^{\prime}}^{\prime}$ and the contexts $\mathcal{F}$ and $\mathcal{F}^{\prime}$ are disjoint upon renaming and define $t=t_{1}^{\prime} t_{2}^{\prime}$.\\
\end{proof}
Now we define a translation from applicative terms typable in $\Ginf$ to terms typable in $\Gfin$: 
\begin{definition}
 Let $\nonterm\mid \mathcal{P}\mid \Delta \vdash t: \varphi$ be $\Ginf$ applicative term of finite type. Then we define its translation $R(t)$ to be the following $\Gfin$ applicative term that can be typed as $\mathcal{N}^{\prime} \mid \Delta \cup \mathcal{P} \vdash R(t): \varphi$. Given $t$, take $t^{\prime}$ (typable as $\mathcal{N} \cup \mathcal{F}\mid \mathcal{P}\mid\Delta \vdash t': \phi$) as defined in the last lemma. For each $r_{i}=L t_{1} \ldots t_{n}$ as in the last lemma, define $R(r_{i}):=\tilde{L}_{\left[t_{1} \ldots t_{n}\right]} \in \nonterm'$. Then define $R(t)$ to be the result of replacing every $f_{i}$ in $t^{\prime}$ by $R\left(r_{i}\right)$. 
\end{definition}
Now we are ready to define the translation from a closed $\Ginf$ $\gphors$ to a $\Gfin$ $R(\gphors)$ having the same value tree:
\begin{definition}
Now define a new PHORS $\mathcal{G}'$ with nonterminals $\N'$ as follows: for each $L \in \nonterm$ with equation $L \vec f = t_L$ and for each $\tilde L_\gamma \in \N'$ take the following equation:
$$\tilde L_\gamma x_1 \dots x_j = R(t_L[\vec g / \gamma]) $$
\end{definition}
This defines a finitary PHORS with parameters (as some of the substituable term in $\gamma$ might be parameters). 
\begin{definition}
	We define by induction the relation of dependency between non terminals: we say that a nonterminal $L$ depends over a nonterminal $G$ if either $G$ occurs in the rewriting rule of $L$ or there exists a non terminal $G'$  that occurs in the definition $L$ and $G'$ depends over $G$ 
\end{definition}
Let $L_{[p_1, \dots p_i]}$ be a reduced non terminal; we will say that it is open if at least one of the $p_i$ is a paramter. We will show now that in the recursive unfolding of a closed non terminals, only closed non terminals appear:
\begin{lemma}
	If $\gamma$ contains free parameter $p_1, \dots p_j$ and the rewriting rule of $\tilde L_{\gamma}$
	is $\tilde L_\gamma x_1 \dots x_j \redbigp{p, d}R(t_d[\vec g / \gamma]) $, then $R(t_L[\vec g / \gamma])$ only contains non-terminals $G_\gamma'$ where the parameters appearing in $\gamma'$ are amongst $p_1, \dots p_j$. In particular, if $L_\gamma$ is closed, all non terminals occuring in $R(t_d[\vec g / \gamma]) $ are closed.
\end{lemma}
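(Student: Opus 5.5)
The plan is to read the rewriting rule of $\tilde L_\gamma$ directly off the construction of $R(\cdot)$. With $\gamma=[\gamma_1,\dots,\gamma_k]$, the body of the rule is $R(t_d[\vec g/\gamma])$, where $\vec g$ are the $\infty$-graded formal arguments of $L$. By the typing rule $\lambda_\infty$, these $g_i$ are bound as parameters $g_i:_\infty\psi_i$ in the derivation $\C D(L)$. That derivation has an empty parameter context apart from them, since $\gphors$ is closed, or more generally apart from the global $\C P$, which I treat as the parameters that may occur in $\gamma$. Consequently every parameter occurring in $t_d$ is one of the $g_i$. After substitution, every parameter occurring in $t_d[\vec g/\gamma]$ is some $\gamma_i$ that is itself a parameter, i.e.\ one of $p_1,\dots,p_j$.

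Next, I would use the two preceding lemmas on the shape of applicative terms. Every non-terminal of $\widetilde\nonterm$ introduced by $R$ has the form $\tilde G_\delta=R(r)$, where $r=G\,u_1\dots u_m$ is a maximal infinitary application in $t_d[\vec g/\gamma]$. By the rule $@_\infty$, each $u_i$ is either a non-terminal or a parameter occurring syntactically in $t_d[\vec g/\gamma]$. By the previous step, any $u_i$ that is a parameter lies in $\{p_1,\dots,p_j\}$. Hence $\delta=[u_1,\dots,u_m]$ only contains parameters among $p_1,\dots,p_j$. The remaining non-terminals in $R(t_d[\vec g/\gamma])$ are finitary ones left unchanged, i.e.\ $\tilde G_{[]}$, which are trivially closed. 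The \emph{in particular} part is the special case $j=0$.

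I would organise this as an induction on the decomposition $t=t'[f_1/r_1,\dots,f_l/r_l]$ from the previous lemma. The application and axiom cases are immediate. The only case with content is $r_i=G\,u_1\dots u_m$, handled as above.

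The main obstacle I expect is making precise that the original derivation $\C D(L)$ contains no parameters other than the $g_i$. The worry is that a parameter from the ambient context, such as a $w:_\infty\varphi$ left open in a non-closed $\Ginf$, could leak into $\delta$ without appearing in $\gamma$. I would handle this by restricting to closed $\gphors$, as in Theorem~\ref{theorem:infalgebraic}, where the only parameters are those of the sets $\C P_L$ introduced by the construction. I would also check that $\C P_L$ for distinct $L$ use disjoint names, so that a parameter of $G$ never coincides with one of $L$ unless it is passed explicitly through $\gamma$.
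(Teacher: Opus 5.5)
Your proposal is correct and follows essentially the same route as the paper: the only parameters occurring in $t_d$ are the $\lambda_\infty$-bound $\vec g$, so $t_d[\vec g/\gamma]$ contains only the parameters $p_1,\dots,p_j$ of $\gamma$, and every $\tilde G_\delta$ produced by $R$ takes its $\delta$ from these syntactic occurrences. The paper assumes this silently, and you are right that it requires $\gphors$ to be closed. Your appeal to $@_\infty$ (arguments are atomic) and to the decomposition $t=t'[f_1/r_1,\dots,f_l/r_l]$ just spells out the step the paper compresses into one line, and the disjointness check on the $\C P_L$ is harmless but not needed, since the substitution $[\vec g/\gamma]$ is simultaneous.
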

\begin{proof}
	The only free variables in $t_d$ are paramters $g_1, \dots g_n$ and the $x_1, \dots x_j$. Since the only free parameters in $\gamma$ are $p_1, \dots p_j$, $FV(t_d[\vec g / \gamma]) \subseteq \{x_1, \dots x_n, p_1, \dots p_j\}$: it does contains only $p_1, \dots p_j$ as parameters, hence any reduced nonterminal in the translation $R(t_d[\vec g / \gamma])$ will only have this parameters. 
\end{proof}

\subsection{Proof of Lemma \ref{lemma:fixpoint}}

\begin{proof}
Recall that, given $s\in \Qrelkleisli{\Rsemiring}(Y+X,X)$ $\fix_{Y,X} s=\sup_n s^{n}$, where $s^{0}=0$ and $s^{n+1}=s\circ\langle\pi_1, s^{n}\rangle$. We will prove, by induction on $n$, that 
$\alpha\circ \langle t^n,1_{\C P}\rangle=\tilde{t}^n$, from which the claim will follow by continuity. 
Using the two hypotheses of the Lemma we have then
\begin{align*}
\alpha\circ \langle t^0,1_{\C P}\rangle=\alpha\circ \langle 0,1_{\C P}\rangle =0=s^0
\end{align*}
and
\begin{align*}
\alpha\circ \langle t^{n+1},1_{\C P}\rangle&=
\alpha\circ \langle t\circ t^{n},1_{\C P}\rangle\\
&=
\alpha\circ \langle t\circ\pi_1,\pi_2\rangle
\circ 
\langle t^{n},1_{\C P}\rangle\\
&=
\tilde t\circ \langle\alpha\circ \langle\pi_1,\pi_2\rangle,\pi_2 \rangle\circ 
\langle t^{n},1_{\C P}\rangle\\
&=
\tilde t\circ \langle\alpha\circ \langle t^n,1_{\C P}\rangle,1_{\C P}\rangle\\
&\stackrel{\tiny\text{[I.H.]}}{=}
\tilde t\circ \langle\tilde t^n,1_{\C P}\rangle=\tilde t^{n+1}.
\end{align*}
\end{proof}

\end{document}